\newcommand{\appref}[1]{\hyperref[#1] {{Appendix~\ref*{#1}}}}
\newcommand{\be}{\begin{eqnarray} \begin{aligned}}
\newcommand{\ee}{\end{aligned} \end{eqnarray}}
\newcommand{\benn}{\begin{eqnarray*} \begin{aligned}}
\newcommand{\eenn}{\end{aligned} \end{eqnarray*}}
\newcommand*{\cA}{\mathcal{A}} 
\newcommand*{\cB}{\mathcal{B}}
\newcommand*{\cC}{\mathcal{C}}
\newcommand*{\cL}{\mathcal{L}}
\newcommand*{\cM}{\mathcal{M}}
\newcommand*{\cN}{\mathcal{N}}
\newcommand*{\cQ}{\mathcal{Q}}
\newcommand*{\cP}{\mathcal{P}}
\newcommand*{\cU}{\mathcal{U}}
\newcommand*{\supp}{\mathrm{supp}}
\newcommand{\bc}{\begin{center}}
\newcommand{\ec}{\end{center}}
\newtheorem{theorem}{Theorem}[section]
\newtheorem{lemma}[theorem]{Lemma}
\newtheorem{claim}[theorem]{Claim}
\newtheorem{corollary}[theorem]{Corollary}
\def\01{\{0,1\}}
\newcommand{\ceil}[1]{\lceil{#1}\rceil}
\newcommand*{\poly}{\mathsf{poly}}
\definecolor{lightblue}{RGB}{109, 194, 247}
\definecolor{bulklogical}{RGB}{26,11,11}
\definecolor{bulk}{RGB}{217, 212, 212}
\definecolor{lightred}{RGB}{255, 148, 148}
\definecolor{mygreen}{RGB}{0, 100, 0}
\title{How to fault-tolerantly realize\\
 any quantum circuit with local operations}
\author[1,2]{Shin Ho Choe}
\author[1,2]{Robert König}
\affil[1]{School of Computation, Information and
Technology, Technical University of Munich}
\affil[2]{Munich Center for Quantum Science and Technology (MCQST), Munich, Germany}
\date{\today}
\begin{document}
\maketitle

\begin{abstract}
We show how to realize a general quantum circuit involving gates between arbitrary pairs of qubits   by means of geometrically local quantum  operations and efficient classical computation. 
We prove that circuit-level local stochastic noise modeling an imperfect implementation of our derived schemes  is equivalent to local stochastic noise in the original circuit. Our constructions
incur a constant-factor increase in the quantum circuit depth and a polynomial overhead in the number of qubits: To execute  an arbitrary quantum circuit  on $n$~qubits, we give a 3D quantum fault-tolerance architecture involving $O(n^{3/2} \log^3 n)$ qubits, and a quasi-2D architecture  using~$O(n^2 \log^3 n)$ qubits. Applied to recent fault-tolerance constructions, this gives a fault-tolerance threshold theorem for universal quantum computations with local operations, a polynomial qubit overhead and a quasi-polylogarithmic depth overhead. More generally, our transformation dispenses with the need for considering the locality of operations when designing schemes for fault-tolerant quantum information processing.
\end{abstract}

\section{Introduction}
Throughout quantum complexity theory, quantum algorithms design and quantum fault-tolerance, the notion of a quantum operator or operation acting only on a small (i.e., constant) number of qubits is ubiquitous. For example, the complexity class QMA~\cite{knillQuantumRandomnessNondeterminism1996, kitaevClassicalQuantumComputation2002, marriottQuantumArthurMerlinGames2004}, the natural analog of NP, has been shown to have a natural complete problem, the $k$-local Hamiltonian problem~\cite{kitaevClassicalQuantumComputation2002}. Here $k$-local refers to the fact that the considered Hamiltonians involve terms acting on arbitrary subsets of a constant number $k$~of qubits.  This notion of locality has occasionally been referred to as Kitaev-local to distinguish it from more commonly used geometric locality notions arising, e.g., in physics. In quantum compiling, $2$-qubit entangling unitaries such as the CNOT gate play a special role: together with general single-qubit unitaries, such a gate provides computational universality when it can be applied to arbitrary pairs of qubits. In fault-tolerance theory, special attention is paid to low-weight errors as formalized by the notion of code distance, and quantum low-density parity check (LDPC) codes are of particular interest because their syndrome measurements involve only a constant number of qubits (see Ref.~\cite{breuckmannQuantumLowDensityParityCheck2021} for a recent review).

While Kitaev-local quantum operation provide a versatile abstraction, considerations relating to real-world experiments are typically encumbered by more stringent locality constraints arising from technological limitations. Corresponding geometric notion of locality typically involve an interaction graph formalizing what interactions or operations are available. With an often considerable amount of effort, such geometric locality notions can sometimes be taken into account successfully: This has led, e.g., to proofs of QMA-completeness of the local Hamiltonian problem on certain lattices of qubits~\cite{AharonovGottesmanIraniKempe09}. Similarly, circuit compilation taking into account the interaction graph of an underlying device is an increasingly relevant topic as larger devices are becoming available (see e.g., Ref.~\cite{bapat03}). In the area of fault-tolerance, the fundamental question of whether noisy operations permit universal fault-tolerant quantum computation was addressed and answered positively early on,  see~\cite{aharonovFaulttolerantQuantumComputation1997, Gottesman_2000, svoreLocalftqc, svoreNoisethreshold2007}. In a similar vein, the search for and use of codes with geometrically local stabilizers has been major theme for the last few decades. 

While the general program of rendering Kitaev-local constructions geometrically local has been successful in many instances, there are cases where it meets fundamental obstructions. For example, this is captured by trade-off-relations between the number of physical and logical qubits, the code distance and the energy barrier (see e.g.,~\cite{BravyiTerhal2009,BravyiPoulinTerhal10,Haah21}), or results on the (in)existence of protected logical gates in geometrically local stabilizer codes (see e.g.,~\cite{brakoetopologicalprot,PastawskiYoshida15,Kubica_2015}). Given these limitations, one may ask to what extent it is reasonable to afford oneself the luxury of thinking only about Kitaev-locality when dealing with the real world.

\subsection*{Our contribution}
Here we argue that -- in a very general sense -- it is in fact possible to meaningfully separate quantum computational questions from geometric locality considerations. 
Let us consider a general adaptive quantum circuit~$\cQ$. Here adaptivity means that every basic operation of the circuit~$\cQ$ such as a single- or two-qubit quantum gate can be determined by a (classically) efficiently computable function from some bits of measurement outcomes associated with previously applied measurements. (We refer to Section~\ref{sec:localization} for a precise definition.)
To this end, we propose a transformation which takes as input (a description of) a general quantum circuit~$\cQ$ involving non-local gates, and produces (a description of) a recompiled ``localized'' quantum circuit~$\cQ'$ with the following properties:
\begin{enumerate}[(i)]
\item\label{it:geometricallylocalproperty}
  The circuit~$\cQ'$ is  composed of geometrically local gates: Every two-qubit operation acts on a pair of qubits whose distance is upper bounded by a constant, and all qubits are arranged on a regular rectangular lattice in 3D.
We give two constructions: one using a slab-like architecture of linear dimensions $L\times L\times \Theta(L)$, and one using a quasi-2D architecture of dimensions $L\times L\times O(\log L)$, i.e., a thickened square shape.
 
\item\label{it:overheadpolylogarithmic}
The circuit~$\cQ'$ uses $O(n^\alpha \log^3 n)$ qubits, where $\alpha=3/2$ or $\alpha= 2$ for the 3D- respectively quasi-2D case. Here $n$ is the number of qubits involved in the original circuit~$\cQ$.

\item\label{it:circuitefficiency}
  The (quantum) circuit depth of~$\cQ'$ is identical to that of~$\cQ$. Compared to~$\cQ$, the circuit~$\cQ'$ additionally involves efficient, i.e., polynomial-time classical processing.

\item\label{it:functionalityimplementation}
The circuit~$\cQ'$ realizes the same functionality as~$\cQ$. In particular, by taking a suitable marginal (i.e., ignoring certain output bits), the  output distribution of~$\cQ'$ simulates that of~$\cQ$ exactly.
   \end{enumerate}
   We note that Properties~\eqref{it:geometricallylocalproperty}--\eqref{it:functionalityimplementation} are relatively straightforward to achieve using entanglement swapping and suitable routing protocols as discussed below
    (see Section~\ref{sec:localization}). The key new property we establish for~$\cQ'$ is the following, see below for a formal definition of local stochastic noise and noisy implementations:

   \begin{theorem}\label{thm:noisycircuiterrorpropagation}
   A noisy implementation of~$\cQ'$ with local stochastic noise of strength~$p$ is equivalent to a noisy implementation of~$\cQ$ with local stochastic noise of
   strength~$Cp^c$ for some constants~$C,c>0$. That is, the corresponding output distribution is identical to that obtained by running~$\cQ$ with noise (of a related strength).
   \end{theorem}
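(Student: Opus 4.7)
The plan is to partition $\cQ'$ into disjoint blocks, one per operation of $\cQ$, and translate the local stochastic guarantee through this decomposition. Concretely, I would introduce a partition $\{B_i\}_{i=1}^N$ of the gates and measurements of $\cQ'$ indexed by the basic operations $O_i$ of $\cQ$: each $B_i$ collects the geometrically local operations that prepare the required routing primitives (Bell pairs or cluster states), teleport the qubits involved in $O_i$ to a common region, realize $O_i$ locally, and teleport the outputs back. The blocks are disjoint by construction.

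Having set this up, I would establish an event-level reduction. For every realized fault configuration $F$ in $\cQ'$, define $S(F)\subseteq\{1,\ldots,N\}$ to be the set of indices $i$ for which the data-qubit output of $B_i$ differs from the action of $O_i$. The key observation is that whenever $i\notin S(F)$, block $B_i$ faithfully implements $O_i$, so the noisy output distribution of $\cQ'$ agrees with that of running $\cQ$ under a (classically controlled, but otherwise arbitrary) fault process whose support is exactly $S(F)$. It therefore suffices to show that $S(F)$ is itself a local stochastic fault set of strength $Cp^c$, i.e.\ that
\[
  \Pr[T\subseteq S(F)] \;\le\; (Cp^c)^{|T|}
\]
for every $T\subseteq\{1,\ldots,N\}$. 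Disjointness of the $B_i$, combined with the multiplicative structure of local stochastic noise, reduces this to a per-block bound $\Pr[i\in S(F)]\le Cp^c$ that is uniform in $i$.

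The main obstacle is this per-block bound. A naive union bound over fault locations in $B_i$ only yields $|B_i|\cdot p$, which is useless once $|B_i|$ grows with $n$, as is the case for routing ``chains'' whose length is governed by the geometric distance between the qubits involved in $O_i$. The crux is therefore to leverage the fault-tolerance intrinsic to the routing primitive: a single fault in the bulk of a cluster-state or Bell-pair chain is detected by the classical post-processing of the teleportation syndromes and can be corrected there, so an uncorrectable defect at the data-qubit output of $B_i$ requires a coordinated pattern of faults of some minimum weight $w$ in a Manhattan-like geometry. Counting such patterns by weight and invoking the fact that local stochastic noise assigns probability at most $p^w$ to any fixed $w$-fault pattern should yield $\Pr[i\in S(F)]\le Cp^c$ with $C,c>0$ independent of $n$. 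This quantitative fault-propagation analysis, tailored to the specific routing primitive adopted in Section~\ref{sec:localization}, is the part I expect to be the hard step; once it is in place, assembling it into the joint bound above and concluding Theorem~\ref{thm:noisycircuiterrorpropagation} is routine.
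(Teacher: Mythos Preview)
Your overall architecture is close to the paper's, but the sentence ``Disjointness of the $B_i$, combined with the multiplicative structure of local stochastic noise, reduces this to a per-block bound $\Pr[i\in S(F)]\le Cp^c$'' is a genuine gap. Local stochastic noise constrains only marginals; the errors across disjoint blocks need not be independent, so per-block failure bounds do \emph{not} combine into $\Pr[T\subseteq S(F)]\le\prod_{i\in T}\Pr[i\in S(F)]$. What the paper does (formalized as $(p_0,f)$-robustness together with the parallel repetition Theorem~\ref{thm:parallelrepetition}) is to use the witness structure directly on the \emph{joint} event: failure of block~$i$ implies some witness set $D^{(i)}$, supported in block~$i$'s qubits, satisfies $D^{(i)}\subseteq\supp(E)$; by disjointness $\bigl|\bigcup_{i\in T}D^{(i)}\bigr|=\sum_{i\in T}|D^{(i)}|$, and the local stochastic definition applied to this union gives $p^{\sum|D^{(i)}|}$. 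Only \emph{after} this does one take a union bound over witness choices, which then factors as $\prod_{i\in T}\bigl(\sum_D p^{|D|}\bigr)\le f(p)^{|T|}$. You already have the witnesses (your ``coordinated pattern of faults''), but you invoke them only to bound a single block; the argument has to be reorganized so that the witness decomposition precedes any probability estimate.

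A second point concerns the ``hard step'' itself. Your sketch does not explain why the witness-counting sum $\sum_D p^{|D|}$ stays bounded as $n\to\infty$. A bare Bell-pair chain has minimum witness weight~$1$; a constant-width cluster-state bus of length~$R$ has $\Theta(R)$ minimum-weight failure patterns, so the sum diverges. The paper's fix is structural: each routing path is broken into at most four straight segments, and each segment is thickened into a $\Delta\times\Delta\times R$ cluster-state bus with $\Delta=\Theta(\log L)$ (Corollary~\ref{cor:quantumbus}). The logarithmic code distance makes the minimum witness weight $\Theta(\log L)$, which exactly kills the linear-in-$R$ entropy and yields $f(p)\le p/p_0$ uniformly. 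Segments are then glued by a \emph{constant} number of entanglement swaps, whose effect on local stochastic noise is controlled by Lemma~\ref{lem:noisyentanglementswapping}. Without committing to this thickening (and to the constant-per-path number of swaps), the uniform-in-$n$ bound you need cannot hold.
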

  
   Taken together,     Properties~\eqref{it:geometricallylocalproperty}--\eqref{it:functionalityimplementation} and Theorem~\ref{thm:noisycircuiterrorpropagation} imply  that noise-tolerance considerations of a quantum computational scheme may be studied entirely in the context of general circuits, i.e., quantum circuits involving Kitaev-local (but not necessarily geometrically local) gates. This leads to a considerable simplification in settings where the underlying structure is sparse but does not necessarily obey operationally meaningful geometric locality constraints.

As an application, we consider the problem of approximately sampling from the output distribution of an ideal circuit by means of a circuit made of noisy components.
Here we show that this can be achieved by a 3D-local 
 adaptive circuit~$\cQ_{\mathrm{ideal}}$ with a polynomial qubit-overhead and quasi-polylogarithmic time-overhead. That is, we have the following:
\begin{corollary}[3D-local fault-tolerant quantum computation]\label{cor:faulttolerancelocal}
There is a threshold~$p_0>0$ on error strength such that the following holds for all sufficiently large~$n$ and an arbitrary constant~$\varepsilon \in (0,1)$. 
Let~$\cQ_{\mathrm{ideal}}$ be an adaptive quantum circuit on $n$ qubits of quantum depth~$T(n)=O(\poly(n))$.
There is a circuit~$\cQ'$ with the following properties:
\begin{enumerate}[(i)]
    \item 
    The circuit~$\cQ'$ uses $O(n^{3/2}\log^3 n)$ qubits and is local when these are arranged on a 3D grid graph.
    \item The quantum depth of~$\cQ'$ is of order $T(n) \cdot \exp(O(\log^2(\log(n/\varepsilon))))$.
    \item 
    Any noisy implementation of~$\cQ'$ with local stochastic noise of strength~$p<p_0$ produces a sample from a distribution whose total variation distance to the output distribution of~$\cQ_{\mathrm{ideal}}$ is upper bounded by~$\varepsilon$. 
\end{enumerate}
\end{corollary}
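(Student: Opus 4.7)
The corollary is obtained by a two-layer composition: an inner Kitaev-local fault-tolerance scheme followed by the outer 3D-localization of the main result. The entire value of Theorem~\ref{thm:noisycircuiterrorpropagation} in this context is that it reduces the design of a 3D-local fault-tolerant circuit to the strictly easier problem of designing a general fault-tolerant circuit under local stochastic noise, with no geometric constraint at all.

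\emph{Step 1 (inner scheme).} I compile $\cQ_{\mathrm{ideal}}$ into a Kitaev-local adaptive circuit $\cQ_{\mathrm{FT}}$ by invoking a recent universal fault-tolerance construction based on good quantum LDPC codes. For a suitable choice of parameters this produces a circuit on $N = n \cdot \polylog(n T(n)/\varepsilon)$ qubits, of quantum depth $T(n) \cdot \exp(O(\log^2 \log(n T(n)/\varepsilon)))$, that tolerates circuit-level local stochastic noise up to some fixed threshold $p_1 > 0$: for every $p < p_1$ the output distribution of a noisy implementation of $\cQ_{\mathrm{FT}}$ is within total variation distance $\varepsilon$ of that of $\cQ_{\mathrm{ideal}}$.

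\emph{Step 2 (outer geometrization).} I feed $\cQ_{\mathrm{FT}}$ into the transformation underlying the main result, yielding a 3D-local adaptive circuit $\cQ'$ on $O(N^{3/2} \log^3 N)$ qubits whose quantum depth equals that of $\cQ_{\mathrm{FT}}$ (Property~(iii)) and whose ideal functionality matches that of $\cQ_{\mathrm{FT}}$ (Property~(iv)). Since $T(n) = \poly(n)$ and $\varepsilon$ is an arbitrary constant, $N = O(n \, \polylog n)$, so the qubit count collapses to $O(n^{3/2} \log^3 n)$ after absorbing the remaining polylog factors.

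\emph{Step 3 (noise transport and threshold).} By Theorem~\ref{thm:noisycircuiterrorpropagation}, a noisy execution of $\cQ'$ under local stochastic noise of strength $p$ has exactly the same output distribution as a noisy execution of $\cQ_{\mathrm{FT}}$ under local stochastic noise of strength $Cp^c$. Choosing $p_0$ small enough that $Cp_0^c < p_1$, Step~1 then guarantees that for every $p < p_0$ the final sample is within total variation distance $\varepsilon$ of the ideal output of $\cQ_{\mathrm{ideal}}$.

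\emph{Main obstacle.} The only nontrivial step is checking the compatibility of noise models: the noise induced on $\cQ_{\mathrm{FT}}$ by Theorem~\ref{thm:noisycircuiterrorpropagation} must satisfy precisely the hypothesis under which the inner LDPC-based scheme is proven fault-tolerant. Since both sides are formulated in terms of circuit-level local stochastic noise, this compatibility is essentially built in, but it is where all the bookkeeping relating error strengths, threshold constants, polylog factors, and depth overheads must be reconciled.
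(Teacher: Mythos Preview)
Your two-layer strategy (inner fault-tolerance scheme, then outer 3D-localization via Theorem~\ref{thm:noisycircuiterrorpropagation}) is exactly the paper's approach, and Step~3 is handled the same way. The gap is in Step~1: you invoke an LDPC-based construction, but the paper explicitly uses the Yamasaki--Koashi concatenated-code scheme, and this choice is not cosmetic.

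Two concrete problems arise from your choice. First, the depth bound in item~(ii), namely $T(n)\cdot\exp\bigl(O(\log^2(\log(n/\varepsilon)))\bigr)$, is precisely the Yamasaki--Koashi overhead; known LDPC-based fault-tolerance schemes (Gottesman; Fawzi--Grospellier--Leverrier) incur a \emph{polynomial} depth overhead from sequential two-qubit gates, as the paper itself remarks. So attributing that depth expression to an LDPC scheme is unjustified, and with an actual LDPC inner scheme item~(ii) would fail. Second, your qubit arithmetic does not close: if $N = n\cdot\polylog(n)$ then $N^{3/2}\log^3 N = n^{3/2}\cdot(\polylog n)^{3/2}\cdot\log^3 n$, which is \emph{not} $O(n^{3/2}\log^3 n)$ --- the extra polylog factors cannot be ``absorbed'' into $\log^3 n$. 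The paper avoids this because Yamasaki--Koashi gives genuinely constant qubit overhead, $N = O(n)$, so the localization overhead $O(N^{3/2}\log^3 N)$ yields $O(n^{3/2}\log^3 n)$ on the nose. Replace the inner scheme with Yamasaki--Koashi and both issues disappear; the rest of your argument then matches the paper.
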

\noindent  
We discuss the derivation of this result in Section~\ref{sec:ftquantumcomputation}. Thanks to our general transformation, it is an immediate consequence of the recent work by Yamasaki and Koashi~\cite{yamasakiTimeEfficientConstantSpaceOverheadFaultTolerant2024}. These authors demonstrate how to achieve time-efficient constant-qubit overhead fault-tolerant quantum computation by using a concatenation of multiple small-size quantum codes (albeit with non-local gates). We note that  -- alternatively -- our transformation can also be applied directly to other schemes for fault-tolerant quantum computation such as those relying on constant-rate quantum LDPC codes~\cite{gottesmanConstantOverhead,fawziConstantOverheadQuantum2018}. However, this leads to a polynomial quantum circuit depth overhead resulting from the sequential application of two-qubit gates in these schemes. It is currently not known whether this depth overhead can be reduced.

What sets our construction apart from earlier work on fault-tolerant quantum computation with local operations~\cite{aharonovFaulttolerantQuantumComputation1997, Gottesman_2000, svoreLocalftqc, svoreNoisethreshold2007} is  its limited quantum circuit-depth overhead: Earlier works rely on permuting qubits to achieve locality and consequently incur a polynomial circuit-depth overhead compared to the original ideal (non-local) circuit. In contrast, applying our construction to a (non-local) fault-tolerant quantum circuit incurs only constant circuit-depth-overhead. This results in the quasi-polylogarithmic quantum circuit scaling when applied to the protocol of~\cite{yamasakiTimeEfficientConstantSpaceOverheadFaultTolerant2024}. 

Let us point out an important difference 
between our result and the  construction of (non-local) fault-tolerant quantum circuits in~\cite{yamasakiTimeEfficientConstantSpaceOverheadFaultTolerant2024}. In the latter work, idling (wait)  operations in a quantum circuit required to wait for the results of classical computation (e.g., for decoding or gate teleportation) are explicitly taken into account (i.e., the corresponding notion of circuit depth includes such operations). In contrast, our construction
requires additional classical computations for every timestep of the original quantum circuit. Correspondingly, our notion of ``quantum depth'' only involves the number of timesteps, but not the additional classical (efficient) computation. In our construction, these computations involve  finding a minimal matching in each of  $O(n)$ slab-like graphs of linear dimension~$O(n) \times O(\log n) \times O(\log n)$, which is achievable in polynomial time by Edmond's algorithm~\cite{Edmonds1965}.
We leave as a future work to check whether the construction of Ref.~\cite{yamasakiTimeEfficientConstantSpaceOverheadFaultTolerant2024} is fault-tolerant even when accounting for associated waiting times.

\subsection*{Related work}
Most existing constructive work for fault-tolerance with geometrically local operations considers the design of quantum memories. This can be seen as a special case of fault-tolerant computation where the identity operation is approximated using local (noisy) operation.

Given an encoded state in a quantum error-correcting code, one can maintain the logical information for a fixed duration by repeatedly measuring syndromes and applying a suitable recovery map (or ``decoding'') involving a classical computation on the syndrome bits. Such protocols only involve local (quantum) processing if a quantum error-correcting code whose stabilizers are geometrically local in Euclidean space is used. Let us mention three of the most important approaches towards storing multiple logical qubits in a quantum memory.

A first approach is to use multiple quantum error-correcting codes in parallel, with each code encoding a fixed number of logical qubits. 
For example, multiple tiles of 2D surface codes can be used for this purpose~\cite{denniskitaevlandalpreskill}. One drawback of using 2D surface codes for this purpose is that their decoding is not single-shot: In the case where measurements are imperfect,  the decoding needs a sequence of  (noisy) syndromes  obtained from repeated measurements. In three spatial dimensions, single-shot decoding is possible with the  3D~subsystem toric codes~\cite{kubicaSingleshotQuantumError2022}.

A second approach is to use quantum LDPC codes with geometrically local stabilizer generators. It is known from seminal work by Bravyi and Terhal~\cite{BravyiTerhal2009}, Bravyi, Poulin and Terhal~\cite{BravyiPoulinTerhal10}, as well as Haah~\cite{Haah21} that there are fundamental limitations to quantum codes with geometrically local stabilizers. These rule out the existence of good (i.e., constant-rate) quantum stabilizer codes with geometrically local generators in low spatial dimensions.
In more detail, an $[[n,k,d]]$-code with a constant density of qudits in~$\mathbb{R}^D$ and local stabilizer generators necessarily obeys the bounds
  \begin{align}
 d &\in O(n^{\frac{D-1}{D}})\qquad \ \  \textrm{see Ref.~\cite{BravyiTerhal2009}}\label{eq:btbound} \\
 kd^{\frac{2}{D-1}}&\in O(n)\qquad\qquad \textrm{see Ref.~\cite{BravyiPoulinTerhal10}}\label{eq:bptbound}\\
 k & \in O(n^{\frac{D-2}{D}})\qquad\textrm{ see Ref.~\cite{Haah21}}\ .\label{eq:haahbound}
 \end{align}
 It is natural to try to construct quantum codes coming close to or saturating the trade-off bounds~\eqref{eq:btbound}--\eqref{eq:haahbound}. This program has been quite successful recently, resulting in several constructions derived from quantum LDPC codes: Portnoy~\cite{portnoy2023local} has constructed codes with optimal scaling of the number of logical qubits and the code distance, up to polylogarithmic factors. Subsequent work by Lin et al.~\cite{lin2023geometrically} gave constructions additionally saturating the upper bound of~\cite{BravyiTerhal2009} on the energy barrier. Concurrent work by Baspin and Williamson~\cite{baspinwilliamson23} provides a particularly elegant construction of optimal codes in $D=3$~dimensions: These are $[[n,\Theta(n^{1/3}), \Theta(n^{2/3})]]$-codes on a rectangular lattice of linear extent~$n^{1/3}$. Note, however, that the question of how to construct and analyze protocols for fault-tolerant computation using such codes remains largely open. In particular, no analytic fault-tolerance threshold results have been established to date.

A third approach is to go beyond
codes with geometrically local stabilizer generators, considering either codes with high-weight stabilizers or LDPC codes whose stabilizer generators are not geometrically local when embedded in~$\mathbb{R}^D$. In either case, this necessitates an analysis of the propagation of errors in local (typically larger-depth) circuits designed to efficiently realize non-local ones. 
 This is the approach pursued in pioneering works on fault-tolerance with local operations~\cite{aharonovFaulttolerantQuantumComputation1997, Gottesman_2000, svoreLocalftqc, svoreNoisethreshold2007}. Following this idea, Pattison et al.~\cite{pattisonHierarchicalMemoriesSimulating2023} introduce so-called hierarchical code encoding~$\Theta(n)$-logical qubits using~$\Theta(n \log^2 n)$ qubits (with $O(\log n)$-weight stabilizer generators).
The authors showed that measuring all syndromes is possible with a local quantum circuit with depth~$O(\sqrt{n} \log n)$ where all qubits are aligned in a bilayer structure.
Based on this code and its syndrome measurement circuit, they construct a 2D-local quantum memory for~$n$ logical qubits using only~$\Theta(n \log^2 n)$ qubits.

Our construction also essentially follows this last approach, but encompasses both the construction of local memories as well as fault-tolerant computation. As described above (see Theorem~\ref{thm:noisycircuiterrorpropagation}), our recompilation procedure renders any quantum circuit~$\cQ$ geometrically local while preserving its quantum depth (up to a constant factor). When applied to a syndrome extraction or recovery circuit associated with an LDPC code, this realizes low-weight (but geometrically non-local) measurements in constant depth, and yields a local quantum memory. When applied to (non-local) fault-tolerance constructions based on 
concatenated codes (with high-weight stabilizers), our construction provides a threshold theorem for fault-tolerant computation with local operations, which is depth-efficient (see Corollary~\ref{cor:faulttolerancelocal}).

\subsubsection*{Converse bounds and optimality}
For local quantum memories following the third approach, there are known trade-off relations between the qubit overhead and the circuit depth. Delfosse, Beverland and Tremblay~\cite{delfosseBoundsStabilizerMeasurement2021} established a trade-off relation between the depth of a syndrome extraction circuit and the qubit overhead when using  constant-rate quantum LDPC codes whose Tanner graphs are expanders.
Their bound applies to geometrically local syndrome extraction circuits that are Clifford circuits augmented by Paulis controlled by parities of measurement outcomes. (This allows, in particular, for e.g., teleportation within the circuit.) In more detail, 
 consider
a constant-rate~($R$) LDPC code encoding
$n$ logical qubits into $n/R$~qubits. Then any such Clifford circuit for syndrome extraction, which uses $n^{\mathrm{tot}}\geq n/R$~qubits and is geometrically local in~$\mathbb{R}^D$, must have quantum circuit depth~$T$ obeying
\begin{align}\label{eq:dbtbound}
    T  = \Omega\left(n/(n^{\mathrm{tot}})^{1-1/D}\right) \ 
  \qquad \text{see \cite[Corollary 2]{delfosseBoundsStabilizerMeasurement2021}.}
\end{align}
One of our intermediate constructions saturates this bound when applied to an ideal (non-local) syndrome extraction circuit~$\cQ$ associated with a constant-rate LDPC codes. (This includes, in particular, expander codes as assumed in the bound~\eqref{eq:dbtbound}.) Because of the LDPC property, we can assume that~$\cQ$ is a constant-depth circuit on $\Theta(n)$~qubits -- here we take into account one auxiliary qubit  for measuring each constant-weight stabilizer generator.  
To arrive at the fault-tolerant circuit~$\cQ'$ discussed in Theorem~\ref{thm:noisycircuiterrorpropagation}, our construction involves the construction of an intermediate circuit~$\cQ''$ for syndrome extraction with the following properties: The circuit~$\cQ''$ is $3D$-local, uses~$n^{\mathrm{tot}}=O(n^{3/2})$ qubits and has depth~$T=O(1)$ (see Lemma~\ref{lem:localizingcircuit} below). With these parameters, the circuit~$\cQ''$ saturates the bound~\eqref{eq:dbtbound}. 
We note, however, that this circuit~$\cQ''$ is not fault-tolerant. 
(Our final construction~$\cQ'$ has the fault-tolerance property, and its parameters saturate the bound~\eqref{eq:dbtbound} up to polylogarithmic factors, but the bound does not apply in this case because our construction uses controls that are not simply parities of measurement results.)

Our construction focuses on $T=O(1)$, i.e., constant circuit depth. In contrast, the hierarchical code~\cite{pattisonHierarchicalMemoriesSimulating2023} constitutes a construction in a different regime. It includes, in particular, instances where 
\begin{align}
    T=\Theta(\sqrt{n}) \quad \textrm{and} \quad n^{\mathrm{tot}} = \Theta(n)
    \qquad \textrm{or} \qquad     
    T=\Theta(\sqrt{n}\log n) \quad \textrm{and} \quad n^{\mathrm{tot}} = \Theta(n\log^2 n) \ ,
\end{align}
respectively (depending on the size of the surface codes used in the code concatenation step). The former case saturates the bound~\eqref{eq:dbtbound}, but does not provide fault-tolerance. (Similar to our construction of~$\cQ'$, the latter case provide a fault-tolerant construction whose  parameters saturate~\eqref{eq:dbtbound}. However, the bound does not apply in this case because the code is not an LDPC code as required by the bound.)

We emphasize that minimizing syndrome extraction circuit depth is not the only relevant aspect when considering quantum fault-tolerance: It is equally important to have limited error-propagation. 
Here the bound of  Baspin, Fawzi and Shayeghi~\cite{baspinLowerBoundOverhead2023} provides some insight as it takes into account the logical error probability: They establish a trade-off relation between qubit overhead and the circuit depth of a recovery map for a local quantum memories built on general quantum error-correcting codes.
To discuss their bound, let us specialize it to the case where $n$~logical qubits are considered and
an exponentially small logical failure probability of at most~$\exp(-c\cdot n)$ is assumed. According to their bound, if there is a recovery map achieving this, which is geometrically local in~$\mathbb{R}^D$,  uses~$n^{\mathrm{tot}}$ qubits and has quantum  circuit depth~$T$, then
\begin{align}\label{eq:bfsbound}
    T= \Omega\left(\frac{n^{1+1/D}}{n^{\mathrm{tot}}}\right) \  \qquad \text{see \cite[Theorem 28]{baspinLowerBoundOverhead2023}} . 
\end{align}
We note that our construction provides a local quantum memory as follows: We start with an adaptive circuit~$\cQ$ which consists in syndrome measurement followed by single-shot decoding (Pauli correction) for a good quantum LDPC code (encoding $n$~qubits). Then the resulting fault-tolerant localized circuit~$\cQ'$ is $3D$-local, achieves exponentially small error probability,  and has quantum depth and total number of qubits
\begin{align}
T=O(1)\ \textrm{ and } \ n^{\mathrm{tot}}=\Theta(n^{3/2}\log^3 n)\ .\label{eq:tonethetan}
\end{align}
Unfortunately, the parameters~\eqref{eq:tonethetan} do not saturate the bound~\eqref{eq:bfsbound} with~$D=3$. 
The fact that these parameters saturate the syndrome extraction circuit depth bound~\eqref{eq:dbtbound} (up to polylogarithmic factors) suggests that this may fundamentally be impossible using constructions that have~$T=O(1)$: Indeed, should an analog of the bound~\eqref{eq:dbtbound} apply without the restriction to parity-controlled Pauli operations, then no LDPC-code based local quantum memory can saturate both bounds. In other words, our construction may be optimal among constant-depth LDPC-code based constructions in terms of qubit overhead.

Let us now consider the hierarchical code. When the paper~\cite{pattisonHierarchicalMemoriesSimulating2023} was posted to arXiv, the single-shot decodability of good quantum LDPC codes had not been established. It is, however, straightforward to apply the subsequent result~\cite{guSingleshotDecodingGood2023} to the hierarchical code construction. This results in a 2D-local quantum memory with exponentially small error probability where the circuit depth and total number of qubits is
\begin{align}
T=\Theta(\sqrt{n}\log n)\ \textrm{ and }\ n^{\mathrm{tot}}=\Theta(n\log^2 n)\ .\label{eq:hierarchicalcodeparameters}
\end{align}
(With the original construction discussed in the paper~\cite{pattisonHierarchicalMemoriesSimulating2023}, the circuit depth was~$T(n)=\Theta(n^{3/2}\log n)$ because 
it involved $d=\Theta(n)$~rounds of (repeated) measurements before decoding the outer LDPC code, following Gottesman's construction~\cite{gottesmanConstantOverhead}). 
The parameters~\eqref{eq:hierarchicalcodeparameters} saturate the bound~\eqref{eq:bfsbound}.

In summary, our construction and the hierarchical code target different regimes, for recovery maps as summarized by the following table:
\begin{table}[!h]
    \centering
    \begin{tabular}{l||l|l|l}
    construction & qubit overhead & quantum depth  & locality\\
    \hline
    \hline
    hierarchical code~\cite{pattisonHierarchicalMemoriesSimulating2023} & polylogarithmic & polynomial & 2D\\
    \hline
    our construction & polynomial & constant & 3D (or quasi-$2D$)
    \end{tabular}
    \caption{Summary on hierarchical codes and quantum memories from our construction.}
    \label{tab:localmemorysummary}
\end{table}

\subsection*{Outline}
We proceed as follows: In Section~\ref{sec:parallelrouting}, we discuss parallel routing of qubits on general graphs and give  algorithms for grid graphs. We explain how to use these to render a quantum circuit local, albeit without fault-tolerance.
In Section~\ref{sec:parallelrepetitiontheorem}, we consider the parallel use of a ``quantum bus'' to concurrently generate several Bell pairs. The main result established there is a parallel repetition theorem that shows robustness against local stochastic noise even in cases where individual buses are affected by correlated errors. In Section~\ref{sec:faulttolerantlylocalizing}, we give our circuit transformation which ``localizes'' a quantum circuit while maintaining fault-tolerance properties. The construction combines our efficient parallel routing schemes with our parallel repetition result for quantum buses. Finally, we describe the protocol of 3D-local fault-tolerant quantum computation with polynomial qubit-overhead and quasi-polylogarithmic time-overhead.

\section{Parallel routing of qubits in 2D and 3D lattices\label{sec:parallelrouting}}
Our construction relies on new schemes for parallel routing of qubits on vertices of a graph. We note that 
the study of routing on an undirected connected graph~$G=(V,E)$ has a long history in classical computer science. 
A prominent example is the routing via matching model introduced by Alon et al.~\cite{AlonChungGraham94}: Here
labeled pebbles are placed on the vertices initially, and are subsequently moved along edges by means of a sequence of steps consisting of (possibly parallel) swaps. The goal is to implement a target permutation of the pebbles using a minimal number of steps; the corresponding number is referred to as the routing time (for a given graph and a permutation). Corresponding routing schemes  find direct application to 
the compilation of circuits which use $\mathsf{SWAP}$-gates to realize long-range gates, see e.g.,~\cite{bapat03}.

\subsection{Parallel routing using entanglement swapping circuits\label{sec:entanglementswappingcircuit}}
Our construction does not use unitary $\mathsf{SWAP}$-gates, but instead relies on entanglement swapping realized by Bell measurements on Bell pairs.  In contrast to  the unitary case, such measurements can create long-range entanglement (e.g., along a path of Bell pairs of length~$n$) in constant time, up to a Pauli correction that can be computed efficiently (in time $O(\log n)$) by taking parities, i.e., classical computation. In particular, this means that with regard to the quantum circuit depth (and assuming  efficient classical computation to be free), the usual figures of merit such as the routing time in the routing via matching problem  are not directly relevant in our scenario. 

The following graph-theoretic notions are motivated by our objective: We are interested in realizing a set of two-qubit interactions (gates) in each time step. Rather than realizing a permutation, this means that we seek to connect pairs of qubits according to the pairwise disjoint supports of a collection of two-qubit operations (e.g., two-qubit gates). The length of any ``path of travel'' is irrelevant (each will be at most of polynomial length) as it does not translate to quantum circuit depth. We only need the paths to be pairwise disjoint. This is because we will assume that there is a Bell pair on each edge  used for entanglement swapping.  Most  importantly, our figure of merit is the number of qubits for which  this kind of routing is possible in a given graph. This leads to the following definitions.

Let $G=(V,E)$ be a connected undirected graph. A subset~$S=\left\{v_1,\ldots,v_{|S|}\right\}\subseteq V$ of vertices is called \emph{parallel-routable} if  its size~$|S|=:2k$ is even and the following holds: For any pairing~$\{(v_{i_r},v_{j_r})\}_{r=1}^k$ of the elements of~$S$, there is a family~$\{\pi_r\}_{r=1}^k$ of $k$~pairwise edge-disjoint paths such that for any $r\in \{1,\ldots,k\}=:[k]$, the path $\pi_r$ 
connects $v_{i_r}$ and $v_{j_r}$. Such a family~$\{\pi_r\}_{r=1}^k$
will be referred to as a parallel routing scheme associated with the pairing~$\{(v_{i_r},v_{j_r})\}_{r=1}^k$. The \emph{parallel routing number} $k(G)=|S|/2$ is defined as  half the maximal size of a subset~$S\subseteq V$ of vertices which is parallel-routable.

For later reference, we make the following obvious observation obtained by restricting a routing scheme:
\begin{lemma}\label{lem:restrictingsubsetroutability}
Let $S\subseteq V$ be a parallel-routable subset of vertices. Let $S'\subset S$ be a subset of even size. Then~$S'$ is parallel-routable.
\end{lemma}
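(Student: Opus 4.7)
The plan is to reduce an arbitrary pairing of $S'$ to a pairing of the larger set $S$, apply the hypothesis, and then discard the extra paths. Concretely, write $|S|=2k$ and $|S'|=2k'$, and fix a pairing $\mathcal{P}'=\{(v_{i_r},v_{j_r})\}_{r=1}^{k'}$ of $S'$ for which I want to produce a parallel routing scheme.

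First I would observe that $|S\setminus S'|=2k-2k'$ is even, so I can choose any pairing $\mathcal{P}''$ of the vertices in $S\setminus S'$ into $k-k'$ pairs. The union $\mathcal{P}=\mathcal{P}'\cup\mathcal{P}''$ is then a pairing of all the elements of $S$ into $k$ pairs. Since $S$ is parallel-routable by hypothesis, there exists a family $\{\pi_r\}_{r=1}^{k}$ of pairwise edge-disjoint paths in $G$ realizing $\mathcal{P}$.

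Next I would extract the subfamily $\{\pi_r\}_{r=1}^{k'}$ of paths corresponding to the pairs coming from $\mathcal{P}'$. These remain pairwise edge-disjoint (being a sub-collection of an edge-disjoint family), and by construction $\pi_r$ connects $v_{i_r}$ to $v_{j_r}$ for each $r\in[k']$. This is exactly a parallel routing scheme associated with the pairing $\mathcal{P}'$ of $S'$. Since $\mathcal{P}'$ was an arbitrary pairing of $S'$, this shows $S'$ is parallel-routable.

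There is no real obstacle here; the only point that deserves explicit mention is the parity observation that $|S\setminus S'|$ is even, which is what makes the completion step $\mathcal{P}''$ possible and is the reason the hypothesis $|S'|$ even is needed in the statement.
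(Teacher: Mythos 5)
Your proof is correct and is precisely the ``restricting a routing scheme'' argument the paper alludes to without spelling out; the parity observation about $|S\setminus S'|$ being even is the one detail worth recording explicitly, and you do so.
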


Before studying these combinatorial notions in detail for grid graphs (see Section~\ref{sec:parallelroutingingrid}), let us state how they give rise to routing protocols of qubits on a general graph~$G=(V,E)$. 
We consider two different routing-tasks: Pairwise entanglement generation as well as qubit pairing.

Let $S=\{v_1,\ldots,v_{2k}\}\subset V$ be a parallel-routable set of vertices.
We first discuss pairwise entanglement generation. Here we consider a system with $2k+2|E|$~qubits, namely
\begin{enumerate}[(i)]
\item
A qubit~$P_j$ located at the vertex~$v_{j}$ for each $j\in [2k]$, and 
\item
for each edge~$e=\{u,v\}\in E$, qubits~$R^e_u$, $R^e_v$ located at $u$ and $v$, respectively. 
\end{enumerate} 
Given a pairing~$\{(v_{i_r},v_{j_r})\}_{r\in [k]}$ of the vertices~$S$, the goal of pairwise entanglement generation is 
to generate Bell pairs between each pair $P_{i_r}P_{j_r}$ of qubits, for $r\in [k]$. This is achieved by 
Algorithm~\ref{alg:pairingupqubits}, see
Fig.~\ref{fig:entswappaths} for an illustration.  The following lemma is an immediate consequence of how entanglement swapping works:

\begin{figure}
    \centering
    \includegraphics[width=0.4\textwidth]{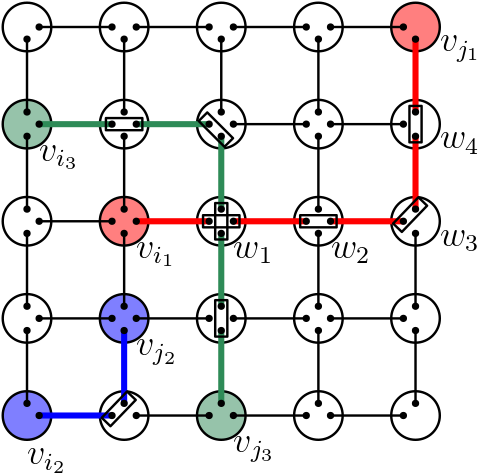}
\caption{Three pairs of parallel-routable vertices~$\{(v_{i_r}, v_{j_r})\}_{r=1}^3$ and their connecting paths. Qubits~$R^e_u$, $R^e_v$ associated with an edge~$e=\{u,v\}$  are represented as black dots. (Qubits $P_j$  located at $v_j$, $j\in [n]$ are not shown.)  Bell measurements in Steps~\ref{it:alongpathone}--\ref{it:italongpathm} are represented by  rectangles. \label{fig:entswappaths}}
\end{figure}
\begin{lemma}
Given a pairing~$(\{(v_{i_r},v_{j_r})\})_{r=1}^{k}$ of the vertices~$S$, the quantum algorithm~$\cQ_{\textrm{entangle}}$ (see Algorithm~\ref{alg:pairwiseentanglement}) 
creates the state
\begin{align}
    \bigotimes_{r=1}^k \Phi_{R_{i_r}R_{j_r}}\ .\label{eq:bellpairspairingent}
\end{align}
It can be realized by a constant-depth adaptive circuit with local operations on the graph~$G$.
\end{lemma}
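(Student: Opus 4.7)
The plan is to verify the identity by tracking the standard entanglement-swapping operation along each path and then arguing that the parallelism across paths is enabled precisely by the edge-disjointness guaranteed by the parallel-routability assumption.

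First I would specify the algorithm~$\cQ_{\mathrm{entangle}}$ and initial state: in a constant-depth local step, on every edge $e=\{u,v\}\in E$ prepare the qubits~$R^e_u,R^e_v$ in the Bell state~$\Phi^+$ (this requires only a Hadamard and a nearest-neighbor~$\mathsf{CX}$, hence is local and depth~$O(1)$). The qubits~$P_j$ remain untouched at this stage but are implicitly placed in the data register that will receive the swapped entanglement. Next, fix the pairing~$\{(v_{i_r},v_{j_r})\}_{r=1}^k$ and invoke the parallel-routability of~$S$ to obtain a family~$\{\pi_r\}_{r=1}^k$ of pairwise edge-disjoint paths, with $\pi_r=(u_0^{(r)},u_1^{(r)},\ldots,u_{m_r}^{(r)})$ where $u_0^{(r)}=v_{i_r}$ and $u_{m_r}^{(r)}=v_{j_r}$.

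The core step is to teleport entanglement along each path~$\pi_r$. At each interior vertex~$u_l^{(r)}$, $0<l<m_r$, the vertex holds two qubits of the chain, namely $R^{e_l}_{u_l^{(r)}}$ and $R^{e_{l+1}}_{u_l^{(r)}}$ where $e_l,e_{l+1}$ are the two edges of~$\pi_r$ incident to~$u_l^{(r)}$. Perform a Bell measurement on this pair. At the endpoint~$u_0^{(r)}=v_{i_r}$, perform a Bell measurement on $P_{i_r}$ and $R^{e_1}_{v_{i_r}}$; do not measure at $u_{m_r}^{(r)}$. Since the paths~$\pi_r$ are pairwise edge-disjoint, every qubit~$R^e_u$ participates in the Bell measurement of \emph{at most one} path, so all these measurements act on pairwise disjoint supports and can therefore be performed simultaneously in a single depth-$O(1)$ round of local operations on~$G$. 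Standard entanglement-swapping identities then imply that for each~$r$ the post-measurement state on~$P_{j_r}$ together with the unmeasured qubit at the far endpoint is the Bell state~$\Phi$ up to a Pauli~$X^{a_r}Z^{b_r}$, where $a_r,b_r$ are parities of the measurement outcomes along~$\pi_r$. A final classically controlled Pauli on the appropriate endpoint qubit restores~$\Phi_{P_{i_r}P_{j_r}}$ (equivalently, the desired state~\eqref{eq:bellpairspairingent} under the paper's labeling convention); these corrections are single-qubit and pairwise disjoint, so they again cost~$O(1)$ quantum depth, with the parities being computable by an efficient classical side-computation.

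Collecting the three rounds (edge-wise Bell pair preparation, simultaneous Bell measurements, classically-controlled single-qubit Paulis), the overall circuit has constant quantum depth and uses only operations acting on qubits that sit on the same vertex or on adjacent vertices of~$G$, so it is local with respect to~$G$ and adaptive only through the Pauli corrections. The only subtle point is verifying that the product of the $k$~independent entanglement-swapping events yields the claimed tensor product~\eqref{eq:bellpairspairingent}: this follows because the measurement outcomes on disjoint supports factorize and the Pauli corrections on the $k$~endpoint qubits commute, so the joint post-correction state is exactly the tensor product of the $k$~individual Bell pairs. The main (minor) obstacle is bookkeeping of the Pauli corrections when the endpoint convention puts the survivor qubit on~$P$ rather than on~$R$; this is handled by shifting the ``unmeasured'' endpoint of each chain to be the data qubit~$P_{j_r}$ and absorbing the asymmetry into the direction in which the correction is applied.
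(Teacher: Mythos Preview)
Your core argument---entanglement swapping along each path, with parallelism guaranteed by edge-disjointness, giving constant quantum depth---is correct and is essentially the paper's approach. However, your handling of the endpoint qubits is muddled and, as written, does not produce the claimed state.

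You perform a Bell measurement on $P_{i_r}$ and $R^{e_1}_{v_{i_r}}$, but $P_{i_r}$ has not been entangled with anything: you explicitly say the $P_j$ ``remain untouched'' during preparation. Bell-measuring an unentangled qubit against one half of a Bell chain simply teleports its (product) state down the chain; it does not create a Bell pair. You then assert that the surviving entanglement lives on $P_{j_r}$ and the far-endpoint $R$ qubit, but $P_{j_r}$ is never acted on by any gate in your description and hence remains in tensor product with everything else. Finally you claim the corrected state is $\Phi_{P_{i_r}P_{j_r}}$, yet $P_{i_r}$ has already been consumed by your Bell measurement. So the protocol as you wrote it cannot output the state~\eqref{eq:bellpairspairingent}.

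The paper's Algorithm~\ref{alg:pairwiseentanglement} avoids this by performing Bell measurements \emph{only at interior vertices} of each path, on pairs of $R$-qubits. This yields (after the Pauli correction) a Bell state on the two endpoint $R$-qubits $R^{e_{\mathrm{out}}(v_{i_r})}_{v_{i_r}}$ and $R^{e_{\mathrm{in}}(v_{j_r})}_{v_{j_r}}$; a subsequent layer of local $\mathsf{SWAP}$ gates moves these halves onto $P_{i_r}$ and $P_{j_r}$. Replacing your endpoint Bell measurement by this ``swap-at-the-ends'' step repairs your argument with no change to the depth or locality analysis.
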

\begin{proof}
It is clear (from the properties of the basic entanglement swapping circuit) that~$\cQ_{\textrm{entangle}}$ generates the state~\eqref{eq:bellpairspairingent}. 
To see that it can be realized by a constant-depth adaptive circuit, observe that Steps~\ref{it:alongpathone}--\ref{it:italongpathm} in the algorithm can be executed in parallel for different~$r\in [k]$, because the parallel routing scheme~$\{\pi_r\}_{r=1}^k$ consists of pairwise edge-disjoint paths. Similarly, 
Steps~\ref{it:seconditerationx}--\ref{it:seconditerationy} and the steps inside iteration~\ref{it:alongpathxx} can be applied simultaneously for different~$r\in [k]$.
Because the relevant parities in Step~\ref{it:italongpathm} can be computed by 
by classical circuit of depth~$O(\log\ell)$ and $\ell\leq |E|$ is polynomial in~$n$, it follows that $\cQ_{\textrm{pair}}$ can be realized by a constant-depth quantum circuit supplemented with efficient (logarithmic-time) classical computation. 

It is straightforward to verify that all two-qubit gates in this circuit are between neighboring qubits on the graph~$G$, i.e., the circuit is local.
\end{proof}

\begin{algorithm} 
\caption{Quantum algorithm~$\cQ_{\textrm{entangle}}$ for pairwise entanglement generation}\label{alg:pairwiseentanglement}
\begin{flushleft}
  \textbf{Input:} A pairing~$(\{(v_{i_r},v_{j_r})\})_{r=1}^{k}$ of the vertices~$S=\{v_1,\ldots,v_{2k}\}$. \\
    \textbf{Output:}   The  Bell state~$\ket{\Phi}=\frac{1}{\sqrt{2}}(\ket{00}+\ket{11})$ on each pair $R_{i_r}R_{j_r}$ of qubits, $r\in [k]$.\\
  \begin{algorithmic}[1]
  \State Compute a parallel routing scheme~$\{\pi_r\}_{r=1}^k$ associated with the paring~$\{(v_{i_r},v_{j_r})\}_{r=1}^k$. \label{it:stepfirstalgorithmpair}
  \For{$e=\{u,v\} \in E$}
  \State Prepare the state~$\ket{\Phi}=\frac{1}{\sqrt{2}}(\ket{00}+\ket{11})$ on the qubits $R^e_uR^e_v$.
  \EndFor
  \For{$r\in [k]$}\Comment Let $\pi_r=(v_{i_r},w_1,\ldots,w_\ell,v_{j_r})$ be the sequence of vertices traversed along~$\pi_r$.  For a vertex~$v\in \{v_{i_r},w_1,\ldots,w_\ell,v_{j_r}\}$, let  $e_{\textrm{in}}(v)$ and $e_{\textrm{out}}(v)$ denote the in- and outgoing edges at $v$ when traversing~$\pi_r$. 
  \For{$m\in [\ell]$}\label{it:alongpathone}
\State Apply a Bell measurement  to the qubits $R^{e_{\textrm{in}}(w_m)}_{w_m}R^{e_{\textrm{out}}(w_m)}_{w_m}$, getting outcomes~$(a_m,b_m)$.
    \EndFor
   \State     Apply~$X^aZ^b$ to qubit~$R^{e_{\textrm{in}}(v_{j_r})}_{v_{j_r}}$ where $a=\bigoplus_{m=1}^\ell a_m$ and  $b=\bigoplus_{m=1}^\ell b_m$. \label{it:italongpathm}
  \EndFor \label{it:allpathprepared}
    \For{$r\in [k]$}\label{it:alongpathxx}
      \State Apply the $\mathrm{SWAP}$ gate to qubits  $P_{i_r}$ and $R^{e_{\textrm{out}}(v_{i_r})}_{v_{i_r}}$.
      \State Apply the $\mathrm{SWAP}$ gate to qubits  $P_{j_r}$ and $R^{e_{\textrm{in}}(v_{j_r})}_{v_{j_r}}$.
  \EndFor\label{it:alongpathyyentanglement}
    \end{algorithmic}
\end{flushleft}
\end{algorithm}
\noindent

Now consider the problem of pairing up qubits: Here we consider a system of qubits that has -- in addition to the $n+2|E|$ qubits $P_1,\ldots,P_n$ and $\{R^e_u,R^e_v\}_{e=\{u,v\}\in E}$, $n=2k$~``data'' qubits ~$Q_1,\ldots,Q_{2k}$, where qubit $Q_j$ is located at the  vertex~$v_{j}$ for $j\in [2k]$.
Given a pairing~$\{(v_{i_r},v_{j_r})\}_{r\in [k]}$ of the vertices~$S$, the problem of qubit pairing consists in bringing the qubits~$Q_{i_r}$ and $Q_{j_r}$ to the same location, for every~$r\in [k]$.   Algorithm~\ref{alg:pairingupqubits} achieves this by transferring each qubit~$Q_{j_r}$ to $P_{i_r}$ for $r\in [k]$, that is, the qubits~$Q_{i_r}$, $Q_{j_r}$ of the input state are located at~$Q_{i_r}$ and $P_{i_r}$ after application of the algorithm, i.e., at vertex~$v_{i_r}$. The following lemma is an immediate consequence 
of how the standard teleportation protocol works.
\begin{lemma}\label{lem:qubitroutingsimple}
Given a pairing~$\{(v_{i_r},v_{j_r})\}_{r=1}^{k}$ of the vertices~$S$, the quantum algorithm~$\cQ_{\textrm{pair}}$ (see Algorithm~\ref{alg:pairingupqubits}) implements a transfer of subsystems that maps each subsystem~$Q_{j_r}$ to the subsystem $P_{i_r}$ for $r\in [k]$, i.e., a $2k$-qubit state~$\Psi$ on the registers~$Q_1,\ldots,Q_{2k}$ is mapped according to
\begin{align}
\Psi_{Q_{i_1}Q_{j_1}\cdots Q_{i_k}Q_{j_k}} & \mapsto \Psi_{Q_{i_1}P_{i_1}\cdots Q_{i_k}P_{i_k}}\ .\label{eq:actionofcircuitqpair}
\end{align}
The algorithm~$\cQ_{\textrm{pair}}$ can be realized by an adaptive constant-depth quantum circuit with local operations on the graph~$G$ supplemented by efficient classical computation. 
\end{lemma}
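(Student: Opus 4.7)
The plan is to construct the algorithm~$\cQ_{\textrm{pair}}$ by composing the pairwise entanglement generation procedure~$\cQ_{\textrm{entangle}}$ with standard quantum teleportation executed in parallel at each vertex in~$S$. Concretely, I would first invoke~$\cQ_{\textrm{entangle}}$ on the pairing~$\{(v_{i_r},v_{j_r})\}_{r=1}^{k}$ to obtain a tensor product of Bell pairs, one pair~$\Phi_{P_{i_r}P_{j_r}}$ per index~$r\in [k]$, living on the dedicated qubits~$P_{i_r}$ and $P_{j_r}$ at vertices $v_{i_r}$ and $v_{j_r}$ respectively. Then, for each~$r\in [k]$, at the vertex~$v_{j_r}$ I would perform a Bell measurement on the pair of qubits~$Q_{j_r}P_{j_r}$ with outcomes~$(\alpha_r,\beta_r)\in \{0,1\}^2$, and subsequently apply the Pauli correction~$X^{\alpha_r}Z^{\beta_r}$ to~$P_{i_r}$ at the vertex~$v_{i_r}$.

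To verify correctness I would invoke the standard teleportation identity: conditioned on any outcome~$(\alpha_r,\beta_r)$, the post-measurement state on~$P_{i_r}$ is~$X^{\alpha_r}Z^{\beta_r}$ applied to the pre-measurement state on~$Q_{j_r}$, so the correction restores the content of~$Q_{j_r}$ on~$P_{i_r}$. Since all teleportations are performed on disjoint qubit subsystems, they act independently in parallel on a general input~$\Psi$, yielding the map~\eqref{eq:actionofcircuitqpair} as claimed. The register~$Q_{j_r}$ itself is left in the (discarded) post-measurement Bell-basis state, while~$Q_{i_r}$ is untouched.

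For the circuit-theoretic properties, I would argue as follows. By the preceding lemma, $\cQ_{\textrm{entangle}}$ is an adaptive constant-depth circuit that is local on~$G$ and uses only efficient classical processing (parities along paths). The subsequent Bell measurements on the pairs~$Q_{j_r}P_{j_r}$ all act on pairs of qubits co-located at the single vertex~$v_{j_r}$, hence they are local and can be executed simultaneously for different~$r$ in constant depth. The Pauli corrections are single-qubit gates on distinct registers~$P_{i_r}$ and can likewise be applied in parallel in depth one; the classical control bits~$(\alpha_r,\beta_r)$ are produced at~$v_{j_r}$ and routed to~$v_{i_r}$ as part of the efficient classical side computation already permitted by the model. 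Summing the contributions, the overall quantum depth remains constant and the whole circuit is local on~$G$.

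I do not anticipate a genuine obstacle: the construction is essentially teleportation on top of Algorithm~\ref{alg:pairwiseentanglement}, and the only point requiring minor care is the bookkeeping of classical bits, i.e., making sure that the XOR of the path-measurement outcomes in~$\cQ_{\textrm{entangle}}$ and the teleportation outcomes~$(\alpha_r,\beta_r)$ are combined into a single Pauli correction applied at~$v_{i_r}$. This can be handled by performing the teleportation Bell measurement at~$v_{j_r}$ \emph{before} the final~$\mathrm{SWAP}$ step~\ref{it:alongpathyyentanglement} of Algorithm~\ref{alg:pairwiseentanglement} on the~$P_{j_r}$ side (which is no longer needed on that side), and then applying a single combined correction on~$P_{i_r}$, preserving constant depth and locality.
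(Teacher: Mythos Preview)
Your proposal is correct and matches the paper's approach essentially verbatim: Algorithm~\ref{alg:pairingupqubits} is precisely $\cQ_{\textrm{entangle}}$ followed by parallel Bell measurements on $Q_{j_r}P_{j_r}$ and Pauli corrections on $P_{i_r}$, and the paper's proof likewise reduces correctness to standard teleportation and argues locality and constant depth exactly as you do. The optimization you sketch in the last paragraph (folding the teleportation measurement into $\cQ_{\textrm{entangle}}$ before the final $\mathrm{SWAP}$) is not in the paper, but it is harmless and does not change the argument.
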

\begin{algorithm} 
\caption{A quantum algorithm~$\cQ_{\textrm{pair}}$ for pairing  qubits}\label{alg:pairingupqubits}
\begin{flushleft}
  \textbf{Input:} A pairing~$\{(v_{i_r},v_{j_r})\}_{r=1}^{k}$ of the vertices~$S=\{v_1,\ldots,v_{2k}\}$. \\
  \ \qquad \qquad A $2k$-qubit state~$\Psi_{Q_1\cdots Q_{2k}}$ on the qubits $Q_1,\ldots,Q_{2k}$.\\
    \textbf{Output:}   The permuted state~$\Psi_{Q_{i_1}P_{i_1}\cdots Q_{i_k}P_{i_k}}$\\
  \   \qquad\qquad   where $Q_{i_r}P_{i_r}$ contains the subsystem~$Q_{i_r}Q_{j_r}$ of the original state~$\Psi$ for $r\in [k]$.
\begin{algorithmic}[1]
\State{Run~$\cQ_{\textrm{entangle}}$  with input $\{(v_{i_r},v_{j_r})\}_{r=1}^{k}$ 
to generate $k$ Bell pairs, i.e., the state~$\bigotimes_{r=1}^k \Phi_{P_{i_r}P_{j_r}}$.}\label{it:statentanglementgeneration}
  \For{$r\in [k]$}\label{it:alongpathx}
      \State Apply a Bell measurement  to the qubits  $Q_{j_r}$ and     
      $P_{j_r}$ obtaining outcome~$(a_r,b_r)$.\label{it:seconditerationx}
      \State Apply $X^{a_r}Z^{b_r}$ to $P_{i_r}$. \label{it:seconditerationy}
  \EndFor\label{it:alongpathy}
    \label{it:alongpathyy}
  \end{algorithmic}
\end{flushleft}
\end{algorithm}

\begin{proof}
After the pairwise entanglement generation step~\eqref{it:statentanglementgeneration},
we have $k$ (possibly ``long-range'') Bell states
\begin{align}
\bigotimes_{r=1}^k \Phi_{P_{i_r}P_{j_r}}\ . \label{eq:kfoldtensorproduct}
\end{align}
Steps~\ref{it:alongpathx}--\ref{it:alongpathy} perform teleportation, transferring system~$Q_{j_r}$ to~$P_{i_r}$  for each~$r\in [k]$. It follows immediately that the algorithm~$\cQ_{\textrm{pair}}$ 
has the action given by Eq.~\eqref{eq:actionofcircuitqpair}.

It remains to argue that the algorithm~$\cQ_{\textrm{pair}}$ can be realized by a constant-depth adaptive quantum circuit which is local on the graph~$G$. It is clear by inspection that all operations involved in the algorithm are local, i.e., they act on one or two qubits that are either located at the same vertex~$v\in V$, or at two vertices~$u,v\in V$ connected by an edge~$\{u,v\}\in E$. It is easy to check that~$\cQ_{\textrm{pair}}$ can be realized by an adaptive constant-depth circuit.
\end{proof}
Slightly abusing notation, we denote by~$\cQ_{\textrm{pair}}^{-1}$ the algorithm which, for a given pairing
\begin{align}
    \{(v_{i_r},v_{j_r})\}_{r=1}^k \ , 
\end{align}
implements the inverse of the map~\eqref{eq:actionofcircuitqpair} (i.e., maps $P_{i_r}$ to $Q_{j_r}$ for each~$r\in [k]$). It is clear that~$\cQ_{\textrm{pair}}^{-1}$ can also be realized by an adaptive constant-depth quantum circuit in an analogous fashion
by proceeding creating long-range entanglement, and then teleporting respectively swapping the corresponding subsystems.

In Section~\ref{sec:localization} we explain how to use the routing protocols~$\cQ_{\textrm{pair}}$ and $\cQ_{\textrm{pair}}^{-1}$ in 3D grid graphs in order to realize arbitrary quantum circuits using local operations. We subsequently refine the construction to incorporate fault-tolerance considerations.

\subsection{Parallel routing on 2D and 3D grid graphs\label{sec:parallelroutingingrid}}
In this section, we  study routing schemes in certain graphs: We are interested in architectures associated with (``cuboid'') grid graphs in 2D and 3D. To introduce these, let us first introduce the Cartesian product~$G\times G'$ of two graphs
$G=(V,E)$, $G'=(V',E')$
as follows:  This is the 
graph~$G\times G'$  with  vertex set
\begin{align}
V\times V'=\left\{(v,v')\ |\ v\in V, v'\in V'\right\}\ ,
\end{align}
where $\{(u,u'),(v,v')\}$ is an edge of~$G\times G'$ if and only if
either 
\begin{align}
\left(u=v\quad\textrm{ and }\quad (u',v')\in E'\right)\qquad\textrm{ or }\qquad 
\left(u'=v'\quad\textrm{ and }\quad (u,v)\in E\right)\ .
\end{align}
Let $P_n$ denote the path on $n$~vertices, that is, the graph with vertex set~$[n]$ and edges of the form $\{j,j+1\}$ where $j\in [n-1]$.
 We call~$P_{n_1}\times P_{n_2}$ 
the $n_1\times n_2$ grid graph
and $P_{n_1}\times P_{n_2}\times P_{n_3}$ the $n_1\times n_2\times n_3$ grid graph, where $n_1,n_2,n_3\in\mathbb{N}$. 
These are naturally embedded in $\mathbb{R}^2$ and $\mathbb{R}^3$, respectively, and  the distance~$d(u,v)$ between two verties $u,v$ in these graphs (i.e., the minimal length of a path connecting $u$ and $v$) is simply the $L^1$- or ``Manhattan''-distance.

Consider the $L\times L$ grid graph~$P_L\times P_L$. The following lemma gives a sufficient condition for the existence of a parallel routing scheme associated with a pairing of a subset of vertices of this graph.
\begin{lemma}[Parallel routing in~$P_L\times P_L$]\label{lem:parallelrouting2D}
Let $S=\{v_i=(x_i,y_i)\}_{i=1}^{2k}$ be a subset of vertices of~$P_L\times P_L$ of even size  and let~$\{(v_{i_r},v_{j_r})\}_{r=1}^k$ be an associated pairing.
For brevity, let us write
\begin{align}
\begin{matrix}
(X_r,Y_r) &:=&(x_{i_r},y_{i_r})=v_{i_r}\\
(X'_r,Y'_r)&:=&(x_{j_r},y_{j_r})=v_{j_r}
\end{matrix}\qquad\textrm{ for } r\in [k]\ 
\end{align}
for the coordinates of paired vertices.
Suppose that 
\begin{align}
\begin{matrix}
\{X_{p},X'_{p}\}&\cap&
\{X_{q},X'_{q}\}&=&\emptyset&\qquad\\
\{Y_{p},Y'_{p}\}&\cap&
\{Y_{q},Y'_{q}\}&=&\emptyset&
\end{matrix}
\qquad\textrm{ for all } p,q\in [k]\textrm{ with } p\neq q\ .\label{eq:assumptionnonintersect}
\end{align} 
Then there exists a parallel routing scheme~$\{\pi_r\}_{r=1}^k$ associated with the pairing~$\{(v_{i_r},v_{j_r})\}_{r=1}^k$.
Furthermore, the length of the path~$\pi_r$ is equal to $d(v_{i_r},v_{j_r}) \leq 2L$ for each $r\in [k]$.
\end{lemma}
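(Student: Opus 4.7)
The plan is to construct parallel routing paths explicitly as axis-aligned L-shapes and to verify pairwise edge-disjointness directly from the coordinate-disjointness hypothesis. For each $r \in [k]$, I take $\pi_r$ to be the concatenation of the horizontal segment along row $Y_r$ from $(X_r, Y_r)$ to the corner $(X'_r, Y_r)$, followed by the vertical segment along column $X'_r$ from $(X'_r, Y_r)$ to $(X'_r, Y'_r)$. (If $X_r = X'_r$ the horizontal part is empty; if $Y_r = Y'_r$ the vertical part is empty.) By construction $\pi_r$ is a simple path in $P_L \times P_L$ of length $|X_r - X'_r| + |Y_r - Y'_r| = d(v_{i_r}, v_{j_r}) \leq 2(L-1) \leq 2L$, which is the length bound asserted by the lemma.

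The main verification is edge-disjointness, for which I classify the edges of each $\pi_r$ into horizontal edges lying on row $Y_r$ and vertical edges lying on column $X'_r$. For any two distinct indices $p \neq q$, the assumption~\eqref{eq:assumptionnonintersect} applied to $y$-coordinates gives $Y_p \neq Y_q$ (since $Y_p \in \{Y_p, Y'_p\}$ and $Y_q \in \{Y_q, Y'_q\}$), so the horizontal edges of $\pi_p$ and $\pi_q$ lie on disjoint rows and cannot coincide. Symmetrically, the assumption applied to $x$-coordinates gives $X'_p \neq X'_q$, so the vertical edges of $\pi_p$ and $\pi_q$ lie on disjoint columns. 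A horizontal edge of one path and a vertical edge of another always have different orientations, hence cannot coincide either. Finally, within a single $\pi_r$ the horizontal and vertical segments meet only at the corner vertex $(X'_r, Y_r)$, where they use disjoint edges. Combining these observations yields edge-disjointness of the family $\{\pi_r\}_{r=1}^{k}$.

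I do not expect any substantive obstacle; the strength of hypothesis~\eqref{eq:assumptionnonintersect}, requiring that the two $x$-coordinates of each pair collectively avoid those of every other pair, and likewise for $y$-coordinates, is precisely what is needed to ensure that the ``horizontal strip on row $Y_r$'' and ``vertical strip on column $X'_r$'' chosen by distinct L-shapes do not overlap. In particular, no iterative routing or matching argument is required; the explicit assignment above works in all cases, and the corresponding writeup will simply make the case analysis for ``horizontal vs.\ horizontal'', ``vertical vs.\ vertical'', ``horizontal vs.\ vertical'', and ``intra-path'' conflicts precise.
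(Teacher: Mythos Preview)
Your proposal is correct and follows essentially the same approach as the paper: both construct each $\pi_r$ as the L-shaped path going horizontally along row $Y_r$ to the corner $(X'_r,Y_r)$ and then vertically along column $X'_r$, and both derive edge-disjointness and the length bound directly from this construction. In fact, your write-up spells out the edge-disjointness case analysis (horizontal--horizontal, vertical--vertical, horizontal--vertical) more explicitly than the paper, which simply asserts that the hypothesis~\eqref{eq:assumptionnonintersect} implies it.
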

\begin{proof}
 Without loss of generality, we may assume that   $X_r \leq X_r'$. For each $r \in [k]$, define the path $\pi_r$ connecting $v_{i_r}=(X_r,Y_r)$ and $v_{j_r}=(X'_r,Y'_r)$ 
by the sequence of traversed vertices as 
  \begin{align} \label{eq:pathinthesamefloor}
    \pi_r :=
    \begin{cases}
      \{ (X_r,Y_r), (X_r+1, Y_r), \dots, (X'_r, Y_r), (X'_j, Y_r+1), \dots,  (X'_r, Y_r')\} & \text{if $Y_r \leq Y'_r$} \\
      \{(X_r,Y_r), (X_r+1, Y_r), \dots, (X'_r, Y_r), (X'_r, Y_r-1), \dots,  (X'_r, Y_r')\} & \text{if $Y_r > Y'_r$} \, .
    \end{cases}
  \end{align} The assumption~\eqref{eq:assumptionnonintersect} implies that the paths $\{\pi_r\}_{r=1}^k$ are pairwise edge-disjoint.
  Furthermore, the length of the path~$\pi_r$ is $d(v_{i_r},v_{j_r})$ for each $r\in [k]$ by construction. 
\end{proof}
An immediate consequence of Lemma~\ref{lem:parallelrouting2D} is the following lower bound on the parallel routing number of a  2D~grid graph.
\begin{corollary}\label{cor:2Droutingnumber}
Let $L$ be even. Then $k(P_L\times P_L)\geq L/2$.
\end{corollary}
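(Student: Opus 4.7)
The plan is to construct an explicit parallel-routable subset $S\subseteq V(P_L\times P_L)$ of size $|S|=L$, which immediately yields $k(P_L\times P_L)\ge L/2$.

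The natural choice is to take $S$ to be the diagonal $S=\{(i,i)\mid i\in[L]\}$. Since $L$ is even, $|S|=L=2k$ with $k=L/2$. To verify that $S$ is parallel-routable, I need to check that for \emph{every} pairing $\{(v_{i_r},v_{j_r})\}_{r=1}^{k}$ of its elements there is a family of pairwise edge-disjoint paths realizing the pairing; Lemma~\ref{lem:parallelrouting2D} will do all the work once I check its hypothesis~\eqref{eq:assumptionnonintersect}.

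The key observation is that by construction all vertices in $S$ have pairwise distinct $x$-coordinates and pairwise distinct $y$-coordinates (namely $1,\ldots,L$). Thus, for any pairing $\{(v_{i_r},v_{j_r})\}_{r=1}^{k}$, writing $v_{i_r}=(X_r,Y_r)$ and $v_{j_r}=(X'_r,Y'_r)$, each value in $\{X_r,X'_r,\,r\in[k]\}$ is taken at most once, and similarly for the $y$-coordinates. In particular, for any $p\neq q$ the sets $\{X_p,X'_p\}$ and $\{X_q,X'_q\}$ are disjoint (they correspond to four distinct elements of $S$), and likewise for the $y$-coordinates. This is precisely the hypothesis~\eqref{eq:assumptionnonintersect} of Lemma~\ref{lem:parallelrouting2D}, which then provides the required pairwise edge-disjoint paths $\{\pi_r\}_{r=1}^{k}$. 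Hence $S$ is parallel-routable, giving $k(P_L\times P_L)\geq |S|/2=L/2$.

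There is essentially no obstacle here: the argument is a direct application of Lemma~\ref{lem:parallelrouting2D} to the diagonal, using the fact that the ``distinct coordinates'' property is preserved under arbitrary pairings. Any set of $L$ grid points whose $x$- and $y$-coordinates are each a permutation of $[L]$ would work equally well.
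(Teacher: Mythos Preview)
Your proof is correct and takes essentially the same approach as the paper: both choose the diagonal $S=\{(i,i)\}_{i=1}^L$, observe that its points have pairwise distinct $x$- and $y$-coordinates so that any pairing satisfies condition~\eqref{eq:assumptionnonintersect}, and then invoke Lemma~\ref{lem:parallelrouting2D}.
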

\begin{proof}
Consider the set~$S=\{v_i:=(i,i)\}_{i=1}^L$ of vertices on the ``diagonal'' of $P_L\times P_L$. 
Let $\{(v_{i_r},v_{j_r})\}_{r=1}^k$ be an arbitrary pairing of~$S$. Then $i_r\neq j_r$ and $\{i_r,j_r\}\cap \{i_s,j_s\}=\emptyset$ for all $r\neq s$, $r,s\in [L]$ by definition of a pairing. It follows immediately that Condition~\eqref{eq:assumptionnonintersect} is satisfied. Since the pairing was arbitrary, this shows that~$S$ is parallel-routable. 
\end{proof}
Next, we consider 3D grid graphs. Our main result is the following.
\begin{theorem}[Parallel routing in $P_L\times P_L\times P_L$] \label{thm:main3Drouting}
Let $L$ be even. Consider the subset 
\begin{align}
[L]^2\times \{1\}&=\left\{
(x,y,1)\ |\ x,y\in [L]
\right\}
\end{align}
of vertices of the grid graph~$P_L\times P_L\times P_{4L}$. Let~$\{(v_{i_r},v_{j_r})\}_{r=1}^{L^2/2}$ be an arbitrary pairing of the vertices belonging to~$[L]^2\times \{1\}$. Then there is a parallel routing scheme~$\{\pi_r\}_{r=1}^{L^2/2}$ associated with~$\{(v_{i_r},v_{j_r})\}_{i=1}^{L^2/2}$ such that each path~$\pi_r$ has length at most~$10L$ for each $r\in [L^2/2]$. 
\end{theorem}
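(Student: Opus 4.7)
My plan is to exploit the $4L$ vertical layers of $P_L\times P_L\times P_{4L}$ to parallelize the routing: each pair will be handled on its own horizontal floor, with vertical ``lifts'' and ``descents'' connecting that floor to floor~$1$. Concretely, for each pair~$r$ I will pick a floor $z_r\in\{2,\ldots,4L\}$ and build $\pi_r$ as the concatenation of three segments: (i) the vertical column path from $v_{i_r}=(X_r,Y_r,1)$ up to $(X_r,Y_r,z_r)$; (ii) a horizontal path on floor~$z_r$ connecting $(X_r,Y_r,z_r)$ to $(X'_r,Y'_r,z_r)$ produced by Lemma~\ref{lem:parallelrouting2D}; and (iii) the vertical descent back down to $v_{j_r}=(X'_r,Y'_r,1)$. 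The three segment lengths are then at most $4L-1$, $2L-2$, and $4L-1$ respectively, so $|\pi_r|\leq 10L-4\leq 10L$ will follow automatically.

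The nontrivial work lies in choosing the floor assignment $r\mapsto z_r$ so that (a) pairs sharing a floor satisfy the disjoint-coordinate hypothesis~\eqref{eq:assumptionnonintersect} of Lemma~\ref{lem:parallelrouting2D}, and (b) at most $4L-1$ distinct floors are used. I will phrase this as a graph-coloring problem: let $H$ be the conflict graph on $[L^2/2]$ in which $r$ and $r'$ are adjacent iff $\{X_r,X'_r\}\cap\{X_{r'},X'_{r'}\}\neq\emptyset$ or $\{Y_r,Y'_r\}\cap\{Y_{r'},Y'_{r'}\}\neq\emptyset$. The key counting observation is that any $x\in[L]$ appears as the $x$-coordinate of at most $L$ source/target vertices (one per $y\in[L]$), hence in at most $L$ pairs; the analogous bound holds for each $y$-value. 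Consequently a fixed pair has at most $2(L-1)$ neighbors in~$H$ through its $x$-coordinates and at most $2(L-1)$ through its $y$-coordinates, so $\Delta(H)\leq 4L-4$. A greedy coloring then uses at most $4L-3$ colors, which I can assign injectively to distinct floors in $\{2,\ldots,4L\}$.

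To finish, I will verify pairwise edge-disjointness of $\{\pi_r\}$. Vertical segments only traverse columns of the form $(X_r,Y_r,\ast)$ or $(X'_r,Y'_r,\ast)$, and since the $L^2$ source/target vertices are all distinct, these $L^2$ columns are pairwise disjoint, so no two vertical segments share an edge. Horizontal segments on distinct floors live on disjoint edge sets. Within any single floor, the horizontal segments form a pairing that, by construction of the coloring, satisfies~\eqref{eq:assumptionnonintersect}, so Lemma~\ref{lem:parallelrouting2D} guarantees edge-disjointness there. Since the vertical and horizontal edges of the 3D grid are disjoint by nature, the three segment types cannot collide with one another either.

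The step I expect to be the main obstacle is the degree bound $\Delta(H)\leq 4L-4$: this counting estimate is precisely what lets the $4L$ available layers accommodate a greedy coloring, and without it the three-segment strategy would fall apart. Once this combinatorial lemma is in place, the construction of $\pi_r$ and the verification of edge-disjointness are routine consequences of Lemma~\ref{lem:parallelrouting2D}.
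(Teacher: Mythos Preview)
Your proposal is correct and follows essentially the same approach as the paper: both construct each path as a vertical lift, a horizontal floor segment via Lemma~\ref{lem:parallelrouting2D}, and a vertical descent, with a greedy floor assignment ensuring that pairs sharing a floor satisfy condition~\eqref{eq:assumptionnonintersect}. Your graph-coloring formulation with the degree bound $\Delta(H)\leq 4L-4$ is exactly the content of the paper's Lemma~\ref{lem:greedyzcoordworks} (whose contradiction argument bounds the same conflict set by $|W|\leq 4L-4$), so the two arguments are equivalent reformulations of one another.
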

In particular, this immediately implies the following:
\begin{corollary}
Let $L$ be even. Then $k(P_L\times P_L\times P_{4L})\geq L^2/2$. 
\end{corollary}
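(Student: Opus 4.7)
The plan is to observe that this corollary is an immediate unpacking of Theorem~\ref{thm:main3Drouting} against the definitions of \emph{parallel-routable} and \emph{parallel routing number} given earlier in Section~\ref{sec:entanglementswappingcircuit}. So the proof amounts to identifying the right witness set and citing the theorem.

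Concretely, I would take the vertex set
\begin{align}
S = [L]^2 \times \{1\} = \{(x,y,1) \mid x,y \in [L]\}
\end{align}
inside $P_L \times P_L \times P_{4L}$. This has even cardinality $|S| = L^2$ (using the hypothesis that $L$ is even, so in particular $L^2$ is even). Theorem~\ref{thm:main3Drouting} asserts that for \emph{any} pairing $\{(v_{i_r}, v_{j_r})\}_{r=1}^{L^2/2}$ of the elements of $S$, there is an associated family of pairwise edge-disjoint paths $\{\pi_r\}_{r=1}^{L^2/2}$ connecting the paired vertices. This is exactly the definition of $S$ being parallel-routable. Hence $k(P_L \times P_L \times P_{4L}) \geq |S|/2 = L^2/2$ by the definition of the parallel routing number as half the maximal size of a parallel-routable subset.

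There is no real obstacle here: all the work is done by Theorem~\ref{thm:main3Drouting}, and the only thing to check is that the combinatorial notion witnessed by that theorem matches the definition of parallel-routability verbatim (arbitrary pairing, pairwise edge-disjoint paths connecting the pairs). The bound on path length provided by the theorem is not needed for the corollary and can simply be dropped.
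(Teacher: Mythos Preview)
Your proposal is correct and matches the paper's approach exactly: the paper simply states that the corollary follows immediately from Theorem~\ref{thm:main3Drouting}, and your unpacking via the witness set $S=[L]^2\times\{1\}$ is precisely the intended reading.
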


To prove Theorem~\ref{thm:main3Drouting}   we construct a parallel routing scheme in a greedy way. More precisely, we use that the graph~$P_L\times P_L\times P_{4L}$ contains~$4L$ subgraphs~$P_L\times P_L\times \{Z\}$, $Z\in [4L]$ which we refer to as floors. Each of these subgraphs is isomorphic to~$P_L\times P_L$. Our construction reduces the problem of finding a parallel routing scheme on~$P_L\times P_L\times P_{4L}$ to the 2D case by applying Lemma~\ref{lem:parallelrouting2D} to different floors. This is achieved by Algorithm~\ref{alg:floorassignment}, which computes
the coordinate~$Z_r$ of a floor $P_{L}\times P_L\times \{Z_r\}\subset P_L\times P_L\times P_{4L}$ for each pair~$(v_{i_r},v_{j_r})$,  $r\in [L^2/2]$.   We say that the pair~$(v_{i_r},v_{j_r})$ is assigned the $Z_r$-th floor by Algorithm~\ref{alg:floorassignment}.

\begin{algorithm} 
\caption{A greedy algorithm to assign a floor to each pair of vertices in a grid graph}\label{alg:floorassignment}
\begin{flushleft}
  \textbf{Input:} A pairing~$(\{(v_{i_r},v_{j_r})\})_{r=1}^{L^2/2}$ of the vertices~$
[L]^2\times \{1\}$. \\
\qquad\quad \ We write $v_{i_r} =: (X_r,Y_r,1)$ and $v'_{j_r} =: (X'_r, Y'_r, 1)$ for $r\in [L^2/2]$. \\
  \textbf{Output:} A sequence~$(Z_r)_{r=1}^{L^2/2}$ with $Z_r \in [4L]$.
\begin{algorithmic}[1]
  \State $\mathrm{FilledRows}[1] \leftarrow \emptyset, \dots, \mathrm{FilledRows}[4L] \leftarrow \emptyset$
  \State $\mathrm{FilledCols}[1] \leftarrow \emptyset, \dots, \mathrm{FilledCols}[4L] \leftarrow \emptyset$
  \For{$r \in [L^2/2]$}
\State $Z_r\leftarrow \arg\min \left\{Z\in [4L]\ |\ \{X_r,X'_r\}\cap \mathrm{FilledCols}[Z]=\emptyset
\textrm{ and }\{Y_r, Y'_r\}\cap\mathrm{FilledRows}[Z]=\emptyset\right\}$.\label{step:fillcondition}
  \State $\mathrm{FilledCols}[Z_r] \leftarrow \mathrm{FilledCols}[Z_r] \cup \{X_r, X'_r\}$\label{step:filledcols}
  \State $\mathrm{FilledRows}[Z_r] \leftarrow \mathrm{FilledRows}[Z_r] \cup \{Y_r, Y'_r\}$\label{step:filledrows}
  \EndFor\\
  \Return $(Z_r)_{r=1}^{L^2/2}$
\end{algorithmic}
\end{flushleft}
\end{algorithm}
\noindent The following lemma shows that
the set of pairs assigned to a given floor satisfies the sufficienct
condition~\eqref{eq:assumptionnonintersect}  for the existence of a parallel routing scheme in 2D.
\begin{lemma} \label{lem:greedyzcoordworks}
For any  input pairing~$\{(v_{i_r},v_{j_r})\}_{r=1}^{L^2/2}$ of $
[L]^2\times \{1\}$,   Algorithm~\ref{alg:floorassignment} returns a family~$\{Z_r\}_{r=1}^{L^2/2}\subset [4L]$ with the following property.
  Writing $v_{i_r}=(X_r,Y_r,1)$ and $v_{j_r}=(X'_r,Y_r,1)$ for $r\in [L^2/2]$,   and defining 
  \begin{align}
  I_Z&=\{r\in [L^2/2]\ |\ Z_r=Z\}\ 
  \end{align}as the  set of indices of pairs~$(v_{i_r},v_{j_r})$  assigned to the $Z$-th floor, we have 
  \begin{align}\label{eq:necessaryconditionsatisfied}
  \begin{matrix}
  \{X_p,X'_p\}&\cap& \{X_q,X'_q\}&=&\emptyset\\
    \{Y_p,Y'_p\}&\cap& \{Y_q,Y'_q\}&=&\emptyset
    \end{matrix}\qquad\textrm{for all} \quad  p,q\in I_Z \quad \textrm{with} \quad p\neq q\ 
     \end{align}
  for every $Z\in [4L]$. 
    \end{lemma}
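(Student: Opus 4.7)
The plan has two parts. First, I need to verify that Algorithm~\ref{alg:floorassignment} is well-defined, i.e.\ that in Step~\ref{step:fillcondition} the set over which the argmin is taken is non-empty for every $r\in [L^2/2]$. Once this is established, the conclusion~\eqref{eq:necessaryconditionsatisfied} will follow almost tautologically from the construction.

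For the well-definedness, I would fix an arbitrary iteration $r$ and bound the number of ``blocked'' floors: call a floor $Z\in [4L]$ blocked if at least one of $X_r, X'_r$ lies in $\mathrm{FilledCols}[Z]$ or at least one of $Y_r, Y'_r$ lies in $\mathrm{FilledRows}[Z]$ after pairs $1,\ldots,r-1$ have been processed. The key counting lemma is the following. Fix any column value $X\in [L]$. By inspection of Steps~\ref{step:filledcols}--\ref{step:filledrows}, distinct floors containing $X$ in their column-list must have come from distinct previously processed pairs $s<r$ with $X\in\{X_s,X'_s\}$. Since the vertices of $[L]^2\times\{1\}$ with first coordinate $X$ number exactly $L$ and each such vertex belongs to exactly one pair, there are at most $L$ pairs whose coordinate-multiset contains $X$; but the pair $r$ itself uses up at least one such vertex when $X\in\{X_r,X'_r\}$. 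Hence the number of previously processed pairs containing $X$ (and therefore the number of floors whose column-list contains $X$) is at most $L-1$. The same argument applies to rows.

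Now I would apply a union bound. Since the two paired vertices $v_{i_r}, v_{j_r}$ are distinct, at least one of $X_r\neq X'_r$ or $Y_r\neq Y'_r$ holds; in the worst case both hold, and the four coordinate values $X_r,X'_r,Y_r,Y'_r$ each block at most $L-1$ floors. Therefore the total number of blocked floors is at most $4(L-1)=4L-4$, which is strictly less than $4L$, so at least one floor remains unblocked and the argmin is well-defined. In the degenerate cases (e.g.\ $X_r=X'_r$) the same bound applies trivially. This is the main (and really only) step of the argument; once the arithmetic $4(L-1)<4L$ is in place, everything else is bookkeeping.

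For the second part, suppose $p,q\in I_Z$ with $p\neq q$; without loss of generality $p<q$. After Steps~\ref{step:filledcols}--\ref{step:filledrows} were executed in iteration $p$, the set $\mathrm{FilledCols}[Z]$ contained $\{X_p,X'_p\}$ and $\mathrm{FilledRows}[Z]$ contained $\{Y_p,Y'_p\}$; these sets only grow over subsequent iterations. When the algorithm then selected $Z_q=Z$ in iteration $q$, the condition of Step~\ref{step:fillcondition} guarantees $\{X_q,X'_q\}\cap \mathrm{FilledCols}[Z]=\emptyset$ and $\{Y_q,Y'_q\}\cap\mathrm{FilledRows}[Z]=\emptyset$ at that moment, hence in particular $\{X_p,X'_p\}\cap\{X_q,X'_q\}=\emptyset$ and $\{Y_p,Y'_p\}\cap\{Y_q,Y'_q\}=\emptyset$. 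This is exactly~\eqref{eq:necessaryconditionsatisfied}, completing the proof.
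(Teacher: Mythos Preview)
Your proof is correct. The second part (deriving~\eqref{eq:necessaryconditionsatisfied} from the algorithm's invariants) matches the paper's argument essentially verbatim. For well-definedness, however, you take a different and more direct route than the paper.

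The paper argues by contradiction: assuming all $4L$ floors are blocked at iteration~$r$, it first deduces $r-1\geq 4L$ (every floor must already be occupied), then asserts that each previously processed pair must share a row- or column-coordinate with pair~$r$, so at least $4L$ distinct vertices lie in the ``cross'' $W=\{(X,Y)\in[L]^2 : X\in\{X_r,X'_r\}\text{ or }Y\in\{Y_r,Y'_r\}\}$; but a direct count gives $|W|\leq 4L-4$, a contradiction. Your argument instead bounds the number of blocked floors directly: for each coordinate value $X\in\{X_r,X'_r\}$ (and similarly for rows), you exhibit an injection from floors whose column-list contains~$X$ into previously processed pairs using~$X$, and then cap the latter by $L-1$ via the fact that only $L$ vertices share a given coordinate and pair~$r$ accounts for at least one of them. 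A union bound over the four coordinate values yields at most $4(L-1)<4L$ blocked floors. Both arguments ultimately rest on the same arithmetic, but yours reaches it without the intermediate assertion that \emph{every} prior pair conflicts with pair~$r$---which is the step in the paper's version that requires the most care to justify.
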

    \begin{proof}
    We first argue 
    that the algorithm is well-defined, i.e., returns a sequence~$\{Z_r\}_{r=1}^{L^2/2}\subset [4L]$
    on any input.     To this end, we have to argue     that for every $r\in [L^2/2]$, the assignment in Step~\ref{step:fillcondition} indeed assigns 
    a value $Z_r\in [4L]$ for every $r\in [L^2/2]$, i.e., there is some $Z\in [4L]$ such that 
$\{X_r,X'_r\}\cap \mathrm{FilledCols}[Z]$ and 
$\{Y_r,Y'_r\}\cap \mathrm{FilledRows}[Z]$ are both empty.
   
   Suppose for sake of contradiction that this is not the case. Let $r\in [L^2/2]$ be 
   the minimal number such that 
  \begin{align}
  \left(\{X_r,X'_r\}\cap \mathrm{FilledCols}[Z]\neq \emptyset\textrm{  or }
  \{Y_r,Y'_r\}\cap \mathrm{FilledRows}[Z]\neq \emptyset \right)\quad\textrm{for every} \ Z\in [4L]\ .\label{eq:violationcondition}
  \end{align} 
  Let us consider the collection of sets~$\{\mathrm{FilledRows}[Z]\}_{Z\in [4L]}$
and~$\{\mathrm{FilledCols}[Z]\}_{Z\in [4L]}$ at the corresponding Step~\ref{step:fillcondition} of the algorithm.   By definition (see Line~\ref{step:filledcols}),  $\mathrm{FilledCols}[Z]$ is the set  of all  $X$-coordinates of pairs $(v_{i_q},v_{j_q})$, $q\leq r-1$ that have been assigned to the $Z$-th floor. Similarly (see Line~\ref{step:filledrows}),  $\mathrm{FilledRows}[Z]$ is the set  of all $Y$-coordinates of pairs $(v_{i_q},v_{j_q})$, $q\leq r-1$  assigned to the $Z$-th floor. Note that in each iteration, at most one pair is assigned to a hitherto  ``unused'' floor (i.e., one associated with a coordinate $Z\in [4L]$ such that~$\mathrm{FilledCols}[Z]$ and $\mathrm{FilledRows}[Z]$ are both empty). Since  Eq.~\eqref{eq:violationcondition} implies that each of the $4L$~floors is already occupied at the $r$-th iteration, we must have 
\begin{align}
r-1&\geq 4L\ .\label{eq:rminusone4L}
\end{align}
The definition of the sets~$\{\mathrm{FilledRows}[Z]\}_{Z\in [4L]}$
and~$\{\mathrm{FilledCols}[Z]\}_{Z\in [4L]}$ and~\eqref{eq:violationcondition}
 also  implies that 
  \begin{align}
 \left(\{X_r,X'_r\}\cap \{X_q,X'_q\}\neq \emptyset\textrm{ or }\{Y_r,Y'_r\}\cap \{Y_q,Y'_q\}\neq \emptyset\right)\qquad\textrm{ for each } q\leq r-1\ .\label{eq:xryrp}
  \end{align}
  Because by~\eqref{eq:xryrp}, at least one of the vertices~$v_{i_q}$ or $v_{j_q}$ belongs to the set
  \begin{align}
  W&:=\left\{(X,Y)\in [L]^2\ |\ 
  X\in \{X_r,X'_r\}\textrm{ or } Y\in \{Y_r,Y'_r\}
  \right\}\ ,
   \end{align}
 the cardinality of this set must be at least
 \begin{align}
 |W| & \geq 4L\ \label{eq:lowerboundvl}
 \end{align}  
 by Eq.~\eqref{eq:rminusone4L}.
 
  However, since $(X_r,Y_r)\neq (X'_r, Y'_r) \in [L]^2$, we have from a simple counting argument that
  \begin{align}
    |W|\leq
    \begin{cases}
      4L - 4\qquad & \text{if $X_r \neq X'_r$ and $Y_r \neq Y'_r$} \\
      3L - 2\qquad & \text{if $X_r = X'_r$ or $Y_r = Y'_r$} \, 
    \end{cases}
  \end{align}
  and thus 
  \begin{align}
  |W| & \leq 4L-4
  \end{align}
  for $L\geq 2$. This contradicts Eq.~\eqref{eq:lowerboundvl}, showing that the algorithm returns a sequence~$\{Z_r\}_{r=1}^{L^2/2}\subset [4L]$ as claimed. 
  
  To see that Condition~\eqref{eq:necessaryconditionsatisfied}
  is satisfied for every $Z\in [4L]$ upon completion of the algorithm,  observe that a corresponding modified  condition is satisfied throughout the algorithm. That is, at any point of the algorithm, a new pair~$((X_q,Y_q),(X'_q,Y_q'))$ only gets 
  assigned to a floor~$P_L\times P_L\times\{Z\}$  if the latter is either empty, or assigning the pair to the floor does not conflict with
the pairs already added to that floor, i.e., does 
  not violate Condition~\eqref{eq:necessaryconditionsatisfied}. This is ensured by the definition of~$Z_r$ in Step~\ref{step:fillcondition}.   
    \end{proof}
        Equipped with 
    Lemma~\ref{lem:greedyzcoordworks}, we can now give the 
\begin{proof}[Proof of Theorem~\ref{thm:main3Drouting}]
For $r\in [L^2/2]$, we define the path~$\pi_r$ connecting the two vertices
\begin{align}
v_{i_r}&=(X_r,Y_r,1)\qquad\textrm{ and }\qquad 
v_{j_r}=(X'_r,Y'_r,1)\ 
\end{align}
as the concatenation of appropriate subpaths~$\pi_r^{\mathrm{up}}$, $\pi_r^{\mathrm{mid}}$ and $\pi_r^{\mathrm{down}}$.
To construct these paths, we use 
 Algorithm~\ref{alg:floorassignment}
to compute -- for each $r\in [L^2/2]$ -- the coordinate
$Z_r\in [4L]$ of a floor $P_L\times P_L\times \{Z_r\}$ 
 of the grid graph~$P_L\times P_L\times P_{4L}$.
The path $\pi_r^{\mathrm{up}}$ ``vertically'' connects~$v_{i_r}$ with the vertex~$(X_r,Y_r,Z_r)$ in the $Z_r$-th floor, and
$\pi_r^{\mathrm{down}}$ vertically connects the vertex~$(X'_r,Y'_r,Z_r)$ in the $Z_r$-th floor with~$v_{j_r}$. 
That is, again defining paths in terms of the vertices traversed, we set
\begin{align}
\begin{matrix}
  \pi_r^{\mathrm{up}} &:=& \{ (X_r, Y_r, 1), (X_r, Y_r, 2), \dots, (X_r, Y_r, Z_r) \} \\
  \pi_r^{\mathrm{down}} &:= &\{ (X'_r, Y'_r, Z_r), (X'_r, Y'_r, Z_r-1), \dots, (X'_r, Y'_r, 1) \} \,.
  \end{matrix}\label{eq:updownpathdefm}
\end{align}
We note that for $Z_r=1$, these paths are empty, i.e., of length~$0$, and can be omitted from further considerations: Here $\pi_r=\pi_r^{\textrm{mid}}$. 

For each $r\in [L^2/2]$, the path~$\pi_r^{\mathrm{mid}}$
connects the vertices~$(X_r,Y_r,Z_r)$ and $(X'_r,Y'_r,Z_r)$ within the $Z_r$-th floor.  To construct these paths, we invoke our construction for the grid graph~$P_L\times P_L$ given in Lemma~\ref{lem:parallelrouting2D}. Lemma~\ref{lem:greedyzcoordworks} guarantees that the collection 
of vertices
\begin{align}
    \left\{(X_r,Y_r,Z_r), (X'_r,Y'_r,Z_r)\right\}_{r\in I_Z}
\end{align}
associated with pairs assigned to the~$Z_r$-th floor satisfies the necessary condition~\eqref{eq:assumptionnonintersect}.  Each path~$\pi_r^{\mathrm{mid}}$ is thus of the form~\eqref{eq:pathinthesamefloor} inside the $Z_r$-th floor, and all path inside a given floor are pairwise non-intersecting.

This constructions satisfies the claimed properties: Denoting by $\abs{\pi}$ the length of a path~$\pi$, we have $\abs{\pi_r^{\mathrm{up}}} \leq 4L$
 and 
$\abs{\pi_r^{\mathrm{down}}} \leq 4L$
by construction, and 
 $\abs{\pi_r^{\mathrm{mid}}} \leq 2L$  by Lemma~\ref{lem:parallelrouting2D}
  and the fact that the diameter of~$P_L\times P_L$ is bounded by~$2L$. It follows that  
  \begin{align}
   \abs{\pi_r}=\abs{\pi_r^{\mathrm{up}}}+\abs{\pi_r^{\mathrm{mid}}}+\abs{\pi_r^{\mathrm{down}}} \leq 10L   
  \end{align}
   for every $r\in [L^2/2]$, as claimed. 

 Furthermore,  the paths $\{\pi_r\}_{r=1}^{L^2/2}$ are pairwise edge-disjoint: Consider two such paths~$\pi_r,\pi_s$ with $r\neq s$. Then it is clear from the definition that the ``vertical'' subpaths $\pi_r^{\mathrm{up}},\pi_r^{\mathrm{down}}$
 and $\pi_s^{\mathrm{up}},\pi_s^{\mathrm{down}}$
 are edge-disjoint,
 as the former are located  above~$v_{i_r}$ and $v_{j_r}$, respectively, whereas the latter are located above $v_{i_s}$ and $v_{j_s}$, respectively.
 Finally, the subpaths~$\pi_r^{\mathrm{mid}}$ and
 $\pi_s^{\mathrm{mid}}$ are edge-disjoint, as they either belong to different floors~$Z_r\neq Z_s$ (and are thus clearly edge-disjoint), or they belong to the same floor and are edge-disjoint by Lemma~\ref{lem:parallelrouting2D} as argued above. 
 
\end{proof}

\subsection{Definition of a general adaptive quantum circuit\label{sec:defadaptivequantumcircuit}}
The following notion will be useful: By a general {\em adaptive quantum circuit~$\cQ$} (sometimes simply referred to as a {\em protocol}) we mean one which is composed single-qubit state preparations, one- and two-qubit unitaries, and single-qubit measurements. 
  All these operations can depend or (i.e., be classically controlled by) measurement results obtained in the course of the computation. These classical controls may involve (efficient) classical computation.

Because of gate teleportation and associated magic states (see~\cite{Magicstate05}), we may assume without loss of generality that~$\cQ$ involves only the following quantum operations on a fixed number~$n$ of qubits: 
\begin{enumerate}[(a)]
\item\label{it:firstoperationqcircuit}
preparation of a single qubit in the computational basis state~$\ket{0}$.
\item
preparation of a single qubit in the magic state~$T\ket{+}=\frac{1}{\sqrt{2}}\left(\ket{0}+e^{i\pi/4}\ket{1}\right)$, where $T = \ketbra{0} + e^{i\pi/4} \ketbra{1}$.
\item
any single- and two-qubit Clifford operation (including the identity, whose inclusion is convenient for formal purposes) 
\item\label{it:lastoperationqcircuit}
measurement of any qubit in the computational basis. This provides a  measurement result~$x\in \{0,1\}
$ and projects the qubit onto the associated $Z$-eigenstate, i.e., the computational basis state~$\ket{x}$.
\end{enumerate}
Adaptivity refers to the fact that the choice of which of these operations is applied at any given point may depend on the history of measurement results obtained up to that point. This dependence should be given in the form of efficiently computable functions. In our presentation, we leave these functions, the classical computation required to evaluate these functions, as well as 
the corresponding classical registers (keeping a record of the measurement results) implicit. In the context of fault-tolerance, we will further assume that they can be evaluated in an error-free manner.

Without loss of generality, we may assume that the algorithm proceeds in an alternating fashion. In each time step, a layer of quantum operations from the set~\eqref{it:firstoperationqcircuit}--\eqref{it:lastoperationqcircuit}
is applied, where the one- respectively two-qubit 
operations within the layer act on 
disjoint subsets of qubits and can be applied simultaneously. This is followed by a classical computation (based on the measurement history) to determine the layer of quantum operations applied at the next time step. We refer to the number~$T$ of layers (time steps) needed to implement~$\cQ$ as the {\em quantum circuit depth}  of~$\cQ$, with the understanding that the classical computation involved is efficient and can be neglected.

Suppressing the dependence on classical measurement results, the algorithm~$\cQ$ can thus be written as a composition
$\cM^{(T)}\circ\cdots\circ\cM^{(1)}$ 
of $n$-qubit operations~$\cM^{(t)}$, $t\in [T]$, where each of these operations consists in the parallel application of one- and two-qubit operations. For simplicity, let us assume that the number~$n=2k$ of qubits is even, and let us label the qubits as $Q_1\cdots Q_n$. Possibly combining single-qubit operations into two-qubit operations,  we can then find, for each~$t\in [T]$, a pairing $\{(i^{(t)}_r,j^{(t)}_r)\}_{r=1}^k$ of $[n]$ (i.e., the set of qubits), and two-qubit operations~$\{\cM^{(t,r)}\}_{r=1}^k$ such that 
\begin{align}
\cM^{(t)}&=\bigotimes_{r=1}^k \cM^{(t,r)}_{Q_{i^{(t)}_r}Q_{j^{(t)}_r}}\label{eq:generallayerm}
\end{align}
is the composition of  the operations~$\cM^{(t,r)}$ applied to the two qubits $Q_{i^{(t)}_r}Q_{j^{(t)}_r}$, which can be executed  in parallel for each~$r\in [k]$. 
\subsection{Localizing a general quantum circuit\label{sec:localization}}
In this section, we show how to 
render a fully general adaptive quantum circuit  local in 3D (respectively in 2D).   That is, let~$\cQ$ be a general adaptive quantum circuit as defined in Section~\ref{sec:defadaptivequantumcircuit}.
 Combining the general qubit pairing protocols~$\cQ_{\textrm{pair}}$ and $\cQ_{\textrm{pair}}^{-1}$ from Section~\ref{sec:entanglementswappingcircuit}
with the routing schemes on grid graphs constructed in Section~\ref{sec:parallelroutingingrid}, we immediately obtain the following.

\begin{lemma}[Localizing a general adaptive circuit]\label{lem:localizingcircuit}
Let $\cQ$ be a general adaptive quantum circuit on $n=2k$~qubits and of quantum depth~$T$ as described. Then there are adaptive quantum circuits~$\cQ'_{\mathrm{2D}}$ and $\cQ'_{\mathrm{3D}}$ with the following properties:
\begin{enumerate}[(i)]
\item
By taking certain marginals (i.e., tracing out qubits and/or ignoring measurement results), the two circuits exactly simulate~$\cQ$.
\item Both circuits have quantum depth of order~$O(T)$.
\item
The circuits $\cQ'_{\mathrm{2D}}$ and $\cQ'_{\mathrm{3D}}$ are geometrically local in 2D and 3D, respectively (i.e., only involve local or nearest-neighbor operations on a corresponding grid graph).
\item
The circuits~$\cQ'_{\mathrm{2D}}$ and~$\cQ'_{\mathrm{3D}}$ use a total number of
\begin{align}
n^{\mathrm{tot}}_{\mathrm{2D}} &= O(n^2) \qquad\textrm{ and }\qquad n^{\mathrm{tot}}_{\mathrm{3D}}= O(n^{3/2}) \label{eq:numberofqubitslocalizedcircuits}
\end{align}
qubits, respectively.
\end{enumerate}
\end{lemma}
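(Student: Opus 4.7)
The plan is to obtain $\cQ'_{\mathrm{3D}}$ and $\cQ'_{\mathrm{2D}}$ by embedding the $n=2k$ data qubits of~$\cQ$ into an appropriate grid lattice and then simulating each layer of~$\cQ$ locally by means of the routing primitives~$\cQ_{\textrm{pair}}$ and~$\cQ_{\textrm{pair}}^{-1}$ of Section~\ref{sec:entanglementswappingcircuit}. For the 3D construction, I would let $L$ be the smallest even integer with $L^2\geq n$, so $L=O(\sqrt{n})$, and place the data qubits $Q_1,\ldots,Q_n$ at $n$ of the $L^2$ positions of the bottom floor $[L]^2\times\{1\}$ of the grid graph $P_L\times P_L\times P_{4L}$, padding any unused floor positions with dummy qubits. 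I would additionally introduce the entanglement-swapping auxiliary qubits $R^e_u,R^e_v$ at every edge $e=\{u,v\}$ of this grid. The total qubit count is then $O(L^3)=O(n^{3/2})$. For the 2D construction, I would take $L=O(n)$ (even) and place the data qubits along the diagonal of $P_L\times P_L$, again introducing entanglement-swapping auxiliaries on every edge, giving $O(L^2)=O(n^2)$ qubits in total.

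To simulate a single layer $\cM^{(t)}=\bigotimes_{r=1}^k\cM^{(t,r)}_{Q_{i^{(t)}_r}Q_{j^{(t)}_r}}$ of~$\cQ$, I would execute three substeps: (a)~invoke $\cQ_{\textrm{pair}}$ with the pairing $\{(v_{i^{(t)}_r},v_{j^{(t)}_r})\}_{r=1}^k$ induced by the current grid positions of the involved data qubits, which by~\eqref{eq:actionofcircuitqpair} brings each pair to a common vertex; (b)~apply each $\cM^{(t,r)}$ locally to the corresponding two paired qubits (now co-located at vertex $v_{i^{(t)}_r}$); (c)~invoke $\cQ_{\textrm{pair}}^{-1}$ with the same pairing to send the data qubits back to their original grid positions, so that the next layer can be addressed by the same embedding. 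In the 3D case, the parallel routing scheme required by step~(a) is supplied directly by Theorem~\ref{thm:main3Drouting}, after extending the pairing arbitrarily over any dummy vertices on the bottom floor when $n<L^2$. In the 2D case, placing the data qubits on the diagonal ensures that all of their $x$- and $y$-coordinates are pairwise distinct, so the non-intersection hypothesis~\eqref{eq:assumptionnonintersect} of Lemma~\ref{lem:parallelrouting2D} is automatically satisfied for whichever pairing~$\cQ$ prescribes; this is the key reason why the weaker existential guarantee underlying Corollary~\ref{cor:2Droutingnumber} is enough.

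By Lemma~\ref{lem:qubitroutingsimple}, both $\cQ_{\textrm{pair}}$ and $\cQ_{\textrm{pair}}^{-1}$ are adaptive circuits of constant quantum depth that act locally on the underlying grid, and substep~(b) is a single layer of two-qubit operations each supported at one vertex. Hence every layer of~$\cQ$ is simulated by $O(1)$ depth of local operations, yielding total quantum depth $O(T)$. Correctness follows because $\cQ_{\textrm{pair}}^{-1}\circ(\bigotimes_r\cM^{(t,r)})\circ\cQ_{\textrm{pair}}$ implements $\cM^{(t)}$ on the data subsystems exactly; taking the marginal that traces out the auxiliary qubits and discards the entanglement-swapping outcomes then reproduces the output distribution of~$\cQ$. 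I do not expect any genuine technical obstacle, as the combinatorial heavy lifting has already been done in Theorem~\ref{thm:main3Drouting}, Lemma~\ref{lem:parallelrouting2D}, and Corollary~\ref{cor:2Droutingnumber}; the only real care needed is in bookkeeping the embedding so that the induced pairings always fall within the scope of those routing results, which the placement above arranges by design.
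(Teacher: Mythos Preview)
Your proposal is correct and follows essentially the same approach as the paper: embed the data qubits on the diagonal of $P_n\times P_n$ (2D) or the bottom floor of $P_{\lceil\sqrt{n}\rceil}\times P_{\lceil\sqrt{n}\rceil}\times P_{4\lceil\sqrt{n}\rceil}$ (3D), and replace each layer~$\cM^{(t)}$ by $(\cQ^{(t)}_{\textrm{pair}})^{-1}\circ\bigl(\bigotimes_r\cM^{(t,r)}\bigr)\circ\cQ^{(t)}_{\textrm{pair}}$. The only cosmetic difference is that where you pad with dummy qubits and extend the pairing to handle $n<L^2$ in 3D, the paper instead invokes Lemma~\ref{lem:restrictingsubsetroutability} to conclude directly that any size-$n$ subset of the bottom floor is parallel-routable; also, don't forget to count the $P_j$ registers required by $\cQ_{\textrm{pair}}$, though this does not affect the asymptotics.
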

\begin{proof}
We use the graphs
\begin{align}
G_{\mathrm{2D}}&=P_n\times P_n\qquad\textrm{ and }\qquad G_{\mathrm{3D}}=P_{\lceil \sqrt{n}\rceil}\times 
P_{\lceil \sqrt{n}\rceil}\times P_{4\lceil \sqrt{n}\rceil}\ ,
\end{align}
respectively. For $G_{\mathrm{2D}}$ we use the parallel-routable subset
\begin{align}
S_{\mathrm{2D}}=\{v_i:=(i,i)\}_{i=1}^n\ ,
\end{align}
see the proof of Corollary~\ref{cor:2Droutingnumber}. For the 3D~grid graph~$G_{\mathrm{3D}}$,
we use an arbitrary subset 
\begin{align}
S_{\mathrm{3D}}&=\{v_i\}_{i=1}^n\subseteq \left[\lceil \sqrt{n}\rceil\right]^2\times \{1\}
\end{align}
of size~$n$. Observe that such a set~$S_{\mathrm{3D}}$ is parallel-routable by Theorem~\ref{thm:main3Drouting} and Lemma~\ref{lem:restrictingsubsetroutability}.

We construct our circuits~$\cQ'_{\mathrm{2D}}$ and $\cQ'_{\mathrm{3D}}$ by replacing, in each layer~$t\in [T]$ of the original adaptive circuit~$\cQ$, the operation~$\cM^{(t)}$ (see Eq.~\eqref{eq:generallayerm}) 
 by a geometrically local adaptive circuit~$\widehat{\cM}^{(t)}$. To do so, we use -- in addition to the 
``computational'' qubits $Q_1,\ldots,Q_{2k}$ --
 a total of
\begin{align}
n^{\textrm{aux}}&=2k+2|E|\label{eq:numberofauxqubits}
\end{align}
 auxiliary qubits with registers labeled~$\{P_j\}_{j\in [2k]}$ and $\{R^e_u,R^e_v\}_{e=\{u,v\}\in E}$. These are arranged as described in Section~\ref{sec:entanglementswappingcircuit}.
We note that the graphs~$G_{\mathrm{2D}}$ and $G_{\mathrm{3D}}$ have
\begin{align}
|E_{\mathrm{2D}}|&=2n^2 - 2n \qquad\textrm{ and } \qquad |E_{\mathrm{3D}}| = 12 \lceil \sqrt{n} \rceil^3 - 9 \lceil \sqrt{n} \rceil^2
\end{align}
edges, respectively, leading (together with~\eqref{eq:numberofauxqubits})
to the claimed total numbers~$n^{\textrm{tot}}=n+n^{\textrm{aux}}$ of qubits, cf.~Eq.~\eqref{eq:numberofqubitslocalizedcircuits}.

Let~$\{(i_r^{(t)},j_r^{(t)})\}_{r=1}^k$ be 
the pairing 
relevant at time step (layer)~$t\in [T]$, and let 
\begin{align}
\cQ^{(t)}_{\textrm{pair}}&=
\cQ_{\textrm{pair}}\left(\{(i_r^{(t)},j_r^{(t)})\}_{r=1}^k\right)\qquad \textrm{ and }\qquad (\cQ^{(t)}_{\textrm{pair}})^{-1}=
\cQ^{-1}_{\textrm{pair}}\left(\{(i_r^{(t)},j_r^{(t)})\}_{r=1}^k\right)\ \label{eq:qpairtpair} 
\end{align}
be the constant-depth adaptive circuits introduced in Section~\ref{sec:entanglementswappingcircuit}.  Then the new circuits~$\cQ_{\mathrm{2D}}'$ respectively~$\cQ_{\mathrm{3D}}'$ are obtained by replacing -- for each~$t\in [T]$ -- the operation
~$\cM^{(t)}$ in the circuit~$\cQ$ 
by the composition
\begin{align}
\widehat{\cM}^{(t)}:=
(\cQ^{(t)}_{\textrm{pair}})^{-1}\circ
\left(\bigotimes_{r=1}^k \cM^{(t,r)}_{Q_{i^{(t)}_r}P_{i^{(t)}_r}}\right)
\circ \cQ^{(t)}_{\textrm{pair}}\ .
\end{align}
In other words, the qubits paired in layer~$t$ are placed next to each other by application of~$\cQ^{(t)}_{\textrm{pair}}$. Then  each 
operation~$\cM^{(t,r)}$, $r\in [k]$ can be applied locally. Subsequently,  the qubits are 
moved back to their original positions by application of $(\cQ^{(t)}_{\textrm{pair}})^{-1}$.  The new circuits then are obtained as the composition
\begin{align}
\cQ'&:=\widehat{\cM}^{(T)}\circ \cdots \circ \widehat{\cM}^{(1)}\ .
\end{align}
It is easy to check that this has all the claimed properties. In particular, for each $t\in [T]$, both  circuits~\eqref{eq:qpairtpair} are adaptive constant-depth circuits, see Lemma~\ref{lem:qubitroutingsimple}. This immediately implies that the quantum circuit depth of~$\cQ'$ is of order~$O(T)$.

 \end{proof}

\section{A parallel repetition theorem for fault-tolerant quantum buses\label{sec:parallelrepetitiontheorem}}
While the (ideal) circuits~$\cQ_{\mathrm{2D}}'$ and~$\cQ_{\mathrm{3D}}'$
realize an adaptive circuit~$\cQ$ with limited overhead in a local manner (see Theorem~\ref{thm:localizingcircuit}), this is clearly not the case for noisy, i.e., imperfect implementations. The most obvious obstacle here concerns the $k$-fold tensor product of Bell pairs~\eqref{eq:kfoldtensorproduct}
associated with a pairing 
(This is generated by the algorithm~$\cQ_{\textrm{entangle}}$ which is used as a subroutine): If
the implementation (of the subroutines~$\cQ_{\textrm{pair}}$ 
respectively~$\cQ_{\textrm{pair}}^{-1}$) is subject to noise, only a noisy version of this central resource state can be generated. Aggravating this  problem, entanglement swapping needs to be performed over extensive (typically polynomial) distances in our construction. Even for i.i.d.~Pauli noise on each qubit, the resulting states when using the ideal circuit~$\cQ_{\mathrm{2D}}'$ or~$\cQ_{\mathrm{3D}}'$ will have exponentially small fidelity with a (tensor product of) Bell state(s) as a function of the distance.

In prior work~\cite{choeLongrangeDataTransmission2022}, we have introduced a fault-tolerant alternative to entanglement-swapping over extensive distances to address this issue. The corresponding protocol  establishes long-range entanglement between two distant qubits even if all involved operations are noisy.
Since this can be combined with teleportation, we refer to the construction as a quantum bus.

Here we consider a more general class of quantum circuits satisfying a certain condition which is also satisfied by the quantum bus.
We show that any quantum circuit consisting of any number of circuits of this class executed in parallel is fault-tolerant. As a result, we have a characterization of the residual ``effective'' noise on the generated tensor product of Bell pairs when several  noisy buses are used in parallel.

In Section~\ref{sec:localstochasticnoise}, we discuss the considered noise model. 
In Section~\ref{sec:noisyimplementation}, we introduce the central notion of a noisy implementation of an adaptive circuit.
In Section~\ref{sec:singlebellstateprep}, we consider a class of fault-tolerant protocol preparing Bell states and check that the quantum bus introduced in Ref.~\cite{choeLongrangeDataTransmission2022} belongs to this class. 
In Section~\ref{sec:ftpreparationparallel}, we establish a parallel repetition theorem characterizing effective noise on the output of the parallel use of quantum circuits introduced in Section~\ref{sec:singlebellstateprep}.
In Section~\ref{sec:busmultiple}, we show as a corollary of the parallel repetition theorem that under any local stochastic noise of strength below a certain threshold, the parallel repetition of the quantum bus generates multiple long-range Bell pairs affected by constant-strength local stochastic noise.

\subsection{Local stochastic noise\label{sec:localstochasticnoise}}
The notion of local stochastic noise was pioneered by Gottesman~\cite{gottesmanConstantOverhead}. It generalizes the restricted noise model of i.i.d.~Pauli noise, and is physically motivated by the idea that physical sources of noise are typically such that errors are  exponentially suppressed in their weight.
Formally, consider a system of $n$ qubits indexed by~$[n]$. 
A random variable $I$ on the power set of $[n]$ is called a local stochastic set of parameter~$p$ if
\begin{align}
    \Pr[F \subseteq I] \leq p^\abs{I} \qquad \text{for all} \qquad F \subseteq [n] \ .
\end{align}
For an $n$-qubit Pauli operator~$E\in\cP_n$, we write~$\supp(E)\subseteq [n]$ for its support, i.e., the set of qubits where~$E$ acts non-trivially. For a random variable~$E$ on the $n$-qubit Pauli group, the support~$\supp(E)$ is a random subset of~$[n]$. The random variable $E$ is called local stochastic noise of strength $p$ if its support~$\supp(E)$ is a local stochastic set of parameter~$p$, i.e.,
\begin{align}
    \Pr[F \subseteq \supp(E)] \leq p^{\abs{F}} \qquad \text{for all} \qquad  F \subseteq [n]  \, .
\end{align}
We write $E \sim \cN(p)$ to denote  local stochastic noise of strength~$p$.

We often consider several errors~$E_1,\ldots,E_r$ each acting on $n$~qubits. In this case, we are dealing with a joint (not necessarily product) distribution~$P_{E_1\cdots E_r}$ 
on~$\cP_n^{\times r}$. In this situation, the expression $E_j\sim \cN(p_j)$ is a statement about the marginal distribution of the $j$-th error~$E_j$.

Local stochastic noise is compatible with Clifford circuits as expressed by simple computational rules, see e.g.,~\cite[Lemma 11]{bravyiQuantumAdvantageNoisy2020} or~\cite{fawziConstantOverheadQuantum2018}.
These rules allow to ``commute'' local stochastic errors forwards or backwards in time through such a circuit. We give quantitative statements in Appendix~\ref{sec:lemmasnoisycircuit}  (see Lemma~\ref{lem:propertieslocalstochasticnoise}). There we also discuss consequences e.g., for entanglement swapping and teleportation under noise.

\subsection{Noisy implementations of general adaptive circuits\label{sec:noisyimplementation}}
Here we formally state the notion of noisy implementation of an adaptive quantum circuit. Recall from Section~\ref{sec:defadaptivequantumcircuit}
that a  general adaptive quantum circuit~$\cQ$ on $n$ qubits can be written as a composition~
\begin{align}
    \cM^{(T)} \circ \dots \circ \cM^{(1)}
\end{align} where each $n$-qubit operation~$\cM^{(t)}$ is composed of the parallel application of one- and two-qubit operations. 
Let $\{E^{(t)}\}_{t=0}^T$ be a $(T+1)$-tuple of $n$-qubit Pauli errors. A noisy implementation of~$\cQ$ with errors~$\{E^{(t)}\}_{t=0}^T$ is then defined as the composition
\begin{align}
    \label{eq:noisyimplementation}
    \cU_{E^{(T)}}\circ\cM^{(T)} \circ  \dots \circ \cU_{E^{(1)}}\circ\cM^{(1)} \circ \cU_{E^{(0)}}
\end{align}
where $\cU^{(t)}_{E^{(t)}}(\rho)=E^{(t)}\rho (E^{(t)})^\dagger$ is  unitary evolution associated with~$E^{(t)}$ for all $t \in [T]$.

Eq.~\eqref{eq:noisyimplementation} may be interpreted as follows: The error~$E^{(0)}$
could be the result of an imperfect preparation procedure for preparing the initial state~$\ket{0^n}$. The error~$E^{(t)}$ describes the effect of a noisy implementation of~$\cM^{(t)}$
(although some of this effect may also be captured by the error~$E^{(t-1)}$). 
Recall that~$\cM$ consists of one- and two-qubit operations applied simultaneously. The role of the error depends on the operation: When preparing a single qubit state~$\ket{0}$
or $T\ket{+}$, the corresponding part (factor) of the Pauli error~$E^{(t)}$ corresponds to the preparation of a corrupted version. For single- or two-qubit Cliffords or the identity operation (``wait locations''), the error amounts to an incorrect implementation of the corresponding unitary. Finally, for  a computational basis measurement, the error corresponds to an erroneous bit-flip of the measurement result: Here only Pauli-$X$-errors are relevant. In summary, errors in a noisy implementation of the circuit~$\cQ$ 
describe fault processes related to all aspects involved in its execution, including preparation, unitary evolution, storage and measurements.

A {\em noisy implementation of $\cQ$ 
with local stochastic noise of strength~$p$}
now refers to  
 a process of the form~\eqref{eq:noisyimplementation}, 
 where  $E^{(t)}\sim\cN(p)$ is local stochastic noise of strength~$p$.
In other words, it is obtained by randomly drawing an~$(T+1)$-tuple~$\{E^{(t)}\}_{t=1}^T$ of $n$-qubit Pauli errors according to a joint distribution (having constraints on the marginals), inserting these errors between gate layers, i.e., running the corresponding noisy implementation of~$\cQ$ with errors~$\{E^{(t)}\}_{t=1}^T$. As mentioned above, the random variables~$\{E^{(t)}\}_{t=1}^T$ may not be independent.

\subsection{Fault-tolerant preparation of a Bell state\label{sec:singlebellstateprep}}
For concreteness, let us give the argument for state preparation protocols, which is the case of interest for our purposes. More specifically, let $\pi$ be an adaptive quantum circuit on~$N$ qubits  which prepares a target (stabilizer) state~$\Phi\in (\mathbb{C}^2)^{\otimes r}$ as follows:
\begin{enumerate}
\item\label{it:stepclifford}
First,  a constant-depth Clifford circuit~$W$ is applied to $\ket{0^N}$. This results in a short-range entangled resource state~$\Psi\in (\mathbb{C}^2)^{\otimes N}$.
\item\label{it:stepmeasurement}
Subsequently, Pauli measurements (i.e., von Neumann measurements associated with 
the eigenbases of different single-qubit Pauli operators) are applied to $N-r$~qubits, resulting in a measurement results~$z\in \{0,1\}^{N-r}$.
\item\label{it:steppostprocess}
A Pauli correction~$C(z)\in \cP_r$ depending (deterministically) on the measurement results~$z$ (i.e., a classically controlled Pauli) is applied to the remaining $r$~qubits. We require that~$C:\{0,1\}^{N-r} \to \cP_r$ has the following properties: First, there is an efficient (classical) algorithm computing~$C(z)$ from~$z$. Second, if Steps~\ref{it:stepmeasurement} and~\ref{it:steppostprocess} are applied to the state~$E \Psi$ where~$E \in \cP_N$, then~$C(z)$ is determined by~$E$.
\end{enumerate}

Now suppose we apply Steps~\ref{it:stepmeasurement} and~\ref{it:steppostprocess}  to the state~$E\Psi$.
 It is straightforward to check that the output of the circuit then is a Pauli-corrupted version~$E_{\textrm{eff}}(E)\Phi$
of the target state~$\Phi$, where the ``effective error''  $E_{\textrm{eff}}(E)$ depends deterministically on the error~$E$.
We are interested in errors~$E$ that lead to a non-trivial overall effective error~$E_{\textrm{eff}}$. To this end, let us define 
\newcommand*{\failset}{\mathsf{FAIL}}
\begin{align}\label{eq:defoffailset}
\failset:=\left\{E\in \cP_N\ |\ E_{\textrm{eff}}(E)\Phi\not\propto \Phi\right\} 
\end{align}
as the set of such errors, i.e., errors~$E$ for which neither~$E_{\textrm{eff}}(E)$ nor $-E_{\textrm{eff}}(E)$ belongs to the stabilizer group of~$\Phi$. If $E$ is a stochastic Pauli error, then the probability that the procedure~$\pi$ does not prepare the desired target state~$\Phi$ is equal to~$\Pr\left[\pi\textrm{ fails}\right]=\Pr[E\in\failset]$.

Robustness of~$\pi$ against local stochastic noise~$E\sim\cN(p)$ follows from certain combinatorial features of the set~$\failset$. It is useful to formalize this as follows. A circuit~$\pi$ will be called $(p_0,f)$-robust  for a threshold error strength~$p_0>0$ and a function $f:[0,p_0]\rightarrow [0,1]$ if there is a family~$\{D_m\}_{m\in\cM}\subset 2^{[N]}$ of subsets of~$[N]$ such that the following two conditions are satisfied:
\begin{enumerate}[(a)]
\item\label{it:decompositionpropertyone}
For every $E \in \failset$ there exists an element $m \in \cM$ such that $D_m \subseteq \supp(E)$.
\item\label{it:decompositionpropertytwo}
We have 
\begin{align}
\sum_{m\in\cM}p^{|D_m|} &\leq f(p)\qquad\textrm{ for all }\qquad p\leq p_0\ .
\end{align}
\end{enumerate}
The significance of this definition is the following immediate consequence of the union bound.
\begin{lemma}\label{lem:singlefaulttolerantcircuit}
A $(p_0,f)$-robust circuit~$\pi$ fault-tolerantly prepares the state~$\Phi$ 
in the presence of local stochastic noise~$E\sim\cN(p)$ in the sense that
\begin{align}
\Pr\left[\pi\textrm{ fails}\right]&\leq f(p)\qquad\textrm{ for any }\qquad p\leq p_0\ .
\end{align}
\end{lemma}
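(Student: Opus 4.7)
The plan is a direct union-bound argument exploiting the two defining properties of $(p_0,f)$-robustness together with the definition of local stochastic noise. The entire proof should be only a few lines.

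First, I would observe that $\pi$ fails precisely when the cumulative Pauli error $E$ affecting the circuit lies in the set $\failset$ of Eq.~\eqref{eq:defoffailset}, so $\Pr[\pi\text{ fails}] = \Pr[E\in\failset]$. Using Property~\eqref{it:decompositionpropertyone} of the robustness definition, every $E\in\failset$ satisfies $D_m\subseteq\supp(E)$ for at least one $m\in\cM$, which gives the inclusion
\begin{align}
\{E\in\failset\} \ \subseteq\ \bigcup_{m\in\cM}\left\{D_m\subseteq\supp(E)\right\}\ .
\end{align}

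Next, I would apply the union bound to obtain
\begin{align}
\Pr[\pi\text{ fails}] \ \leq\ \sum_{m\in\cM} \Pr\!\left[D_m\subseteq\supp(E)\right]\ ,
\end{align}
and then invoke the hypothesis $E\sim\cN(p)$ to bound each summand by $p^{|D_m|}$. Finally, Property~\eqref{it:decompositionpropertytwo} of the robustness definition yields $\sum_{m\in\cM} p^{|D_m|}\leq f(p)$ for $p\leq p_0$, giving the claim.

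There is no real obstacle here: the statement is essentially definitional, packaging a union bound into a lemma that will be re-used in the parallel repetition theorem of Section~\ref{sec:ftpreparationparallel}. The only minor subtlety worth making explicit is that the family $\{D_m\}_{m\in\cM}$ need not be disjoint (and typically is not, as different fault configurations can trigger overlapping subsets), which is precisely why the union bound, rather than additivity, is the right tool. The harder conceptual work has already been done in setting up the notion of $(p_0,f)$-robustness so that Conditions~\eqref{it:decompositionpropertyone}--\eqref{it:decompositionpropertytwo} neatly decouple ``which fault patterns cause failure'' from ``how likely such patterns are under $\cN(p)$''.
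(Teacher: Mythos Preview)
Your proposal is correct and matches the paper's own proof essentially line for line: both identify $\Pr[\pi\text{ fails}]=\Pr[E\in\failset]$, apply the union bound via Property~\eqref{it:decompositionpropertyone}, use $E\sim\cN(p)$ to bound each term by $p^{|D_m|}$, and conclude with Property~\eqref{it:decompositionpropertytwo}. Your additional remark about non-disjointness of the $D_m$ is a nice clarification but not needed for the argument.
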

\begin{proof}
Consider local stochastic noise~$E\sim\cN(p)$ with strength~$p\leq p_0$. Then 
 we have
\begin{align}
\Pr\left[E\in\failset \right]&\leq \sum_{m\in\cM} \Pr\left[D_m \subseteq \supp(E)\right]\qquad\textrm{ by the union bound and~\eqref{it:decompositionpropertyone}}\\
&\leq \sum_{m\in\cM} p^{|D_m|}\qquad\textrm{ by the assumption }E\sim\cN(p)\\
&\leq f(p)\qquad\textrm{ by~\eqref{it:decompositionpropertytwo}}\ .
\end{align}
\end{proof}
In Ref.~\cite{choeLongrangeDataTransmission2022}, 
we followed the proof strategy outlined here to show that a certain circuit~$\pi$ fault-tolerantly prepares long-range entanglement, that is, a Bell state between two distant qubits. Following the seminal work~\cite{RaussendorfBravyiHarrington05}, our construction relies on the use of a 3D~cluster state~$\Pi$ on a certain cubic lattice~$\cL_{\textrm{cluster}}\subset\mathbb{R}^3$
 specified by its linear dimensions~$d\times d\times R$ (here the Bell state is between qubits at opposite ends, i.e., at distance~$R$).  For ease of presentation, we will use the grid graph~$P_d\times P_d\times P_R$
in the following and identify 
certain sites of the cluster state lattice~$\cL_{\textrm{cluster}}$ with vertices of this grid graph. That is, we assume that there is a qubit at each vertex of~$P_d\times P_d\times P_R$.
Then we can express the result of~\cite{choeLongrangeDataTransmission2022} as follows.

\begin{corollary}[Quantum bus architecture]\label{cor:quantumbus}
For any integer~$R\geq 2$ and any $\Delta\in\mathbb{N}$ satisfying
\begin{align}
\Delta\geq 8\log R\ ,\label{eq:deltainequalitycondition}
\end{align}
there is a circuit~$\pi(\Delta,R)$  with the following properties:
\begin{enumerate}[(i)]
\item 
The circuit~$\pi(\Delta,R)$ acts on $N=\Delta^2R$ qubits arranged on the grid graph~$P_\Delta\times P_\Delta\times P_R$, with one qubit placed on each vertex.
\item 
In the absence of errors, the circuit~$\pi(\Delta,R)$ prepares the two-qubit Bell state~$\ket{\Phi}=\frac{1}{\sqrt{2}}(\ket{00}+\ket{11})$ on the qubits~$Q_1Q_2$, where $Q_1$ and $Q_2$ located at the sites~$(1,1,1)$ and $(1,1,R)$, respectively. 
\item The circuit $\pi(\Delta,R)$ follows Steps~\ref{it:stepclifford}--\ref{it:steppostprocess} in Section~\ref{sec:singlebellstateprep}: It
\begin{enumerate}[(1)]
\item first prepares a short-range resource state~$\Psi\in (\mathbb{C}^2)^{\otimes N}$ by applying a Clifford circuit~$W$ to the product state~$\ket{0^N}$. The circuit~$W$ is geometrically local on the graph~$P_\Delta\times P_\Delta\times P_R$ and its depth is upper bounded by~$10$.
\item
 then performs single-qubit Pauli measurements on all qubits except~$Q_1Q_2$, obtaining an outcome~$z\in \{0,1\}^{N-2}$.
\item
 finally  applies a two-qubit correction~$C(z)=Z^{\alpha(z)}X^{\beta(z)}$ to qubit~$Q_2$, where $\alpha(z),\beta(z) \in \{0,1\}$ are computed by efficient (classical) algorithms from $z$.
\end{enumerate}
\item 
The circuit $\pi(\Delta,R)$ is  $(p_0,f(p)=p/p_0)$-robust  where $p_0:=1/5004$.
\end{enumerate}
\end{corollary}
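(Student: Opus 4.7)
The plan is to derive this corollary from the construction and analysis developed in our prior work~\cite{choeLongrangeDataTransmission2022}, recast in the $(p_0,f)$-robustness language of Section~\ref{sec:singlebellstateprep}. First, I would specify $\pi(\Delta,R)$ as the measurement-based protocol associated with a three-dimensional cluster state on the grid graph $P_\Delta\times P_\Delta\times P_R$, following the Raussendorf--Bravyi--Harrington construction~\cite{RaussendorfBravyiHarrington05} with a single defect tube running along the third axis whose endpoints coincide with the corner qubits $Q_1,Q_2$. The Clifford preparation $W$ initializes each qubit in $\ket{+}$ (via $\ket{0}$ followed by Hadamard) and then applies controlled-$Z$ gates between all neighboring pairs on the cubic lattice; grouping these gates by direction and parity gives depth at most~$10$ and is geometrically local by construction. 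All qubits except $Q_1,Q_2$ are then measured in the single-qubit Pauli bases dictated by the RBH defect prescription, and the correction $C(z)=Z^{\alpha(z)}X^{\beta(z)}$ on $Q_2$ is computed by a minimum-weight perfect matching decoder on the associated syndrome graph, which is classically efficient.

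Next, I would identify the set $\failset$ defined in~\eqref{eq:defoffailset}. In the RBH framework the effective error $E_{\textrm{eff}}(E)$ is a non-trivial logical operator precisely when $E$, combined with the matching output of the decoder, produces a connected error chain that spans the bulk between opposite primal or dual boundaries, or winds non-trivially around the defect tube. For each such topologically non-trivial chain $m$, let $D_m\subseteq[N]$ denote the minimal set of qubit and measurement locations whose joint corruption realizes $m$ after decoding. The first condition of $(p_0,f)$-robustness then holds by construction, since every $E\in\failset$ admits at least one dangerous chain $m$ with $D_m\subseteq\supp(E)$.

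The central step is to verify the second condition, that is, to bound $\sum_m p^{|D_m|}\leq p/p_0$. Using the standard combinatorial bound that the number of self-avoiding chains of length $\ell$ on a cubic lattice starting from a fixed vertex is at most $C\mu^\ell$ for some constant $\mu$, and observing that any topologically non-trivial chain has length at least $\min(\Delta,R)\geq \Delta$ (with the transverse failure modes dominating the tail), the sum decomposes as
\begin{align}
\sum_m p^{|D_m|} \ \leq\ N \sum_{\ell\geq \Delta} C (\mu p)^\ell \ \leq\ \frac{N\, C\, (\mu p)^\Delta}{1-\mu p},
\end{align}
where $N=\Delta^2 R$. The hypothesis $\Delta\geq 8\log R$ absorbs the polynomial prefactor $\Delta^2 R$ into the exponential decay $(\mu p)^\Delta$ once $\mu p\leq 1/2$, and an explicit tracking of constants along the lines of~\cite{choeLongrangeDataTransmission2022} yields the stated threshold $p_0=1/5004$ together with the linear bound $f(p)=p/p_0$.

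The main obstacle is the precise combinatorial bookkeeping needed to reduce all sources of error---imperfect preparation, faulty controlled-$Z$ gates, and measurement bit-flips---to an effective Pauli corruption of the cluster-state resource, using the Clifford error-propagation rules of Lemma~\ref{lem:propertieslocalstochasticnoise}, and to verify that this reduction preserves the local stochastic property with only a constant-factor loss in the strength parameter. Once this is done, Lemma~\ref{lem:singlefaulttolerantcircuit} immediately yields the stated robustness bound.
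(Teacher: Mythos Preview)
Your overall plan---specify $\pi(\Delta,R)$ as the RBH cluster-state protocol and establish $(p_0,f)$-robustness via a Peierls-type chain-counting argument---is the right underlying mechanism, but it differs substantially from the paper's proof and has two genuine gaps.

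\textbf{Structural difference.} The paper does not re-run the combinatorial analysis you sketch. Instead it treats the result of~\cite{choeLongrangeDataTransmission2022} as a black box: it first records (as Claim~\ref{claim:reformulatedbus}) that the cluster-state protocol~$\pi_{\textrm{cluster}}(d,R)$ is $(1/5004,\,p\mapsto 5004p)$-robust whenever $R\geq 3$ is odd and $4\log R\leq d$, obtaining this condition by simplifying the inequality $R\leq \frac{1}{d(10\sqrt{p_0})^{d-2}}$ from the cited theorem. Corollary~\ref{cor:quantumbus} is then reduced to this claim by a case analysis on~$R$ together with the substitution $d:=\lfloor(\Delta+1)/2\rfloor$. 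Your approach instead re-derives the threshold bound from first principles; this is more work, and the specific constant $p_0=1/5004$ and the linear form $f(p)=p/p_0$ come out of the explicit constants in~\cite{choeLongrangeDataTransmission2022}, not from the generic bound $N\cdot C(\mu p)^\Delta/(1-\mu p)$ you write down.

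\textbf{Missing case analysis.} The cited construction requires $R\geq 3$ odd. The paper handles $R=2$ by a direct two-qubit circuit $W=\mathsf{CNOT}(H\otimes I)$ (trivially $(1,2p)$-robust), and handles even $R\geq 4$ by running $\pi(\Delta,R-1)$ on the sublattice $P_\Delta\times P_\Delta\times P_{R-1}$ and appending a $\mathsf{SWAP}_{Q_2 Q_2'}$ to move the output from $(1,1,R-1)$ to $(1,1,R)$; this extra gate is what brings the depth bound from~$9$ to~$10$. Your proposal does not address either case, so as written it does not establish the statement for all integers $R\geq 2$.

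\textbf{A technical slip.} You assert that any non-trivial chain has length at least $\min(\Delta,R)\geq\Delta$, i.e., $R\geq\Delta$. Under the hypothesis $\Delta\geq 8\log R$ this fails already for small~$R$ (e.g., $R=10$ forces $\Delta\geq 27>R$). The correct lower bound on the chain length is governed by the transverse code distance (of order~$\Delta$), independent of~$R$; the $\min(\Delta,R)$ is neither needed nor true.
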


Corollary~\ref{cor:quantumbus} is  a direct consequence of Ref.~\cite[Theorem 7.3]{choeLongrangeDataTransmission2022}. The latter result gives a fault-tolerant scheme for preparing a Bell pair on two qubits~$Q_1,Q_2$ at the opposite ends (i.e., distance~$R$) of a certain lattice~$\cL_{\textrm{cluster}}(d,R)$ of linear dimensions of the form $O(d)\times O(d)\times R$ specified by certain pairs of parameters~$(d,R)$ (Here~$d$ is the distance of an underlying surface code.) 
The protocol proceeds as described in Section~\ref{sec:singlebellstateprep}, that is:
\begin{enumerate}
\item 
It starts with an $N=\Theta(d^2R)$-qubit  state~$\ket{0^N}$ where each qubit is associated with a vertex of~$\cL_{\textrm{cluster}}$.   Then a depth-$9$ Clifford circuit~$W=W(d,R)$ is applied to prepare a resource state~$\Psi$. The state~$\Psi$ is the cluster state on~$\cL_{\textrm{cluster}}$ up to local unitaries.
\item
Every qubit except the two qubits~$Q_1$ and~$Q_2$
are measured in the computational basis, yielding a measurement result~$z \in \{0,1\}^{N-2}$. 
\item 
A Pauli correction $Z^{\alpha(z)}X^{\beta(z)}$ is applied to qubit~$Q_2$,
where  the functions
\begin{align}
    \alpha,\beta:\{0,1\}^{N-2} \rightarrow \{0,1\}  
\end{align}
are efficiently computable. 
\end{enumerate}
Let us call this protocol~$\pi_{\textrm{cluster}}(d,R)$. 
The proof of~\cite[Theorem 2.1]{choeLongrangeDataTransmission2022} implies the following:
\begin{claim}\label{claim:reformulatedbus}
 Suppose that $R\geq 3$ is odd and 
\begin{align}
4\log R\leq d\ .\label{eq:logRdrelationcondition}
\end{align}
Then the protocol~$\pi_{\textrm{cluster}}(d,R)$ is $(p_0,f(p)=p/p_0)$-robust with $p_0:=1/5004$.
\end{claim}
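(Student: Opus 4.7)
The plan is to extract the proof from~\cite{choeLongrangeDataTransmission2022} and repackage the union bound in it in the precise combinatorial form required by the $(p_0,f)$-robustness definition. Concretely, the prior work shows that the ``effective Pauli'' produced by the postprocessing step is a stabilizer of $\Phi$ unless the input Pauli error~$E\in\cP_N$ contains, on its support, certain obstruction configurations living on the 3D cluster-state lattice~$\cL_{\mathrm{cluster}}(d,R)$. These obstructions take the form of error chains/membranes (primal for one logical Pauli, dual for the other) that connect opposite boundaries of the slab—i.e., homologically nontrivial relative cycles. This geometric characterization is exactly what is needed to populate the family~$\{D_m\}_{m\in\cM}$ in the definition of robustness.

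First, I would isolate the failure-set structure. Denote by $\failset$ the set of Pauli errors that make $\pi_{\mathrm{cluster}}(d,R)$ output a state not proportional to~$\Phi$. From the syndrome/decoding analysis in~\cite{choeLongrangeDataTransmission2022}, every~$E\in\failset$ must cover at least one minimal relative cycle of the 3D primal or dual complex; let~$\cM$ index this collection of minimal cycles, and set $D_m$ equal to the set of qubits lying on the $m$-th cycle. Property~\ref{it:decompositionpropertyone} ($D_m\subseteq\supp(E)$ for some $m$ whenever $E\in\failset$) is then immediate from the characterization of uncorrectable errors. Because the protocol uses a distance-$d$ surface code in the transverse directions, every such cycle has weight~$|D_m|\geq d$.

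Next I would verify the series bound~\ref{it:decompositionpropertytwo}. Grouping the obstructions by their weight, one has
\begin{align}
\sum_{m\in\cM} p^{|D_m|} \;\leq\; \sum_{\ell\geq d} N_\ell\, p^{\ell},
\end{align}
where $N_\ell$ is the number of minimal cycles of weight~$\ell$. Standard self-avoiding-walk/Peierls-type counts on the cubic complex (as used in~\cite{RaussendorfBravyiHarrington05} and reproduced in~\cite{choeLongrangeDataTransmission2022}) yield a bound $N_\ell\leq R\cdot \kappa^\ell$ for an explicit lattice constant~$\kappa$; summing the resulting geometric series gives a bound of the form $R\cdot(\kappa p)^d/(1-\kappa p)$. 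Choosing $p_0$ so that $\kappa p_0\leq 1/2$ and absorbing the factor~$R$ using $d\geq 4\log R$ (i.e.\ $R\leq 2^{d/4}$) collapses the whole expression to a multiple of~$p$; calibrating constants exactly as in~\cite[Theorem~2.1]{choeLongrangeDataTransmission2022} yields $p_0=1/5004$ and the linear envelope $f(p)=p/p_0$.

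The main obstacle will be the bookkeeping rather than any new idea: one must verify that \emph{both} primal and dual obstructions are simultaneously controlled by the same constant~$p_0$, that the combined counting constant (accounting for the cluster state resource state preparation, primal/dual slabs, and the two possible logical error channels) indeed gives $\kappa$ small enough to reach $5004$, and that the condition~\eqref{eq:logRdrelationcondition} is sharp enough to kill the $R$-prefactor inside the geometric series. Once these ingredients are assembled verbatim from~\cite{choeLongrangeDataTransmission2022}, the claim follows directly from Lemma~\ref{lem:singlefaulttolerantcircuit}.
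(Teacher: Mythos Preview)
Your proposal is correct in spirit but takes a substantially longer route than the paper. The paper's proof is almost entirely algebraic: it quotes from~\cite{choeLongrangeDataTransmission2022} the statement that $\pi_{\mathrm{cluster}}(d,R)$ is $(p_0,f(p)=p/p_0)$-robust with $p_0=1/5004$ whenever $R\geq 3$ is odd, $d\geq 3$, and
\begin{align}
R \leq \frac{1}{d\,(10\sqrt{p_0})^{d-2}}\ ,
\end{align}
and then spends the rest of the proof showing by elementary manipulation (taking logarithms and using the numerical inequality $d/4\leq (d-2)\log\frac{1}{10\sqrt{p_0}}-\log d$ for $d\geq 3$) that the hypothesis $4\log R\leq d$ implies this condition. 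The requirement $d\geq 3$ is then dropped because $R\geq 3$ forces $4\log R>4$.

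By contrast, you re-enter the cycle-counting/Peierls machinery of~\cite{choeLongrangeDataTransmission2022} to build the family $\{D_m\}$ and verify the series bound directly under the simplified hypothesis. This is valid and has the merit of making the combinatorial witnesses for $(p_0,f)$-robustness explicit, but it duplicates work already done in the cited paper and requires recalibrating all the lattice constants. The paper's approach is more modular: it treats the prior result (including the existence of $\{D_m\}$, which is implicit in how that proof was structured) as a black box parametrized by $(d,R)$, and reduces the claim to checking one inequality between the two parameter regimes.
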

\begin{proof}
The proof of~\cite[Theorem 2.1]{choeLongrangeDataTransmission2022}
shows that the protocol $\pi_{\textrm{cluster}}(d,R)$ is $(p_0,f(p)=p/p_0)$-robust with $p_0:=\frac{1}{5004}$ and $f(p) := 5004p$
under the  assumption that $R\geq 3$ is odd, $d\geq 3$ and 
    \begin{align}
    R &\leq \frac{1}{d(10\sqrt{p_0})^{d-2}}\ .\label{eq:inequalitynecessaryconditionbus}
\end{align}
Eq.~\eqref{eq:inequalitynecessaryconditionbus} can be written as
\begin{align}
\log d+\log R&\leq (d-2)\log \frac{1}{10\sqrt{p_0}}=:(d-2)\alpha\ .
\end{align}
With the inequality $d/4\leq (d-2)\alpha-\log d$ (valid for $d\geq 3$) we conclude that Eq.~\eqref{eq:inequalitynecessaryconditionbus} is satisfied if
\begin{align}
4\log R&\leq d\ ,
\end{align}
for $d\geq 3$. Since we assumed $R\geq 3$, and thus $4\log R>4$, we can omit the requirement $d\geq 3$.  This shows that the circuit~$\pi_{\textrm{cluster}}(d,R)$ is $(p_0,f(p)=p/p_0)$-robust for any odd~$R\geq 3$ and $d\geq 4\log R$.
\end{proof}

 Let us now argue that Claim~\ref{claim:reformulatedbus} implies Corollary~\ref{cor:quantumbus}.

\begin{proof}[Proof of Corollary~\ref{cor:quantumbus}]
Suppose $(R,\Delta)$ satisfying~\eqref{eq:deltainequalitycondition} are given. We separately analyze the following cases:
\begin{description}
\item[Suppose $R=2$.]
We give a circuit~$\pi(\Delta,R)$ which is $(1,f(p)=2p)$-robust, with a unitary~$W$ of depth~$2$.
The circuit~$\pi(\Delta,R)$ is non-adaptive, and simply consists of the Clifford unitary~$W=\mathsf{CNOT}(H\otimes I)$ applied to the two qubits~$Q_1,Q_2$. Under local stochastic noise~$E\sim \cN(p)$ applied after the application of~$W$, the resulting state~$E\ket{\Phi}$ is Pauli-corrupted if  $E_{Q_2}E_{Q_1}^T$ is  non-trivial. We can upper bound this by the probability that either qubit experiences an error, i.e.,
\begin{align}
\Pr\left[\pi\textrm{ fails}\right] &\leq \Pr\left[\mathsf{supp}(E)\cap \{Q_1,Q_2\}\neq \emptyset\right]\\
&\leq \Pr\left[Q_1\in \mathsf{supp}(E)\right]+ \Pr\left[Q_2\in \mathsf{supp}(E)\right]\\
&\leq 2p
\end{align}
where we used the union bound and local stochasticity.
The claim follows from this. 

\item[Suppose~$R\geq 3$ is odd.]
We show that there is a protocol
$\pi(\Delta,R)$ which is $(1/5004,f(p)=5004p)$-robust, with a unitary~$W$ of depth~$9$.  Set
$d:=\big\lfloor \frac{\Delta+1}{2}\big\rfloor$. 
Then $2d-1\leq \Delta$, and thus the grid graph~$P_{2d-1}\times P_{2d-1}\times P_R$ is a subgraph of~$P_{\Delta}\times P_{\Delta}\times P_R$. Let $\pi(\Delta,R)$ be the protocol obtained by applying~$\pi_{\textrm{cluster}}(d,R)$ to the qubits belonging to this subgraph. 
To check that~$\pi(\Delta,R)$ is $(1/5004,f(p)=5004p)$-robust, it suffices  to check that condition~\eqref{eq:logRdrelationcondition}
for Claim~\ref{claim:reformulatedbus} 
  is satisfied. But this follows immediately from the inequality~$\frac{\Delta}{2}\leq \lfloor\frac{\Delta+1}{2}\rfloor$ and assumption~\eqref{eq:deltainequalitycondition}.  

\item[Suppose~$R\geq 4$ is even.]
We give a protocol~$\pi(\Delta,R)$ which is $(1/5004,f(p)=5004p)$-robust, with a unitary~$W$ of depth~$10$. 
Suppose $(R,\Delta)$ satisfying~\eqref{eq:deltainequalitycondition} are given with $R\geq 4$ even.
Set $R':=R-1$. Then $P_\Delta\times P_\Delta\times P_{R'}$ is a subgraph of~$P_\Delta\times P_\Delta\times P_R$. To define the circuit~$\pi(\Delta,R)$ in this case, consider the protocol~$\pi(\Delta,R')$ which could be applied to this subgraph (since~$R'\geq 3$ is odd by definition). 
Let $W(\Delta,R')$ be the corresponding unitary, and let $Q_1=(1,1,1)$ and $Q_2'=(1,1,R')$.
Applying~$\pi(\Delta,R')$ would generate entanglement between $Q_1$ and $Q_2'$, but our objective is to entangle the qubit~$Q_1$ with the qubit~$Q_2=(1,1,R'+1)=(1,1,R)$. It is easy to check that this can be achieved by a slight modification, leading to the following definition of~$\pi(\Delta,R)$:
\begin{enumerate}
\item
Set~$W(\Delta,R):=\mathsf{SWAP}_{Q_2Q_2'}W(\Delta,R')$, i.e., succeed the application of the Clifford unitary~$W(\Delta,R')$ with a $\mathsf{SWAP}$-gate between $Q_2$ and $Q_2'$. 
\item
Perform all single-qubit measurements and obtain measurement results~$z$ (as in the protocol~$\pi(\Delta,R')$).
\item
Apply the correction~$C(z)=Z^{\alpha(z)}X^{\beta(z)}$ to qubit~$Q_2$ (instead of qubit~$Q'_2$).
\end{enumerate}
Condition~\eqref{eq:logRdrelationcondition} for $R'$ and $d$ is again satified here because 
\begin{align}
4\log R'&\leq 4\log R\qquad\textrm{ since }R=R'+1\\
&\leq  \frac{\Delta}{2}\qquad\textrm{ by the assumption~\eqref{eq:deltainequalitycondition}}\\
&\leq \left\lfloor\frac{\Delta+1}{2}\right\rfloor=d\ .
\end{align}
By Claim~\ref{claim:reformulatedbus}, this shows that the protocol~$\pi(\Delta,R)$ is $(1/5004,f(p)=5004p)$-robust. 
\end{description}
\end{proof}

\subsection{Fault-tolerant preparation of several Bell states in parallel\label{sec:ftpreparationparallel}}

The result of Lemma~\ref{lem:singlefaulttolerantcircuit} can be generalized to the case where  multiple quantum circuits are run in parallel. This is expressed by the following theorem:
\begin{theorem}[Parallel repetition]\label{thm:parallelrepetition}
For $j\in [k]$, let $\pi^{(j)}$ be an adaptive quantum circuit on $N^{(j)}$~qubits which prepares a target state~$\Phi^{(j)}\in (\mathbb{C}^2)^{\otimes r^{(j)}}$, and is $(p_0^{(j)},f^{(j)})$-robust.
Consider the circuit~$\pi=\pi^{(1)}\times\cdots \times \pi^{(k)}$ obtained by 
running each of these circuits in parallel on $N=\sum_{j=1}^{k}N^{(j)}$ qubits.
Then there are constants~$C>0$, $c\in (0,1)$ such that the following holds. 
Let $E\sim\cN(p)$ be local stochastic noise on~$N$ qubits with strength~$p\leq \min\{p_0^{(j)}\}_{j=1}^k =: p_0$. 
Then  the output state of~$\pi$ is 
$E_{\mathrm{eff}}\left(\bigotimes_{j=1}^k \Phi^{(j)}\right)$ where $E_{\mathrm{eff}}$ is local stochastic noise on $\sum_{j=1}^{k} r^{(j)}$ qubits of strength~
\begin{align}\label{eq:strengthofEffgeneral}
E_{\mathrm{eff}}\sim \cN(f(p)) \qquad \text{with} \qquad f(p) = \left(\max_{j \in [k]} f^{(j)}(p)\right)^{\frac{1}{r}} \ , \quad r = \max_{j \in [k]} r^{(j)} \ .
\end{align}
In particular, if there are constants $C > 0$ and $c \in (0,1)$ such that 
\begin{align}
    f^{(j)}(p) \leq C p^c\qquad\textrm{ for all }p \in [0, p_0]\textrm{ and }j \in [k]\ ,\label{eq:uniformupperboundnoisestrength}
\end{align} then
\begin{align}\label{eq:strengthofEffspecific}
    E_{\mathrm{eff}} \sim \cN(C^{1/r}p^{c/r}) \ 
\end{align}
whenever $p\leq p_0$. 
\end{theorem}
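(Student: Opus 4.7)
The plan is to lift the single-circuit analysis of Lemma~\ref{lem:singlefaulttolerantcircuit} to the parallel setting by exploiting the fact that the subcircuits~$\pi^{(j)}$ act on disjoint qubit registers. Because of this disjointness, any error~$E$ on the $N$~qubits restricts to an error $E^{(j)}$ on the qubits of~$\pi^{(j)}$, and the effective error factorizes as $E_{\mathrm{eff}}=\bigotimes_{j=1}^k E_{\mathrm{eff}}^{(j)}(E^{(j)})$ on the $\sum_{j} r^{(j)}$~output qubits. For any target subset~$F$ of output qubits I would first decompose $F=\bigsqcup_{j=1}^k F^{(j)}$ where $F^{(j)}$ is the intersection of $F$ with the output of $\pi^{(j)}$, and write $J(F):=\{j\in [k] \mid F^{(j)}\neq \emptyset\}$ for the set of subcircuits actually touched by $F$.

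Next, I would establish
\begin{align}
\Pr[F\subseteq \supp(E_{\mathrm{eff}})]\leq \prod_{j\in J(F)}f^{(j)}(p)
\end{align}
by the following union-bound argument. For each $j\in J(F)$, choosing a canonical representative of $E_{\mathrm{eff}}^{(j)}$ modulo the stabilizer of~$\Phi^{(j)}$ (for the Bell state of Corollary~\ref{cor:quantumbus} the correction is applied only to the designated output qubit~$Q_2$, so any non-empty support of $E_{\mathrm{eff}}^{(j)}$ signals failure), the event $F^{(j)}\subseteq \supp(E_{\mathrm{eff}}^{(j)})$ with $F^{(j)}\neq \emptyset$ implies $E^{(j)}\in \failset^{(j)}$. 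By $(p_0^{(j)},f^{(j)})$-robustness, the latter yields a witness $m^{(j)}\in \cM^{(j)}$ with $D_{m^{(j)}}^{(j)}\subseteq \supp(E^{(j)})$. A union bound over tuples $(m^{(j)})_{j\in J(F)}$, combined with the fact that $E\sim \cN(p)$ applied to the disjoint union $D:=\bigsqcup_j D_{m^{(j)}}^{(j)}$ gives $\Pr[D\subseteq \supp(E)]\leq p^{|D|}=\prod_j p^{|D_{m^{(j)}}^{(j)}|}$, yields
\begin{align}
\Pr[F\subseteq \supp(E_{\mathrm{eff}})]
&\leq \sum_{(m^{(j)})_{j\in J(F)}} \prod_{j\in J(F)}p^{|D_{m^{(j)}}^{(j)}|} \\
&= \prod_{j\in J(F)} \sum_{m\in \cM^{(j)}} p^{|D_m^{(j)}|} \\
&\leq \prod_{j\in J(F)} f^{(j)}(p).
\end{align}

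Finally, to reach the bound $f(p)^{|F|}$ asserted in~\eqref{eq:strengthofEffgeneral}, I would use that every $F^{(j)}$ contains at most $r^{(j)}\leq r$ elements, hence $|J(F)|\geq |F|/r$. Assuming $\max_j f^{(j)}(p)\leq 1$ for $p\leq p_0$ (which one may enforce by replacing each $f^{(j)}$ with $\min(f^{(j)},1)$, preserving its defining property~(b) of robustness), the previous display is bounded by $(\max_j f^{(j)}(p))^{|J(F)|}\leq (\max_j f^{(j)}(p))^{|F|/r}=f(p)^{|F|}$, which is exactly the local-stochastic bound $E_{\mathrm{eff}}\sim \cN(f(p))$. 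Statement~\eqref{eq:strengthofEffspecific} then follows at once by substituting the uniform bound~\eqref{eq:uniformupperboundnoisestrength}. The principal obstacle is that the marginal errors $\{E^{(j)}\}_{j\in [k]}$ are in general \emph{correlated} rather than independent; the key observation that makes the proof work is that local stochasticity of the \emph{joint} distribution already gives the product bound $p^{\sum_j |D_{m^{(j)}}^{(j)}|}$ on any disjoint union of subsets, which is precisely what allows the sum over tuples of witnesses to factor cleanly into the product $\prod_j \sum_m p^{|D_m^{(j)}|}$ without any appeal to independence.
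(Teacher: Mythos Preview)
Your proposal is correct and follows essentially the same approach as the paper's proof: factor the effective error along the disjoint subcircuits, observe that each touched subcircuit must fail and hence admits a witness~$D_{m^{(j)}}^{(j)}$, take a union bound over tuples of witnesses, and use local stochasticity of the \emph{joint} error~$E$ on the disjoint union of witnesses to factor the bound into~$\prod_{j} f^{(j)}(p)$. Your explicit use of the set~$J(F)$ is in fact slightly cleaner than the paper's ``WLOG the first $\ell$ instances fail'' phrasing, and you are right to flag both the choice of canonical representative for~$E_{\mathrm{eff}}^{(j)}$ (so that non-empty support implies failure) and the harmless assumption~$\max_j f^{(j)}(p)\leq 1$, points the paper leaves implicit.
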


\begin{proof}
We assume here that we apply the measurement- and correction-parts of each circuit~$\pi^{(j)}$ to a state of the form
\begin{align}
E\left(\bigotimes_{j=1}^k\Psi^{(j)}_{\cC^{(j)}}\right)\ ,\label{eq:clusterstatecorrupted}
\end{align}
with local stochastic noise~$E\sim\cN(p)$ acting on all qubits.
 We can factor~$E$ as 
\begin{align}
E&=\bigotimes_{j=1}^k E^{(j)}\ ,\label{eq:multibuserror}
\end{align}
where now~$E^{(j)}$ acts only on the qubits associated with~$\cC^{(j)}$, the set of $N^{(j)}$ qubits that circuit~$\pi^{(j)}$ is applied to.  Recall from Lemma~\ref{lem:propertieslocalstochasticnoise}~\eqref{it:subset} that $E^{(j)}\sim\cN(p)$ is local stochastic noise with the same parameter as~$E$, a fact we will use below.

Let $E_{\textrm{eff}}$ be the effective error 
on the~$\sum_{j=1}^k r^{(j)}$ ``target'' 
qubits~$\bigcup_{j=1}^k\{Q^{(j)}_{1},\ldots,Q^{(j)}_{r^{(j)}}\}$. Let 
\begin{align}
\cL\subseteq \bigcup_{j=1}^k\{Q^{(j)}_{1},\ldots,Q^{(j)}_{r^{(j)}}\}
\end{align}
be a subset of the target qubits. We need to upper bound~$\Pr\left[\cL\subseteq \supp(E_{\textrm{eff}})\right]$.

 We can again factor
\begin{align}
E_{\textrm{eff}}&=\bigotimes_{j=1}^k E^{(j)}_{\textrm{eff}}\ ,
\end{align}
where each factor $E^{(j)}_{\textrm{eff}}$ is an $r^{(j)}$-qubit Pauli operator which depends deterministically on~$E^{(j)}$. 
The event~$\cL\subseteq \supp(E_{\textrm{eff}})$ means that 
for at least~$\ell:=\lceil |\cL|/(\max_{j\in [k]}r^{(j)})\rceil$ of the instances, the desired output is non-trivially corrupted by a Pauli error. Let us assume without loss of generality (by reindexing if necessary) that
this affects the first $\ell$ output states~$\Phi^{(1)},\ldots,\Phi^{(\ell)}$. In other words,
all the corresponding circuits failed. Thus we obtain the upper bound
\begin{align}
\Pr\left[\cL\subseteq \supp(E_{\textrm{eff}})\right]&\leq \Pr\left[\pi^{(j)}\textrm{ fails for every }j\in [\ell]\right]\ .
\end{align}

Recall that since~$\pi^{(j)}$ is $(p_0^{(j)}, f^{(j)})$-robust for each $j \in [k]$, there is a family~$\{D^{(j)}_m\}_{m \in \cM^{(j)}}$ of subsets of~$\supp(E^{(j)})$ satisfying conditions~\eqref{it:decompositionpropertyone} and~\eqref{it:decompositionpropertytwo} in Section~\ref{sec:singlebellstateprep}.
By condition~\eqref{it:decompositionpropertyone}, we have that
the event that $\pi^{(j)}$ fails, which is the event that $E^{(j)}\in\failset^{(j)}$, implies the existence of $m^{(j)}\in\cM^{(j)}$ such that $D^{(j)}_{m^{(j)}}\subseteq \supp(E^{(j)})$. 
Thus we obtain (again using the union bound) that
\begin{align}
\Pr\left[\cL\subseteq \supp(E_{\textrm{eff}})\right]&\leq 
\sum_{m^{(1)}\in \cM^{(1)},\ldots,m^{(\ell)}\in \cM^{(\ell)}} \Pr\left[D^{(j)}_{m^{(j)}}\subseteq 
\supp(E^{(j)})\textrm{ for all }j\in [\ell]\right]\label{eq:intermedupabd} \ .
\end{align}
But the condition
\begin{align}
D^{(j)}_{m^{(j)}}\subseteq \supp(E^{(j)})\textrm{ for all }j\in [\ell]
\end{align}
is equivalent to
\begin{align}
\bigcup_{j=1}^{\ell}D^{(j)}_{m^{(j)}}\subseteq \bigcup_{j=1}^{\ell }\supp(E^{(j)})=\supp(E)\ \label{eq:reformulatedunioncondition}
\end{align}
since the sets of qubits~$\{D^{(j)}_{m^{(j)}}\}_{j=1}^k$ (respectively $\{\supp(E^{(j)})\}_{j=1}^k$) are pairwise disjoint (i.e., belong to different instances).
For the same reason, we have 
\begin{align}
\left|\bigcup_{j=1}^{\ell}D^{(j)}_{m^{(j)}}\right|&=\sum_{j=1}^\ell |D^{(j)}_{m^{(j)}}|\ .\label{eq:disjointunionsize}
\end{align}
Inserting~\eqref{eq:reformulatedunioncondition} 
into~\eqref{eq:intermedupabd} gives
\begin{align}
\Pr\left[\cL\subseteq \supp(E_{\textrm{eff}})\right]&\leq 
\sum_{m^{(1)}\in \cM^{(1)},\ldots,m^{(\ell)}\in \cM^{(\ell)}} \Pr\left[
\bigcup_{j=1}^{\ell}D^{(j)}_{m^{(j)}}\subseteq\supp(E)
\right]\\
&\leq 
\sum_{m^{(1)}\in \cM^{(1)},\ldots,m^{(\ell)}\in \cM^{(\ell)}}
p^{\sum_{j=1}^\ell |D^{(j)}_{m^{(j)}}|}\\
&=\prod_{j=1}^\ell \left(\sum_{m^{(j)}\in\cM^{(j)}} p^{|D^{(j)}_{m^{(j)}}|}\right)
\end{align}
where we used~\eqref{eq:disjointunionsize}
and the fact that $E\sim\cN(p)$ is local stochastic.
By condition~\eqref{it:decompositionpropertytwo}, we obtain the upper bound
\begin{align} \label{eq:boundwithellproduct}
\Pr\left[\cL\subseteq \supp(E_{\textrm{eff}})\right]&\leq 
\prod_{j=1}^\ell f^{(j)}(p)
\end{align}
for all $p\leq \min\{p^{(1)}_0,\ldots,p^{(\ell)}_0\}$. Inserting  the definition of~$\ell$ into~\eqref{eq:boundwithellproduct}, we have
\begin{align}\label{eq:boundwithfjandrj}
    \Pr\left[\cL\subseteq \supp(E_{\textrm{eff}})\right]&\leq 
    \left( \max_{j \in [\ell]} f^{(j)}(p) \right)^{\Big\lceil \frac{\abs{\cL}}{\max_{j' \in [\ell]} r^{(j')}} \Big\rceil} \leq \left( \max_{j \in [\ell]} f^{(j)}(p) \right)^{\frac{\abs{\cL}}{\max_{j' \in [\ell]} r^{(j')}} } \ .
\end{align}
The expression~\eqref{eq:strengthofEffgeneral} for noise strength of~$E_{\mathrm{eff}}$ follows from
the upper bound~\eqref{eq:boundwithfjandrj} combined with the bounds~$\max_{j \in [\ell]} f^{(j)} \leq \max_{j \in [k]} f^{(j)}$ and $\max_{j' \in [\ell]} r^{(j')} \leq \max_{j' \in [k]} r^{(j')}$ for $\ell \leq k$ together with the definition of local stochastic noise. 
Assuming that~\eqref{eq:uniformupperboundnoisestrength}, the expression~\eqref{eq:strengthofEffgeneral}
for the noise strength immediately follows from Eq.~\eqref{eq:strengthofEffspecific}.
\end{proof}
We note that the noise strength of an effective error in Theorem~\ref{thm:parallelrepetition} can be improved if we know about the stabilizer groups of~$\Phi^{(j)}$. Recall that the effective error~$E_{\mathrm{eff}}^{(j)}(E)$ can be chosen up to stabilizers of the $j$-th target state~$\Phi^{(j)}$. Let us choose $E_{\mathrm{eff}}^{(j)}(E)$ to be of minimal support, and define
\begin{align}
    \tilde{r} = \max_{j \in [k]} \tilde{r}^{(j)} \qquad \text{and} \qquad 
    \tilde{r}^{(j)} = \max_{E \in \cP_{N^{(j)}}} E_{\mathrm{eff}}^{(j)}(E)  \ .
\end{align}
For example, we have~$\tilde{r}=1$ if all target states~$\Phi^{(j)}$ are Bell states. Improved noise strength of~$E_{\mathrm{eff}}$ is obtained by replacing~$r$ with $\tilde{r}$ in Eqs.~\eqref{eq:strengthofEffgeneral} and~\eqref{eq:strengthofEffspecific}.

\subsection{Generating long-range entangled Bell states using a parallel quantum bus\label{sec:busmultiple}}

Combining Corollary~\ref{cor:quantumbus} with Theorem~\ref{thm:parallelrepetition}, we have the following immediate consequence on the fault-tolerance of a parallel quantum bus:
\begin{theorem}[Quantum bus architecture (parallel)]\label{thm:parallelbuscomp}
There is a constant threshold~$p_0>0$ on error strength and constants $C>0$, $c\in (0,1)$ such that the following holds.
Let $k$ be arbitrary, and let $R_1,\ldots,R_k$ 
and $\Delta_1,\ldots,\Delta_k$ be such that 
\begin{align}
    \Delta_j&\geq 8 \log R_j\qquad\textrm{ for all }\qquad j\in [k]\ .
\end{align}
Consider the parallel bus~$\pi(\Delta_1,R_1)\times\cdots\times\pi(\Delta_k,R_k)$. Then a noisy
implementation of this parallel bus with noise strength~$p\leq p_0$ produces $k$~Bell states corrupted by local stochastic noise: The output state on qubits~$\{Q_{1,j},Q_{2,j}
\}_{j=1}^k$ is
\begin{align}
F\left(\bigotimes_{j=1}^k \ket{\Phi}_{Q_{1,j}Q_{2,j}}\right)
\end{align}
where $F$ is local stochastic noise on these $2k$~qubits with~$F\sim\cN(Cp^c)$.
\end{theorem}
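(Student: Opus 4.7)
The plan is to derive the theorem as a direct corollary of Corollary~\ref{cor:quantumbus} combined with Theorem~\ref{thm:parallelrepetition}; no new technical ingredient is needed.

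First, I would invoke Corollary~\ref{cor:quantumbus} separately for each $j\in [k]$. The hypothesis $\Delta_j \geq 8 \log R_j$ is exactly condition~\eqref{eq:deltainequalitycondition}, so each single-bus protocol~$\pi(\Delta_j,R_j)$ is $(p_0^{(j)}, f^{(j)})$-robust with
\begin{align}
p_0^{(j)} = 1/5004 \qquad \textrm{and} \qquad f^{(j)}(p) = p/p_0^{(j)} = 5004\,p \ .
\end{align}
In particular, $\min_{j\in [k]} p_0^{(j)} = 1/5004 =: p_0$ serves as the common threshold, and each target state is the two-qubit Bell state~$\ket{\Phi}$, so $r^{(j)} = 2$ uniformly in $j$.

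Second, I would apply Theorem~\ref{thm:parallelrepetition} to the parallel composition $\pi(\Delta_1,R_1)\times\cdots\times\pi(\Delta_k,R_k)$. This protocol acts on $\sum_{j=1}^k N^{(j)}$ qubits partitioned into pairwise disjoint blocks, one per bus, and its output target state is $\bigotimes_{j=1}^k \ket{\Phi}_{Q_{1,j}Q_{2,j}}$. Equation~\eqref{eq:strengthofEffgeneral} then gives that, for any noisy implementation with local stochastic noise of strength $p \leq p_0$, the output equals $F(\bigotimes_j \ket{\Phi}_{Q_{1,j}Q_{2,j}})$ with
\begin{align}
F \sim \cN\!\left( \left(\max_{j\in[k]} f^{(j)}(p)\right)^{1/r} \right) = \cN\!\left( (5004\,p)^{1/2} \right) \ ,
\end{align}
where $r = \max_j r^{(j)} = 2$. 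Setting $C := \sqrt{5004}$ and $c := 1/2$ therefore delivers both the claimed bound $F \sim \cN(Cp^c)$ and the required range $c \in (0,1)$, with constants independent of $k$ and of the choice of $(\Delta_j, R_j)$.

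The main obstacle is essentially nothing, because the heavy lifting (the union bound and combinatorial bookkeeping across different buses) is done inside Theorem~\ref{thm:parallelrepetition}. The only point worth double-checking is that the single joint local-stochastic noise of strength~$p$ on the full $\sum_j N^{(j)}$-qubit register restricts, on each of the pairwise disjoint sub-registers associated with an individual bus, to a local-stochastic error of the same strength~$p$—this is Lemma~\ref{lem:propertieslocalstochasticnoise}\eqref{it:subset}, already used inside the proof of Theorem~\ref{thm:parallelrepetition} (see the decomposition in Eq.~\eqref{eq:multibuserror}), so the hypotheses of Theorem~\ref{thm:parallelrepetition} are indeed met by any noisy implementation of the parallel bus.
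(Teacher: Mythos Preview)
Your proposal is correct and matches the paper's own approach exactly: the paper states Theorem~\ref{thm:parallelbuscomp} as an ``immediate consequence'' of combining Corollary~\ref{cor:quantumbus} with Theorem~\ref{thm:parallelrepetition}, which is precisely what you do, and your explicit identification of the constants $p_0 = 1/5004$, $r=2$, $C=\sqrt{5004}$, $c=1/2$ is consistent with those results.
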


\section{Fault-tolerantly localized quantum circuits} \label{sec:faulttolerantlylocalizing}
We saw in Section~\ref{sec:localization} how a general adaptive quantum circuit can be rendered geometrically local. The new circuit includes auxiliary qubits such that the circuit on the entire system is local in a 2D or 3D grid structure. In the new circuit, all ``long-range'' operations on two qubits with long distance are emulated by consuming short-range Bell pairs aligned on the routing paths.  Since the length of these paths linearly grows in the linear size of the grid, so does the number of short-range Bell pairs for each path. In the case where operations are noisy, this implies that the success probability of this long-range operation decays exponentially with the system size.

This problem can be resolved by augmenting the grid structure with quantum buses. We will use the result from Section~\ref{sec:busmultiple} that multiple pairs of noisy Bell states affected by constant-strength noise can be prepared by running multiple noisy quantum buses in parallel. In particular, such buses can create long-range entanglement, which 
can be used to (approximately) realize long-range operations. 

In Section~\ref{sec:ftpairwiseentanglementgeneration}, we describe the grid graphs we use and explain how to generate pairwise entanglement (respectively pair qubits) fault-tolerantly. 
 In Section~\ref{sec:thresholdtheorementirecircuit}, we state our main threshold theorem for fault-tolerantly localizing quantum circuits. As an application, we show in Section~\ref{sec:ftquantumcomputation} that any quantum computation using polynomial number of qubits and having polynomial depth can be fault-tolerantly localized with polynomial qubit overhead and quasi-polylogarithmic time overhead.

\subsection{Fault-tolerant pairwise entanglement generation in grid graphs}\label{sec:ftpairwiseentanglementgeneration}
In the following, it will be convenient to use the set $\mathbb{Z}_L=\{0,\ldots,L-1\}$ instead of~$[L]=\{1,\ldots,L\}$ to label the vertices of~$P_L$. 
The grid graph~$P_L\times P_L\times P_{4L}$ then has vertex set~$\mathbb{Z}_L\times\mathbb{Z}_L\times\mathbb{Z}_{4L}$, and we
consider the canonical embedding of this graph into~$\mathbb{R}^3$.  Recall that for $L$ even, the subset $\mathbb{Z}_L^2\times \{0\}$
can be parallel-routed with paths of length upper bounded by~$10L$, see Theorem~\ref{thm:main3Drouting}.
 One key property of these routing schemes we use below is the following: Each path~$\pi$ can decomposed into a collection~$(\pi^{\textrm{up}},{\pi_1^{\textrm{mid}}},{\pi_2^{\textrm{mid}}},\pi^{\textrm{down}})$
of four straight line segments that are parallel to different coordinate axes,
and have the property that adjacent line segments are orthogonal. 
This follows by construction, see Eq.~\eqref{eq:updownpathdefm}
for the definition of~$\pi^{\textrm{up}}$ and $\pi^{\textrm{down}}$,
and~\eqref{eq:pathinthesamefloor}
for the definition of the path~$\pi^{\textrm{mid}}$ which is the composed of segments~${\pi_1^{\textrm{mid}}}$ and ${\pi_2^{\textrm{mid}}}$.

We will show how to fault-tolerantly route $L^2$~qubits in a 3D architecture with additional qubits. These additional qubits are placed in such a way that a fault-tolerant quantum bus connecting the endpoints of each path segment can be realized, for any chosen path~$\pi$ occurring in the routing scheme (i.e., for some pairing of the vertices~$\mathbb{Z}_L^2\times \{0\}$).  Mirroring the requirement of edge-disjointness of the paths for a given pairing in a routing scheme, buses associated with any edge-disjoint collection of paths (respectively corresponding path segments) do not  share any qubits, thus enabling parallel operation. Since each path consists of only four segments, the associated four buses can be ``connected'' using entanglement swapping, thus producing entanglement at the endpoints of each path.  

Consider the set
\begin{align}
V(m,R)&:=\left\{\left(\frac{x}{m},\frac{y}{m},\frac{z}{m}\right)\ |\ x,y\in \{0,\ldots,Lm-1\},z\in \{0,\ldots,4Lm-1\right\}\ .
\end{align}
These $4(Lm)^3$ points correspond to the vertices of the grid graph~$P_{Lm }\times P_{Lm}\times P_{4Lm}$; they are the result of using the standard embedding of this graph in~$\mathbb{R}^3$ and subsequently rescaling by a factor~$\frac{1}{m}$. With a slight abuse of notation, we will refer to this embedded graph as~$\frac{1}{m}\left(P_{Lm}\times P_{Lm}\times P_{4m}\right)$.

We note that the vertices of $P_L\times P_L\times P_{4L}$ in the standard embedding are exactly the elements of~$V(m,R)$ with integer coordinates. Our construction uses three qubits $R_\ell^{\textrm{red}}$, $R_\ell^{\textrm{green}}$ and $R_\ell^{\textrm{blue}}$  at each location $\ell\in V(m,R)$. We use colors  to illustrate these. The main result of this section is the following:
\begin{theorem}[Fault-tolerant pairwise entanglement generation in a 3D grid graph]\label{thm:3DFTrouting}
There is a constant threshold error strength~$p_0>0$ and constants~$C>0$, $c\in (0,1]$ such that the following holds.
Let $L\geq 2$ be  an integer, and set $m:=82\lceil \log L\rceil$. 
Consider the embedded graph $\frac{1}{m}(P_{Lm}\times P_{Lm}\times P_{4Lm})$ with three qubits associated with each vertex.
Let  us write $\mathbb{Z}^2_L\times \{0\}=\{v_1,\ldots,v_{L^2}\}$ and let  $\{(v_{i_r},v_{j_r})\}_{r=1}^{L^2/2}$ be an arbitrary pairing
of these sites. 
Then there is an adaptive circuit~$\pi$ with the following properties:
\begin{enumerate}[(i)]
\item
The circuit~$\pi$ is constant-depth with local gates on $\frac{1}{m}(P_{Lm}\times P_{Lm}\times P_{4Lm})$.
\item
 The output of a noisy implementation of~$\pi$ with noise-strength~$p\leq p_0$ produces~$L^2/2$ Bell states corrupted by local stochastic noise: The output state on 
 the qubits
 \begin{align}
     \{R^{(\mathrm{blue})}_{v_{i_r}},R^{(\mathrm{blue})}_{v_{j_r}}\}_{r=1}^{L^2/2}
 \end{align}
 is 
 \begin{align}
 F\left(\bigotimes_{r=1}^{L^2/2} \ket{\Phi}_{R^{(\mathrm{blue})}_{v_{i_r}}R^{(\mathrm{blue})}_{v_{j_r}}}\right)\ 
 \end{align}
 where $F$ is local stochastic noise on these~$L^2$ qubits with $F\sim \cN(Cp^c)$. 
\end{enumerate}
\end{theorem}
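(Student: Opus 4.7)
The plan is to couple the routing from Theorem~\ref{thm:main3Drouting} with the fault-tolerant quantum bus architecture (Corollary~\ref{cor:quantumbus}) and its parallel repetition (Theorem~\ref{thm:parallelbuscomp}). First, apply Theorem~\ref{thm:main3Drouting} to the coarse grid $P_L\times P_L\times P_{4L}$ with the given pairing to obtain edge-disjoint paths $\{\pi_r\}_{r=1}^{L^2/2}$ of length at most~$10L$, each decomposing into at most four axis-aligned straight-line segments $\pi_r^{\mathrm{up}}, \pi_r^{\mathrm{mid},1}, \pi_r^{\mathrm{mid},2}, \pi_r^{\mathrm{down}}$ (by Eqs.~\eqref{eq:updownpathdefm} and~\eqref{eq:pathinthesamefloor}). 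Embed the coarse grid inside the refined grid $\frac{1}{m}(P_{Lm}\times P_{Lm}\times P_{4Lm})$ in the natural way, and assign one color to each of the three axis directions---say red for $x$, green for $y$, and blue for $z$.

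For each axis-aligned segment $\sigma$ of coarse length $\ell_\sigma\leq 10L$ (hence $R_\sigma:=m\ell_\sigma+1$ refined-grid vertices along $\sigma$), instantiate a quantum bus from Corollary~\ref{cor:quantumbus} in a $\Delta_\sigma\times \Delta_\sigma \times R_\sigma$ tube of refined-grid vertices centered on~$\sigma$, using the color associated to~$\sigma$'s axis direction, with $\Delta_\sigma = 8\lceil \log R_\sigma\rceil$ so that the bus is fault-tolerant. The choice $m=82\lceil\log L\rceil$ is designed precisely so that $\Delta_\sigma\leq m$ for all $L\geq 2$, which is the room needed for disjoint packing. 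The central combinatorial check is that distinct buses share no qubit. Two buses along the same axis live on distinct coarse-grid lines: this follows from edge-disjointness of the routing paths combined with Lemma~\ref{lem:greedyzcoordworks} (parallel mid-segments on a common floor have pairwise disjoint $X$- and $Y$-coordinate sets) and from the fact that the vertical up/down segments start at pairwise distinct base points. Such parallel coarse lines are separated by at least $m$ refined units in some perpendicular coordinate, so their $\Delta_\sigma\leq m$ wide tubes are disjoint. Buses along different axes are automatically disjoint as qubit sets via the color assignment, even where their tubes intersect geometrically; at each corner shared by two consecutive segments of a single path we reserve one qubit per incident direction.

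Running all buses in parallel, Theorem~\ref{thm:parallelbuscomp} yields, for some threshold $p_0'>0$ and constants $C',c'>0$, a tensor product of Bell pairs---one per segment, on the designated endpoint qubits---corrupted by joint local stochastic noise of strength $C'p^{c'}$ whenever $p\leq p_0'$. To finish, for each path~$\pi_r$ we fuse its (up to four) consecutive Bell pairs into a single Bell pair between $v_{i_r}$ and $v_{j_r}$ by performing a Bell measurement at every shared corner (on the two color qubits of the two incident buses) and applying the resulting Pauli correction at one endpoint; if the last segment of $\pi_r$ is not along the $z$-axis (which occurs only when the path never leaves the base floor), a final local SWAP at the endpoint transfers the information onto the blue qubit there. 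This stitching is constant-depth, geometrically local Clifford processing, so by the noise-propagation rules of Lemma~\ref{lem:propertieslocalstochasticnoise} the final state is $F\bigl(\bigotimes_{r} \ket{\Phi}_{R^{(\mathrm{blue})}_{v_{i_r}}R^{(\mathrm{blue})}_{v_{j_r}}}\bigr)$ with $F\sim \cN(Cp^c)$ for suitable constants $C>0$, $c\in(0,1]$. The main obstacle is the combinatorial packing argument in the second paragraph: it hinges on the precise interplay between the refinement factor $m=82\lceil\log L\rceil$, the bus cross-section $\Delta_\sigma = 8\lceil\log R_\sigma\rceil$, and the three-color assignment by axis direction. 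Once this is established, the rest is a direct combination of Theorem~\ref{thm:parallelbuscomp} with standard Clifford noise propagation.
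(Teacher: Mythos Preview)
Your approach is the paper's: route on the coarse grid via Theorem~\ref{thm:main3Drouting}, place a bus (Corollary~\ref{cor:quantumbus}) along each of the at most four axis-aligned segments, invoke parallel repetition (Theorem~\ref{thm:parallelbuscomp}), and stitch via three Bell measurements per path. The one place where the paper is cleaner is the packing. Rather than thin tubes of width~$\Delta_\sigma=8\lceil\log R_\sigma\rceil$ and a line-separation argument, the paper assigns to each coarse edge~$e$ the full half-open $m\times m\times m$ block~$\cQ(e)$ of refined vertices, colored by the direction of~$e$; disjointness of $\bigcup_{e\in\pi_r}\cQ(e)$ for different~$r$ is then immediate from edge-disjointness of the routing, half-openness (for collinear edges), and color (for orthogonal edges), with no appeal to Lemma~\ref{lem:greedyzcoordworks}. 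Each bus then simply has cross-section~$m$ and length~$m|\pi_r^{(\alpha)}|$, and the only check is $m\geq 8\log(10mL)$, which is what $m=82\lceil\log L\rceil$ is designed for.

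Your variant is correct in spirit but leaves two loose ends: you assert $\Delta_\sigma\leq m$ without verifying it (for small~$L$ this is borderline depending on the log base, since $R_\sigma$ can be as large as $10mL+1$), and ``centered on~$\sigma$'' is ambiguous near the boundary of the refined grid. Both are avoided by taking the cross-section equal to~$m$ from the start. For the stitching step, the relevant lemma is the one on entanglement swapping under local stochastic noise (Lemma~\ref{lem:noisyentanglementswapping}), not just Lemma~\ref{lem:propertieslocalstochasticnoise}, since the adaptive Pauli correction depends on a parity of measurement outcomes.
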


\begin{figure}
    \centering
    \includegraphics[width=0.8\textwidth]{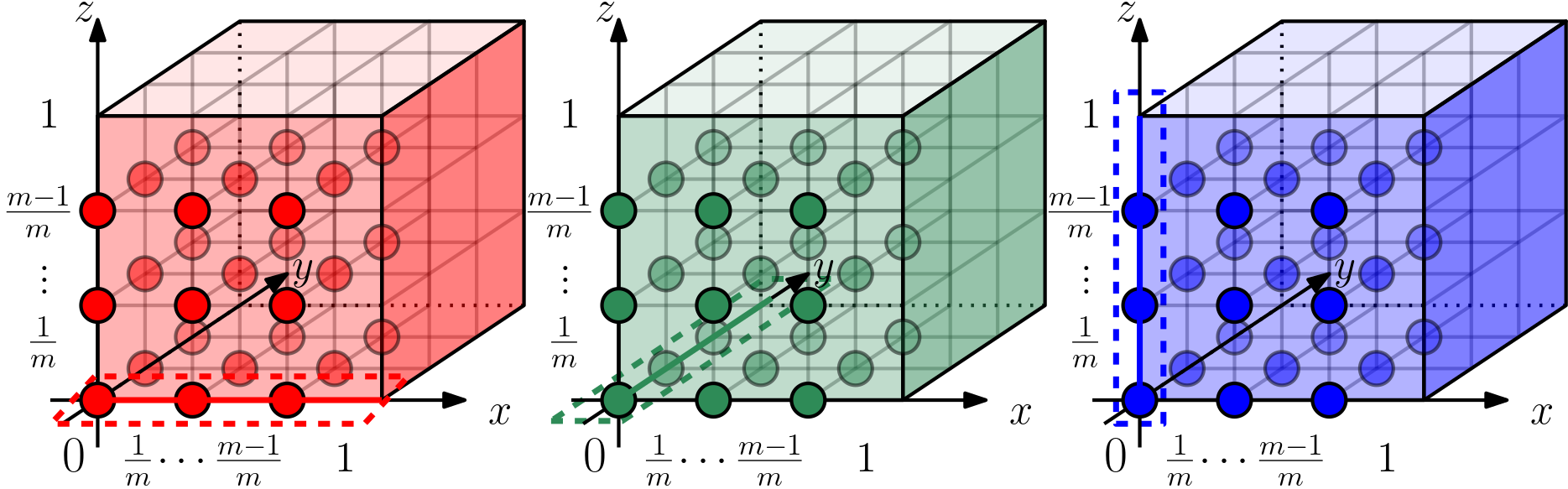}
\caption{Cubes~$\cQ(e)$ associated with edges~$e$. Each cube has $m$~qubits on each side, with spacing~$\frac{1}{m}$. Edges~$e$ are highlighted with dashed boxes. \label{fig:cubedefinition}}
\end{figure}

\begin{figure}
     \centering
     \begin{subfigure}[b]{0.45\textwidth}
         \centering
         \includegraphics[width=\textwidth]{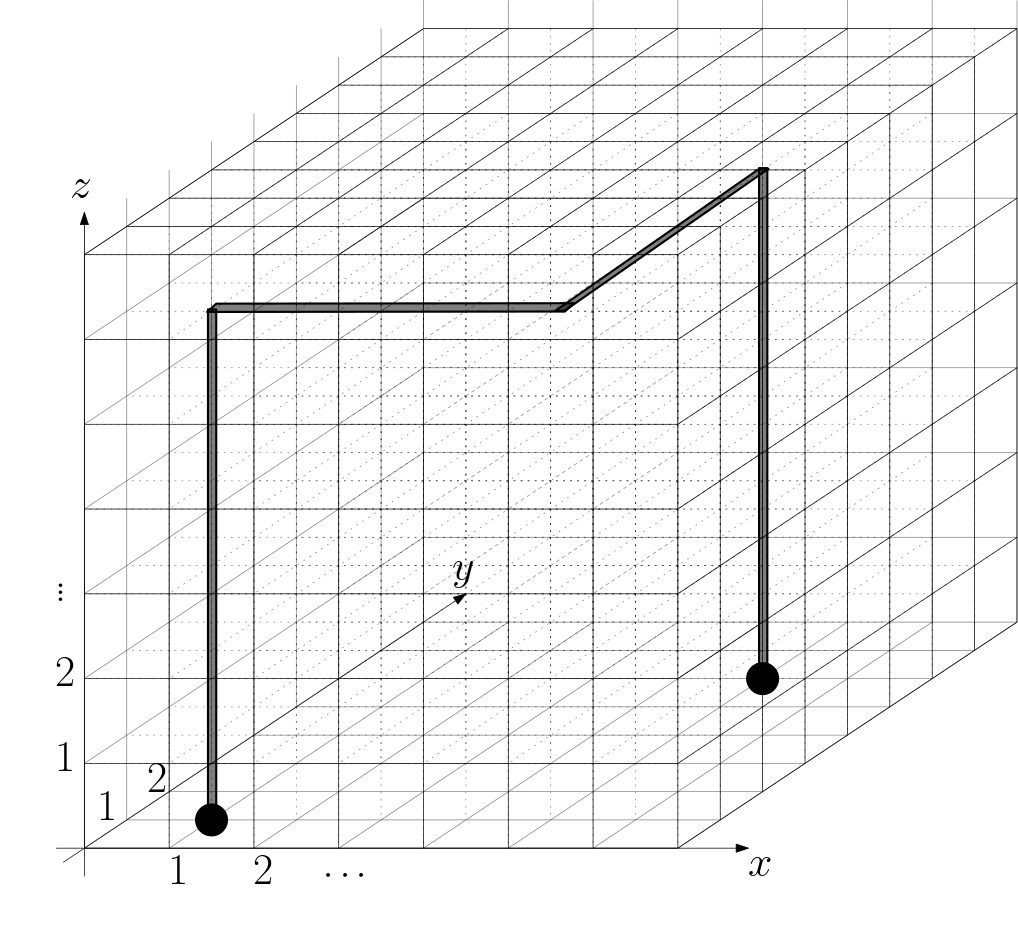}
         \caption{A path~$\pi_r$ connecting two vertices~$v_{i_r}, v_{j_r}$. The path consists of four line segments~$\{\pi_r^{(\alpha)}\}_{\alpha=0}^3$.}
         \label{fig:singlepath}
     \end{subfigure}
     \hfill
     \begin{subfigure}[b]{0.45\textwidth}
         \centering
         \includegraphics[width=\textwidth]{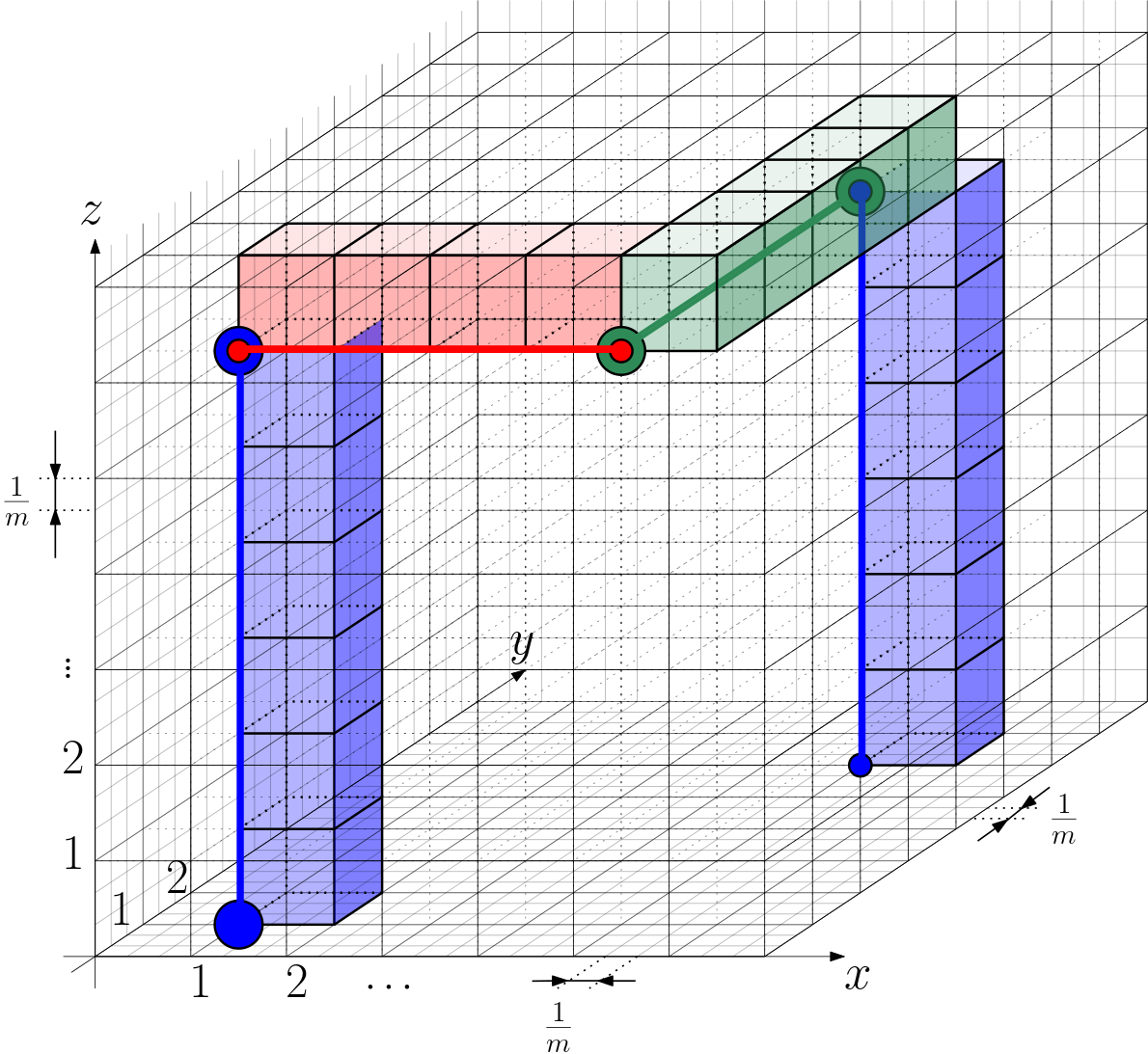}
         \caption{Four buses~$\{\cB_r^{(\alpha)}\}$ generating four Bell pairs on two qubits located at the endpoints of $\{\pi_r^{(\alpha)}\}_{\alpha=1}^3$.}
         \label{fig:quantumbuspath}
     \end{subfigure}
        \caption{A path~$\pi_r = \{\pi_r^{(\alpha)}\}_{\alpha=0}^3$ and four buses $\{\cB_r^{(\alpha)}\}_{\alpha=0}^3$ for fault-tolerant entanglement generation.}
        \label{fig:pathandbuspath}
\end{figure}

\begin{proof}
Let us associate a cube $\cQ(e)$ of qubits to each edge~$e$ of the graph~$P_L\times P_L\times P_{4L}$, see Fig.~\ref{fig:cubedefinition}. We distinguish between three different cases, depending on the orientation of the edge, i.e., whether it is parallel to the $X$-, the $Y$- or the $Z$-axis. That is, denoting by $\cQ_{(x,y,z)}:=[x,x+1)\times [y,y+1)\times [z,z+1)$ the ``half-open'' cube located at~$(x,y,z)$, we set
\begin{align}
\cQ(e)&:=
\begin{cases}
\left\{R^{\textrm{red}}_{\ell}\ |\ 
\ell\in \cQ_{(x,y,z)}\cap 
V(m,R)\right\} & \textrm{ if }e=\{(x,y,z),(x+1,y,z)\}\\
\\
\left\{R^{\textrm{green}}_{\ell}\ |\ 
\ell\in \cQ_{(x,y,z)}\cap 
V(m,R)\right\} & \textrm{ if }e=\{(x,y,z),(x,y+1,z)\}\\
\\
\left\{R^{\textrm{blue}}_{\ell}\ |\ 
\ell\in \cQ_{(x,y,z)}\cap 
V(m,R)\right\} & \textrm{ if }e=\{(x,y,z),(x,y,z+1)\} \ .
\end{cases}
\end{align}
Observe that  for two orthogonal edges~$e,e'$, the two cubes~$\cQ(e)$ and $\cQ(e')$
have no intersection since the associated qubits have different colors.  For colinear adjacent edges~$e$, $e'$, the intersection~$\cQ(e)\cap \cQ(e')$ is empty due to half-openness. Therefore, the  cubes~$\{Q(e)\}_{e}$ are pairwise disjoint.

Let now $\{\pi_r\}_{r=1}^{L^2/2}$ be a parallel routing scheme 
associated with a pairing~$\{(v_{i_r},v_{j_r})\}_{r=1}^{L^2/2}$  in the graph $P_L\times P_L\times P_{4L}$ as 
 provided by Theorem~\ref{thm:main3Drouting}; see Fig.~\ref{fig:singlepath} for  an illustration. 
 By  the disjointness of the cubes (and the edge-disjointness of the collection~$\{\pi_r\}_{r=1}^{L^2/2}$ of paths),  any two distinct paths~$\pi_r\neq \pi_{r'}$ in this routing scheme  have the property that  the collection of qubits in cubes associated with their edges are disjoint, i.e., 
\begin{align}
\left(\bigcup_{e\in \pi_r}\cQ(e)\right)\cap \left(\bigcup_{e\in \pi_{r'}}\cQ(e)\right)=\emptyset\ .
\end{align}
This implies that we can consider each path~$\pi_r$, $r\in [L^2/2]$ separately, and seek to generate a Bell pair on the qubits~$Q^{\textrm{blue}}_{v_{i_r}},Q^{\textrm{blue}}_{v_{j_r}}$ located at the endpoints using the qubits belonging to~$\bigcup_{e\in \pi_r}\cQ(e)$. 

In more detail, we will proceed as follows:
For each $r\in [L^2/2]$, we decompose the path~$\pi_r$ into
 four straight line segments~$(\pi_r^{(0)},\pi_r^{(1)},\pi_r^{(2)},\pi_r^{(3)})$ (as discussed above), and use
 a linear quantum bus~$\cB_r^{(\alpha)}$ for each $\alpha\in \{0,1,2,3\}$; see Fig.~\ref{fig:quantumbuspath}.
 The bus~$\cB_r^{(\alpha)}$ creates a Bell pair on two qubits located at the endpoints~$\partial \pi_r^{(\alpha)}=:\{u_{r}^{(\alpha)},w_{r}^{(\alpha)}\}$.  
 Here we denote by $u_r^{(\alpha)}$ and $w_{r}^{(\alpha)}$ the first and last vertices of~$\pi_r^{(\alpha)}$, where the orientation of the path segment~$\pi_r^{(\alpha)}$ is defined by a traversal of the path~$\pi_r$ from~$v_{i_r}$ to $v_{j_r}$.
The bus~$\cB_r^{(\alpha)}$ uses the collection
\begin{align}
\cQ_r^{(\alpha)}=\bigcup_{e\in \pi_r^{(\alpha)}} \cQ(e)
\end{align}
of qubits associated with edges belonging to the line segment~$\pi_r^{(\alpha)}$. Recall that all these qubits have the same color: It is the color determined by the coordinate axis the line segment~$\pi^{(\alpha)}_r$ is parallel to.  We will denote by $S_r^{(\alpha)}$ the qubit of this color at the beginning vertex~$u_r^{(\alpha)}$ of the line segment, and by~$T_r^{(\alpha)}$
the qubit  of the same color at the ending vertex~$w_r^{(\alpha)}$.

Note that the qubits~$\cQ_r^{(\alpha)}$  are located on the vertices of an embedded  version of the grid graph~$P_{m}\times P_m\times P_{R^{(\alpha)}_r}$, where 
the length~$R^{(\alpha)}_r=m\cdot |\pi_r^{(\alpha)}|$  (i.e., number of qubits) is proportional to the length~$|\pi_r^{(\alpha)}|$ of the path segment. (Compared to the standard embedding, this involves scale factor~$\frac{1}{m}$ and a rotation such that the graph is parallel to~$\pi_r^{(\alpha)}$.)  Since this has the form of a ``standard'' (linear) quantum bus on the grid graph~$P_{\Delta}\times P_{\Delta}\times P_R$, we can  take the protocol~$\cB_r^{(\alpha)}$ to be the associated scheme~$\pi(\Delta,R)$, that is, the protocol discussed in Corollary~\ref{cor:quantumbus}. By Corollary~\ref{cor:quantumbus}, this protocol generates entanglement at the endpoints of~$\pi_r^{(\alpha)}$ if
\begin{align}
m \geq 8 \log (m \cdot |\pi_r^{(\alpha)}|)\ .\label{eq:delta8deltapr}
\end{align}
Since $|\pi_r^{(\alpha)}|\leq |\pi_r|\leq 10L$ for each path segment (see Theorem~\ref{thm:main3Drouting}), Eq.~\eqref{eq:delta8deltapr}
is satisfied for every path segment if
\begin{align}
m \geq 8 \log(10 m L)\ .\label{eq:eightmlzen}
\end{align}
Eq.~\eqref{eq:eightmlzen} is satisfied because of our definition $m:=82\lceil \log L\rceil$ and $L\geq 2$. Thus each bus~$\cB_r^{(\alpha)}$ is $(p_0,f(p)=p/p_0)$-robust with $p_0=1/5004$, and generates a Bell pair at the endpoints of~$\pi_r^{(\alpha)}$.

In summary, we have a total of~$4\cdot \frac{L^2}{2}$ buses~$\{\cB_r^{(\alpha)}\}_{j\in \{0,1,2,3\},r\in [L^2/2]}$ that are each~$(p_0,f(p)=p/p_0)$-robust. 
The parallel repetition Theorem~\ref{thm:parallelbuscomp} thus guarantees that if these protocols are run in parallel the result is a noisy version 
\begin{align}
 F'\left(\bigotimes_{r=1}^{L^2/2} 
\bigotimes_{\alpha=0}^3\ket{\Phi}_{S^{(\alpha)}_rT^{(\alpha)}_r}\right)\ 
 \end{align}
 of $4(L^2/2)$-Bell pairs (each spanning a line segment), where $F'\sim\cN(C'p^{c'})$ is local stochastic for some constants~$C',c'$. 
 
Finally, we apply entanglement-swapping  in parallel to each path~$\pi_r$, $r\in [L^2/2]$.  That is, for each $r\in [L^2/2]$, we perform three Bell measurements
on the three qubits pairs 
$(T_{r}^{(0)},S_{r}^{(1)})$, $(T_{r}^{(1)},S_{r}^{(2)})$ and $(T_{r}^{(2)},S_{r}^{(3)})$. This 
converts the state~$\bigotimes_{\alpha=0}^3\ket{\Phi}_{S_r^{(\alpha)}T_r^{(\alpha)}}$
to a Bell state~$\ket{\Phi}_{S_r^{(0)}T_r^{(3)}}$ (after applying a Pauli correction depending on the measurement result).

By definition, the path segments~$\pi_r^{(0)}$, $\pi_r^{(3)}$ are  aligned with the $Z$-axis, i.e., correspond to blue qubits. In particular, $(S_r^{(0)}, T_r^{(3)})=(R^{\textrm{blue}}_{v_{i_r}},R^{\textrm{blue}}_{v_{j_r}})$ are the two blue qubits at the endpoints of~$\pi_r$, for each $r\in [L^2/2]$.
Together with  Lemma~\ref{lem:noisyentanglementswapping} (which demonstrates that entanglement swapping in parallel is compatible with local stochastic noise), this shows that the final state on these qubits after the entanglement-swapping step is the desired target state
\begin{align}
 F\left(\bigotimes_{r=1}^{L^2/2} \ket{\Phi}_{R^{\textrm{blue}}_{v_{i_r}}R^{\textrm{blue}}_{v_{j_r}}}\right)\ 
 \end{align}
up to a local stochastic error~$F\sim\cN(Cp^c)$
 for some constants~$C,c>0$. This concludes the proof of 
Theorem~\ref{thm:3DFTrouting}. 
\end{proof}

By analogous reasoning, we can give a protocol for fault-tolerant pairwise entanglement generation in a quasi-2D architecture. Here 
quasi-2D refers to the fact that the linear extent along one of the coordinate axes is only logarithmic in the linear extent along the other two axes.
\begin{theorem}[Quasi-2D-local fault-tolerant pairwise entanglement generation]\label{thm:FTroutingquasi2D}
There is a constant threshold error strength~$p_0>0$ and constants~$C>0$, $c\in (0,1]$ such that the following holds.
Let $L\geq 2$ be  an integer. 
Set $m:=82\lceil \log L\rceil$. 
Consider the embedded graph $\frac{1}{m}(P_{Lm}\times P_{Lm}\times P_{m})$ with two qubits associated with each vertex.
Let us write $\{(i,i,0) \mid i \in \mathbb{Z}_L\} = \{v_1,\ldots,v_{L}\}$ and
let $\{(v_{i_r},v_{j_r})\}_{r=1}^{L/2}$ be an arbitrary pairing of these sites. 
Then there is an adaptive circuit~$\pi$ with the following properties:
\begin{enumerate}[(i)]
\item
The circuit~$\pi$ is constant-depth with local gates on $\frac{1}{m}(P_{Lm}\times P_{Lm}\times P_{m})$.
\item
 The output of a noisy implementation of~$\pi$ with noise-strength~$p\leq p_0$ produces $L/2$~Bell states corrupted by local stochastic noise: The output state on 
 the qubits 
 \begin{align}
  \{R^{(\mathrm{red})}_{v_{i_r}},R^{(\mathrm{green})}_{v_{j_r}}\}_{r=1}^{L/2}   
 \end{align}
  is 
 \begin{align}
 F\left(\bigotimes_{r=1}^{L/2} \ket{\Phi}_{R^{(\mathrm{red})}_{v_{i_r}}R^{(\mathrm{green})}_{v_{j_r}}}\right)\ 
 \end{align}
 where $F$ is local stochastic noise on these~$L$ qubits with $F\sim \cN(Cp^c)$. 
\end{enumerate}
\end{theorem}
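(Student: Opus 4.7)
The plan is to mirror the proof of Theorem~\ref{thm:3DFTrouting}, with the key simplification that the 2D routing of Lemma~\ref{lem:parallelrouting2D} produces paths composed of only two orthogonal straight-line segments (a horizontal X-segment followed by a vertical Y-segment), so each path now requires only two quantum buses, and no ``up/down'' vertical segments (needed in 3D to route between floors) are needed. The thin third coordinate direction in $\frac{1}{m}(P_{Lm}\times P_{Lm}\times P_m)$ serves exclusively to provide the $m\times m$ cross-sectional thickness of the buses, rather than any length.

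First, I would apply Lemma~\ref{lem:parallelrouting2D} to the pairing $\{(v_{i_r},v_{j_r})\}_{r=1}^{L/2}$, viewed as pairs in $P_L\times P_L$ after dropping the trivial $z=0$ coordinate; by the same pairwise-disjointness argument used in Corollary~\ref{cor:2Droutingnumber}, the hypothesis~\eqref{eq:assumptionnonintersect} holds, yielding pairwise edge-disjoint paths $\{\pi_r\}_{r=1}^{L/2}$ of length at most $2L$, each decomposing as $\pi_r^{(0)}$ (X-segment) followed by $\pi_r^{(1)}$ (Y-segment). Second, following the 3D construction, I would associate to each coarse X-edge $e=\{(x,y,0),(x+1,y,0)\}$ the cube $\cQ(e)$ of red qubits in the half-open region $[x,x+1)\times [y,y+1)\times [0,1)$, and to each coarse Y-edge the cube of green qubits in the corresponding region. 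By the colour assignment (for perpendicular edges) and by half-openness (for parallel adjacent edges), the cubes $\{\cQ(e)\}_e$ are pairwise disjoint. For each segment $\pi_r^{(\alpha)}$ of length~$\ell$, the union $\bigcup_{e\in\pi_r^{(\alpha)}}\cQ(e)$ forms a monochromatic rod isomorphic to the grid graph $P_m\times P_m\times P_{m\ell}$ required by the linear bus protocol $\pi(m,m\ell)$ of Corollary~\ref{cor:quantumbus}. With $m = 82\lceil \log L\rceil$ and $\ell\leq 2L$, the robustness condition $m \geq 8\log(m\ell)$ is satisfied (indeed, it is looser than the analogous check in the 3D proof, which allowed $\ell\leq 10L$), so each bus is $(p_0,f(p)=p/p_0)$-robust with $p_0 = 1/5004$.

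Finally, I would run all $L$ buses in parallel — permissible because edge-disjointness of the underlying segments translates to disjointness of the cube-unions carrying them — and invoke the parallel-repetition Theorem~\ref{thm:parallelbuscomp} to produce the tensor product of Bell pairs at the endpoints of all segments, corrupted by local stochastic noise of some strength $C'p^{c'}$. For each $r\in[L/2]$, a Bell measurement between the red qubit and the green qubit at the junction vertex $(X'_r,Y_r,0)$ performs entanglement swapping, fusing the two segment-level Bell pairs into the desired Bell state between $R^{(\mathrm{red})}_{v_{i_r}}$ and $R^{(\mathrm{green})}_{v_{j_r}}$. The lemma ensuring that parallel entanglement swapping is compatible with local stochastic noise (invoked as in the proof of Theorem~\ref{thm:3DFTrouting}) then shows that the residual noise on the endpoints remains local stochastic of some strength $Cp^c$, completing the argument. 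The main (and only mild) obstacle is the combinatorial bookkeeping that verifies cube disjointness, monochromaticity of each rod, and consistent matching of segment endpoints at the junction vertices; each of these checks is a direct 2D analogue of a step already carried out in the 3D proof, so no essentially new ideas are required.
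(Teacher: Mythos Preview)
Your proposal is correct and follows essentially the same approach as the paper's own proof, which is given only as a brief sketch: use the 2D routing of Lemma~\ref{lem:parallelrouting2D} in place of Theorem~\ref{thm:main3Drouting}, decompose each path into its two orthogonal line segments, verify the bus-robustness inequality $m\geq 8\log(m|\pi_r^{(\alpha)}|)$ using $|\pi_r^{(\alpha)}|\leq 2L$, and then proceed exactly as in the 3D case. You have in fact supplied considerably more detail than the paper does (the cube construction, colour bookkeeping, and endpoint matching), all of it correct.
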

\begin{proof}
    The proof is similar to that of Theorem~\ref{thm:3DFTrouting}, so we only provide a sketch.
    We use the parallel routing~$\{\pi_r\}_{r=1}^{L/2}$ of the pairing~$\{(v_{i_r}, v_{j_r})\}_{r=1}^{L/2}$ obtained by Lemma~\ref{lem:parallelrouting2D}, instead of that from Theorem~\ref{thm:main3Drouting}. Note that each path~$\pi_r$ is decomposed into two  orthogonal line segments~$\pi_r^{(0)}, \pi_r^{(1)}$.  For fault-tolerance, it suffices to show that the inequality~\eqref{eq:delta8deltapr} holds. This is the case since $\abs{\pi_r^{(\alpha)}} \leq 2L$ for $\alpha \in \{0,1\}$ (see Lemma~\ref{lem:parallelrouting2D}).
\end{proof}

As a consequence of Theorem~\ref{thm:3DFTrouting}, the 
fault-tolerance of parallel quantum teleportation (see Lemma~\ref{lem:noisyquantumteleportation}), and the fact that local stochastic noise remains local stochastic under a depth-$1$ circuit, we obtain a fault-tolerant qubit pairing circuit with local operations in 3D. For this, we add two qubits $\{Q_j, P_j\}$ at each site~$v_j \in \mathbb{Z}_L^2 \times \{0\}$ in the grid graph~$\frac{1}{m} P_{Lm} \times P_{Lm} \times P_{4Lm}$.

\begin{corollary}[3D-local fault-tolerant qubit pairing]\label{cor:FTqubitpair3D}
There is a constant threshold error strength~$p_0>0$ and constants~$C>0$, $c\in (0,1]$ such that the following holds.
Let $L\geq 2$ be  an integer and set $m:=82\lceil \log L\rceil$. 
Consider the embedded graph $\frac{1}{m}(P_{Lm}\times P_{Lm}\times P_{4Lm})$ with three qubits associated with each vertex and additional qubits~$\{Q_j, P_j\}$ at each site~$v_j$.
Let us write $\mathbb{Z}^2_L\times \{0\}=\{v_1,\ldots,v_{L^2}\}$, and let $\{(v_{i_r},v_{j_r})\}_{r=1}^{L^2/2}$ be an arbitrary pairing
of these sites. 
Then there is a constant-depth adaptive circuit~$\cQ_{\mathrm{FTpair,3D}}$ with the following properties: the circuit~$\cQ_{\mathrm{FTpair,3D}}$ implements a transfer of subsystems that maps each subsystem~$Q_{j_r}$ to the subsystem~$P_{i_r}$ for $r \in [L^2/2]$, i.e., an $L^2$-qubit state~$\Psi$ on the registers~$Q_1, \dots, Q_{L^2}$ is mapped according to
\begin{align}
\Psi_{Q_{i_1}Q_{j_1}\cdots Q_{i_{L^2/2}}Q_{j_{L^2/2}}} & \mapsto \Psi_{Q_{i_1}P_{i_1}\cdots Q_{i_{L^2/2}}P_{i_{L^2/2}}}\ .\label{eq:actionofcircuitqftpair3d}
\end{align}
Moreover, for any noisy implementation of the circuit~$\cQ_{\mathrm{FTpair}}$ with local stochastic noise of strength~$p \leq p_0$, the resulting state on the registers~$Q_{i_1}P_{i_1} \dots Q_{i_{L^2/2}} P_{i_{L^2/2}}$ is
\begin{align}
    F \Psi_{Q_{i_1}P_{i_1}\cdots Q_{i_{L^2/2}}P_{i_{L^2/2}}}
\end{align}
for local stochastic noise~$F \sim \cN(Cp^c)$.
\end{corollary}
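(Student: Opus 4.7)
My plan is to mirror the structure of the non-fault-tolerant Algorithm~\ref{alg:pairingupqubits}, replacing its entanglement-generation subroutine $\cQ_{\textrm{entangle}}$ with the fault-tolerant pairwise entanglement generation circuit of Theorem~\ref{thm:3DFTrouting}. Concretely, $\cQ_{\textrm{FTpair,3D}}$ will consist of three phases executed in parallel over $r \in [L^2/2]$: (i)~run the circuit of Theorem~\ref{thm:3DFTrouting} on the blue $R$-qubits to produce Bell pairs on $(R^{\textrm{blue}}_{v_{i_r}}, R^{\textrm{blue}}_{v_{j_r}})$; (ii)~apply on-site $\mathsf{SWAP}$ gates between $R^{\textrm{blue}}_{v_{i_r}}$ and $P_{i_r}$, and between $R^{\textrm{blue}}_{v_{j_r}}$ and $P_{j_r}$, moving the Bell pair onto the registers $(P_{i_r}, P_{j_r})$; (iii)~perform a Bell measurement on the co-located pair $(Q_{j_r}, P_{j_r})$ and apply the classically-controlled Pauli correction $X^{a_r} Z^{b_r}$ to $P_{i_r}$.

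The circuit is clearly local on $\frac{1}{m}(P_{Lm} \times P_{Lm} \times P_{4Lm})$ and of constant depth: phase~(i) is constant depth by Theorem~\ref{thm:3DFTrouting}, while phases~(ii) and~(iii) are each constant-depth layers of operations between co-located or nearest-neighbor qubits. Correctness in the noiseless case is the standard teleportation identity: phase~(i)+(ii) produce the state $\bigotimes_r \ket{\Phi}_{P_{i_r} P_{j_r}}$, and phase~(iii) teleports $Q_{j_r}$ onto $P_{i_r}$, yielding exactly the transfer described by Eq.~\eqref{eq:actionofcircuitqftpair3d}.

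For the noise analysis, the plan is a three-step bookkeeping argument. By Theorem~\ref{thm:3DFTrouting}, the state at the end of phase~(i) is the tensor product of $L^2/2$ Bell pairs on the blue qubits corrupted by local stochastic noise of strength $C_1 p^{c_1}$. Phase~(ii) is a depth-$1$ Clifford layer; the circuit-level Pauli errors associated with it are themselves local stochastic, and by the commutation rules of Lemma~\ref{lem:propertieslocalstochasticnoise} they combine with the residual noise from phase~(i) to give local stochastic noise on $\bigotimes_r \ket{\Phi}_{P_{i_r} P_{j_r}}$, at the cost of a constant rescaling. Phase~(iii) is a parallel noisy teleportation; by Lemma~\ref{lem:noisyquantumteleportation}, local stochastic noise on the input Bell pairs and on the data qubits $Q_{j_r}$, together with the circuit-level noise of this layer, propagates to a local stochastic effective error on the output registers $P_{i_r}$ of the same order in $p$. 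Composing these yields an effective error $F \sim \cN(Cp^c)$ on the final state for suitable constants $C, c > 0$, with threshold $p_0$ inherited as the minimum of the thresholds of the three ingredients.

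The main subtlety is the accounting in phase~(iii): a priori, the classical corrections could couple errors across the $L^2/2$ independent teleportation instances, and the Bell-pair noise and data-qubit noise act on registers touched by subsequent gates. This is precisely what Lemma~\ref{lem:noisyquantumteleportation} is designed to handle, and once that lemma is invoked the remainder is routine. No additional combinatorial work beyond what already went into proving Theorem~\ref{thm:3DFTrouting} is required.
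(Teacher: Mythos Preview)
Your proposal is correct and follows essentially the same approach as the paper: the paper's proof is a single sentence invoking Theorem~\ref{thm:3DFTrouting} for the Bell-pair generation, Lemma~\ref{lem:noisyquantumteleportation} for the parallel teleportation, and the fact that local stochastic noise remains local stochastic under a depth-$1$ circuit (your phase~(ii) SWAP layer). Your write-up is simply a more detailed unpacking of that sentence.
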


Following the same reasoning, we obtain a fault-tolerant qubit pairing circuit with local operations in a quasi-2D architecture as a consequence of Theorem~\ref{thm:FTroutingquasi2D}. Here we add two qubits~$\{Q_j, P_j\}$ at each site~$v_j \in \{(i,i,0) \mid i \in \mathbb{Z}_L\}$ in the grid graph~$\frac{1}{m}P_{Lm} \times P_{Lm} \times P_{m}$.
\begin{corollary}[Quasi-2D-local fault-tolerant qubit pairing]\label{cor:FTqubitpairquasi2D}
There is a constant threshold error strength $p_0>0$ and constants~$C>0$, $c\in (0,1]$ such that the following holds.
Let $L\geq 2$ be  an integer, and set $m:=82\lceil \log L\rceil$. 
Consider the embedded graph $\frac{1}{m}(P_{Lm}\times P_{Lm}\times P_{4Lm})$ with two qubits associated with each vertex and additional qubits~$\{Q_j, P_j\}$ at each site~$v_j$. 
Let us write $\{(i,i,0) \mid i \in \mathbb{Z}_L\}=\{v_1,\ldots,v_{L}\}$, and let $\{(v_{i_r},v_{j_r})\}_{r=1}^{L/2}$ be an arbitrary pairing of these sites. 
Then there is a constant-depth adaptive circuit~$\cQ_{\mathrm{FTpair,quasi2D}}$ with the following properties: the circuit~$\cQ_{\mathrm{FTpair,quasi2D}}$ implements a transfer of subsystems that maps each subsystem~$Q_{j_r}$ to the subsystem~$P_{i_r}$ for $r \in [L/2]$, i.e., an $L$-qubit state~$\Psi$ on the registers~$Q_1, \dots, Q_{L}$ is mapped according to
\begin{align}
\Psi_{Q_{i_1}Q_{j_1}\cdots Q_{i_{L/2}}Q_{j_{L/2}}} & \mapsto \Psi_{Q_{i_1}P_{i_1}\cdots Q_{i_{L/2}}P_{i_{L/2}}}\ .\label{eq:actionofcircuitqftpairquasi2d}
\end{align}
Moreover, for any noisy implementation of the circuit~$\cQ_{\mathrm{FTpair,quasi2D}}$ with  local stochastic noise of strength~$p \leq p_0$, the resulting state on the registers~$Q_{i_1}P_{i_1} \dots Q_{i_{L/2}} P_{i_{L/2}}$ is
\begin{align}
    F \Psi_{Q_{i_1}P_{i_1}\cdots Q_{i_{L/2}}P_{i_{L/2}}}
\end{align}
for local stochastic noise~$F \sim \cN(Cp^c)$.
\end{corollary}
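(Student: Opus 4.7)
The plan is to mirror the proof sketch the authors give for Corollary~\ref{cor:FTqubitpair3D}, substituting the quasi-2D entanglement-generation primitive of Theorem~\ref{thm:FTroutingquasi2D} for the 3D primitive of Theorem~\ref{thm:3DFTrouting}. The circuit~$\cQ_{\mathrm{FTpair,quasi2D}}$ would be defined as a three-phase composition. In Phase~1, invoke the adaptive circuit~$\pi$ of Theorem~\ref{thm:FTroutingquasi2D} to produce $L/2$ long-range Bell pairs on the register qubits $\{R^{(\mathrm{red})}_{v_{i_r}},R^{(\mathrm{green})}_{v_{j_r}}\}_{r=1}^{L/2}$. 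In Phase~2, for each $r\in[L/2]$ in parallel, perform a Bell measurement on the co-located pair $(Q_{j_r},R^{(\mathrm{green})}_{v_{j_r}})$ at site~$v_{j_r}$ and apply the resulting classically-controlled Pauli correction to~$R^{(\mathrm{red})}_{v_{i_r}}$. In Phase~3, for each $r$ apply a local $\mathrm{SWAP}$ between $R^{(\mathrm{red})}_{v_{i_r}}$ and~$P_{i_r}$ at site~$v_{i_r}$. Since $\{Q_j,P_j\}$ are placed at the same vertex~$v_j$ as the associated colored registers, Phases~2 and~3 act only on co-located qubits and thus form a depth-$O(1)$ local adaptive circuit; together with the constant-depth guarantee from Theorem~\ref{thm:FTroutingquasi2D}, $\cQ_{\mathrm{FTpair,quasi2D}}$ is constant-depth and local on $\frac{1}{m}(P_{Lm}\times P_{Lm}\times P_m)$. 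Correctness in the noiseless case is immediate from the standard teleportation identity applied independently to each pair.

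For the noise analysis under a noisy implementation with strength~$p\leq p_0$, I would combine three ingredients exactly as the authors do for the 3D case. First, Theorem~\ref{thm:FTroutingquasi2D} yields that after Phase~1 the state of the colored registers is a tensor product of Bell pairs corrupted by local stochastic noise~$F_0\sim\cN(C_0p^{c_0})$. Second, Lemma~\ref{lem:noisyquantumteleportation} (fault-tolerant parallel teleportation) controls how this input noise, together with the additional errors injected at Phase~2, propagates to local stochastic noise on the teleported qubits. Third, the depth-one Clifford~$\mathrm{SWAP}$ in Phase~3 preserves local stochasticity by Lemma~\ref{lem:propertieslocalstochasticnoise}. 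Composing these bounds yields that the final state on $\{Q_{i_r}P_{i_r}\}_{r=1}^{L/2}$ is $F\Psi_{Q_{i_1}P_{i_1}\cdots Q_{i_{L/2}}P_{i_{L/2}}}$ with $F\sim\cN(Cp^c)$ for constants~$C,c$ obtained by composing the constants from Theorem~\ref{thm:FTroutingquasi2D} with those of the constant-depth teleportation/SWAP layer.

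I do not expect a substantive new mathematical obstacle beyond what was already handled in the 3D case; the whole derivation is a routine transcription. The only points meriting explicit verification are (i) that every gate in Phases~2 and~3 acts on qubits co-located at the same vertex, which follows directly from the stated placement of $\{Q_j,P_j\}$ at the sites~$v_j$, and (ii) the standard bookkeeping that composing a constant-depth Clifford layer with the local stochastic output of Theorem~\ref{thm:FTroutingquasi2D} merely rescales the constants and exponents---both routine applications of Lemma~\ref{lem:propertieslocalstochasticnoise}. In effect, the proof is a one-line invocation of the same template used for Corollary~\ref{cor:FTqubitpair3D}, with the only meaningful change being the input architecture on which the fault-tolerant entanglement-generation primitive is invoked.
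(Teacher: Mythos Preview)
Your proposal is correct and matches the paper's approach exactly: the paper does not give a separate proof but simply states that Corollary~\ref{cor:FTqubitpairquasi2D} follows ``following the same reasoning'' as Corollary~\ref{cor:FTqubitpair3D}, which in turn is introduced as a consequence of Theorem~\ref{thm:3DFTrouting}, Lemma~\ref{lem:noisyquantumteleportation}, and the fact that local stochastic noise is preserved under a depth-$1$ circuit. Your three-phase decomposition (fault-tolerant entanglement generation via Theorem~\ref{thm:FTroutingquasi2D}, parallel teleportation, local $\mathrm{SWAP}$) and the cited lemmas are precisely the ingredients the paper invokes.
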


\subsection{Threshold theorem of quantum circuits localized by a parallel bus}\label{sec:thresholdtheorementirecircuit}
Having constructed fault-tolerant qubit pairing protocols, we are now in position to state and prove our main result. It states the functionality of a fault-tolerant circuit~$\cQ$ (which uses non-local gates) can be  realized by local circuits.
\begin{theorem}[Fault-tolerantly localizing a general adaptive circuit]\label{thm:localizingcircuit}
Let $\cQ$ be a general adaptive quantum circuit on $n=2k$~qubits. Let $T$ be the quantum depth of~$\cQ$.  Then there are adaptive quantum circuits~$\cQ'_{\mathrm{quasi2D}}$ and $\cQ'_{\mathrm{3D}}$ with the following properties:
\begin{enumerate}[(i)]
\item
By taking certain marginals (i.e., tracing out qubits and/or ignoring measurement results), the two circuits exactly simulate~$\cQ$.
\item Both circuits have quantum depth of order~$O(T)$.
\item
The circuits $\cQ'_{\mathrm{quasi2D}}$ and $\cQ'_{\mathrm{3D}}$ are geometrically local in quasi-2D and 3D-architectures, respectively (i.e., only involve local or nearest-neighbor operations on a corresponding grid graph). 
\item
The circuits~$\cQ'_{\mathrm{quasi2D}}$ and~$\cQ'_{\mathrm{3D}}$ use a total number of
\begin{align}
n^{\mathrm{tot}}_{\mathrm{quasi2D}} &= O(n^2\log^3 n)\qquad\textrm{ and }\qquad n^{\mathrm{tot}}_{\mathrm{3D}}=O(n^{3/2}\log^3 n)\label{eq:numberofqubitslocalizedcircuitsft}
\end{align}
qubits, respectively.
 \item 
    There exist a threshold~$p_0 > 0$ and some constants~$C > 0,  c > 0$ such that the following holds: Any noisy implementation of the circuits~$\cQ'_{\mathrm{quasi\text{-}2D}}$ and~$\cQ'_{\mathrm{3D}}$ under  arbitrary local stochastic noise of strength~$p \leq p_0$ are equivalent to a noisy implementation of the circuit~$\cQ$ with local stochastic noise of strength $C\cdot p^c$.
    \end{enumerate}
\end{theorem}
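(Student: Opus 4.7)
The plan is to emulate the construction of Lemma~\ref{lem:localizingcircuit} verbatim, but substituting the unprotected pairing circuits $\cQ_{\mathrm{pair}},\cQ_{\mathrm{pair}}^{-1}$ by the fault-tolerant analogues $\cQ_{\mathrm{FTpair,3D}}$ and $\cQ_{\mathrm{FTpair,quasi2D}}$ provided by Corollaries~\ref{cor:FTqubitpair3D} and~\ref{cor:FTqubitpairquasi2D}. For the 3D case I would use the embedded grid $\frac{1}{m}(P_{Lm}\times P_{Lm}\times P_{4Lm})$ with $L$ the smallest even integer at least $\sqrt{n}$ and $m=82\lceil\log L\rceil$, placing the $n$ data registers $\{Q_j\}$ and target slots $\{P_j\}$ at $n$ sites of the lowest floor $\mathbb{Z}_L^2\times\{0\}$; the quasi-2D case is analogous with $L=n$ and data placed along the diagonal of the lowest floor of $\frac{1}{m}(P_{Lm}\times P_{Lm}\times P_m)$. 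Each original layer $\cM^{(t)}=\bigotimes_r \cM^{(t,r)}_{Q_{i^{(t)}_r}Q_{j^{(t)}_r}}$ of $\cQ$ is then replaced by the sandwich
\begin{align}
\widehat{\cM}^{(t)}:= \cQ_{\mathrm{FTpair}}^{-1}\circ\Bigl(\bigotimes_{r=1}^{k} \cM^{(t,r)}_{Q_{i^{(t)}_r}P_{i^{(t)}_r}}\Bigr)\circ \cQ_{\mathrm{FTpair}}\,,
\end{align}
so that after the inner $\cQ_{\mathrm{FTpair}}$ the two qubits acted on by each $\cM^{(t,r)}$ are co-located and the middle layer is geometrically local; the final circuit is $\cQ'=\widehat{\cM}^{(T)}\circ\cdots\circ\widehat{\cM}^{(1)}$.

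Properties~(i)--(iv) are immediate. Noiselessly $\cQ_{\mathrm{FTpair}}$ implements the pairing permutation of Lemma~\ref{lem:qubitroutingsimple}, so marginalizing the auxiliary registers recovers $\cQ$. The depth is $O(T)$ because $\cQ_{\mathrm{FTpair}}$ and its inverse are constant-depth and the middle layer has depth one. Locality is by construction. The qubit counts come from counting vertices of the embedded grids with $O(1)$ qubits per vertex: $4(Lm)^3=O(n^{3/2}\log^3 n)$ in 3D and $(Lm)^2\cdot m=O(n^2\log^3 n)$ in quasi-2D.

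The heart of the proof is~(v), which I would establish layer by layer. The key input is Corollary~\ref{cor:FTqubitpair3D}/\ref{cor:FTqubitpairquasi2D}: a noisy implementation of $\cQ_{\mathrm{FTpair}}$ at strength $p\leq p_0$ acts as the ideal pairing on the data followed by local stochastic noise of strength $C_0 p^{c_0}$ on the data registers. Applying this to the inner $\cQ_{\mathrm{FTpair}}$ of $\widehat{\cM}^{(t)}$, propagating the resulting effective error through the middle layer $\bigotimes_r \cM^{(t,r)}$ using Lemma~\ref{lem:propertieslocalstochasticnoise} (absorbing the per-gate noise of that layer into a local stochastic error of the same order), and then applying the corollary again to the outer $\cQ_{\mathrm{FTpair}}^{-1}$, one obtains that the noisy $\widehat{\cM}^{(t)}$ is equivalent in output distribution to the ideal $\cM^{(t)}$ followed by local stochastic noise of strength $C_1 p^{c_1}$ on the $n$ data registers. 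Concatenating across the $T$ layers rewrites the noisy execution of $\cQ'$ as a noisy implementation of $\cQ$ with local stochastic noise of strength $C p^c$ inserted between its layers, which is precisely what property~(v) demands.

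The main obstacle I anticipate is this error-propagation step through the middle layer, which may contain non-Clifford ingredients such as $T$-state preparations and adaptive measurement-controlled Paulis. The required statement is that a constant-depth adaptive layer, sandwiched by local stochastic noise of strength $p$, produces output local stochastic noise of strength $O(p^c)$; this is exactly the content of the propagation rules encapsulated in Lemma~\ref{lem:propertieslocalstochasticnoise}, together with Lemmas~\ref{lem:noisyentanglementswapping} and~\ref{lem:noisyquantumteleportation} for the teleportation primitives internal to $\cQ_{\mathrm{FTpair}}$. Choosing $p_0$ small enough so that all these invocations remain within their respective thresholds completes the argument.
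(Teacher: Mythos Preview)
Your proposal is correct and follows exactly the paper's approach: define the localized circuit on the refined grids $\frac{1}{m}(P_{Lm}\times P_{Lm}\times P_{4Lm})$ respectively $\frac{1}{m}(P_{Lm}\times P_{Lm}\times P_{m})$ with $m=82\lceil\log L\rceil$, replace each layer $\cM^{(t)}$ by the sandwich $(\cQ^{(t)}_{\mathrm{FTpair}})^{-1}\circ(\bigotimes_r \cM^{(t,r)}_{Q_{i_r}P_{i_r}})\circ\cQ^{(t)}_{\mathrm{FTpair}}$, and read off (i)--(iv) directly. In fact the paper's proof stops at the construction and declares the properties ``easy to check,'' so your layer-by-layer argument for (v) via Corollaries~\ref{cor:FTqubitpair3D}/\ref{cor:FTqubitpairquasi2D} and Lemma~\ref{lem:propertieslocalstochasticnoise} is more explicit than what the paper supplies; the only imprecision is that the noisy $\widehat{\cM}^{(t)}$ is equivalent to $\cM^{(t)}$ \emph{preceded and followed} by local stochastic noise (not just followed), but since the ideal pairing is a qubit permutation these pre-errors commute through and merge with the post-error of the previous layer exactly as your concatenation step suggests.
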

\begin{proof} 
We use the graphs
\begin{align}
G_{\mathrm{quasi2D}}&=\frac{1}{m} P_{Lm} \times P_{Lm} \times P_{m} \qquad\textrm{ and } \qquad G_{\mathrm{3D}}=\frac{1}{m}P_{Lm}\times P_{Lm}\times P_{4Lm}
\end{align}
with
\begin{align}
    L := \begin{cases}
        \lceil n\rceil & \text{in $G_{\mathrm{quasi2D}}$} \\
        \ceil{\sqrt{n}} & \text{in $G_{\mathrm{3D}}$}  
    \end{cases} \ , \qquad \text{and} \qquad
    m:=82\lceil\log L\rceil \ ,
\end{align}
see Theorem~\ref{thm:3DFTrouting} respectively Theorem~\ref{thm:FTroutingquasi2D}. For $G_{\mathrm{quasi2D}}$ we
will use the vertices
\begin{align}
S_{\mathrm{quasi2D}}=\{v_i:=(i-1,i-1,0)\}_{i\in [n]}\subset \mathbb{Z}_L^2\times \{0\}\ ,
\end{align}
whereas for the 3D~grid graph~$G_{\mathrm{3D}}$, we pick an  arbitrary subset 
\begin{align}
S_{\mathrm{3D}}&=\{v_i\}_{i=1}^n\subseteq\mathbb{Z}_L^2\times \{0\}
\end{align}
of~$\mathbb{Z}_L^2\times \{0\}$ of  size~$|S_{\mathrm{3D}}|=n\leq L^2$.  
(We will attach our ``computational'' qubits~$Q_1\cdots Q_n$ associated with the circuit~$\cQ$ to these vertices.)

Associating two and three auxiliary qubits respectively to each vertex of the graphs~$G_{\mathrm{quasi2D}}$ and~$G_{\mathrm{3D}}$, respectively, 
and attaching a ``computational'' qubit~$Q_j$ and a ``register'' qubit~$P_j$ to each vertex~$v_j$ for $j\in [n]$, we are using a total number of
\begin{align}
n^{\mathrm{tot}}_{\mathrm{quasi2D}} = 2n + 2 \cdot L^2 m^3  =O(n^2\log^3 n)\qquad\textrm{ and }\qquad n^{\mathrm{tot}}_{\mathrm{3D}}=2n+3\cdot4(Lm)^3=O(n^{3/2}\log^3n)
\end{align}
qubits, respectively. This agrees with~Eq.~\eqref{eq:numberofqubitslocalizedcircuitsft}. 

We construct our circuits~$\cQ'_{\mathrm{quasi2D}}$ and $\cQ'_{\mathrm{3D}}$ by replacing -- in each layer~$t\in [T]$ of the original adaptive circuit~$\cQ$ -- the operation~$\cM^{(t)}$ (see Eq.~\eqref{eq:generallayerm}) by a geometrically local adaptive circuit~$\widehat{\cM}^{(t)}$. 
Let~$\{(i_r^{(t)},j_r^{(t)})\}_{r=1}^k$ be the pairing relevant at time step (layer)~$t\in [T]$, and let 
\begin{align} \label{eq:qpairtpairft} 
\cQ^{(t)}_{\textrm{FTpair}}=\cQ_{\textrm{FTpair}}\left(\{(i_r^{(t)},j_r^{(t)})\}_{r=1}^k\right), \qquad \text{and} \qquad (\cQ^{(t)}_{\textrm{FTpair}})^{-1}=\cQ^{-1}_{\textrm{FTpair}}\left(\{(i_r^{(t)},j_r^{(t)})\}_{r=1}^k\right) 
\end{align}
be the constant-depth adaptive circuits introduced in Section~\ref{sec:ftpairwiseentanglementgeneration}, i.e., the circuit $\cQ_{\mathrm{FTpair}}$ is either $\cQ_{\mathrm{FTpair,quasi2D}}$ or $\cQ_{\mathrm{FTpair,3D}}$.  Then the new circuits~$\cQ_{\mathrm{quasi2D}}'$ respectively~$\cQ_{\mathrm{3D}}'$ are obtained by replacing -- for each~$t\in [T]$ -- the operation
~$\cM^{(t)}$ in the circuit~$\cQ$ 
by the composition
\begin{align}\label{eq:qftpair}
\widehat{\cM}^{(t)}:=
(\cQ^{(t)}_{\textrm{FTpair}})^{-1}\circ
\left(\bigotimes_{r=1}^k \cM^{(t,r)}_{Q_{i^{(t)}_r}P_{i^{(t)}_r}}\right)
\circ \cQ^{(t)}_{\textrm{FTpair}}\ .
\end{align}
In other words, the qubits paired in layer~$t$ are placed next to each other by application of~$\cQ^{(t)}_{\textrm{FTpair}}$. Then  each (two-qubit) 
operation~$\cM^{(t,r)}$, $r\in [k]$ can be applied locally. Subsequently,  the qubits are 
moved back to their original positions by application of $(\cQ^{(t)}_{\textrm{FTpair}})^{-1}$.  The new circuits~$\cQ_{\textrm{quasi2D}}'$, $\cQ_{\textrm{3D}}'$ then are obtained as the composition
\begin{align}
\cQ'&:=\widehat{\cM}^{(T)}\circ \cdots \circ \widehat{\cM}^{(1)}\ .
\end{align} 
It is easy to check that this has all the claimed properties. In particular, for each $t\in [T]$, both circuits~\eqref{eq:qftpair} are adaptive constant-depth circuits, see Corollaries~\ref{cor:FTqubitpair3D} and~\ref{cor:FTqubitpairquasi2D}. 
This immediately implies that the quantum circuit depth of~$\cQ'$ is of order~$O(T)$. 
 \end{proof}

\subsection{Local quantum fault-tolerance in 3D\label{sec:ftquantumcomputation}}
Here we explain how Corollary~\ref{cor:faulttolerancelocal} follows from our construction when applied to the
fault-tolerance scheme of~\cite{yamasakiTimeEfficientConstantSpaceOverheadFaultTolerant2024}. 
The latter takes an ideal quantum circuit~$\cQ_{\mathrm{ideal}}$ and transforms it into a fault-tolerant (adaptive) circuit~$\cQ_{\mathrm{FT}}$ whose output distribution approximates that of the ideal circuit even when it is implemented imperfectly.
Paraphrased using our terminology, the main result of Yamasaki and Koashi gives the following.
\begin{theorem}[Fault-tolerant quantum computation with non-local gates~\cite{yamasakiTimeEfficientConstantSpaceOverheadFaultTolerant2024}]\label{thm:faulttolerancelocalresultyamasakikoashi}
There is a threshold error strength~$p_0>0$ such that the following holds for all sufficiently large~$n$ and an arbitrary constant~$\varepsilon \in (0,1)$. 
Let~$\cQ_{\mathrm{ideal}}$ be an adaptive quantum circuit using $n$~qubits and having quantum depth~$T(n)=O(\poly(n))$.
There is a circuit~$\cQ_{\mathrm{FT}}$ with the following properties:
\begin{enumerate}[(i)]
    \item 
    The circuit~$\cQ_{\mathrm{FT}}$ uses $O(n)$ qubits.
    \item The quantum depth of~$\cQ_{\mathrm{FT}}$ is of order $T(n) \cdot \exp(O(\log^2(\log(n/\varepsilon))))$.
    \item 
    Any noisy implementation of~$\cQ_{\mathrm{FT}}$ with local stochastic errors of strength~$p \leq p_0$ has an output distribution whose  total variation distance to the output distribution of~$\cQ_{\mathrm{ideal}}$  is upper bounded by~$\varepsilon$. 
\end{enumerate}
\end{theorem}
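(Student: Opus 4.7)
The plan is to observe that this statement is essentially a restatement, in our terminology and notation, of the main result of Yamasaki and Koashi~\cite{yamasakiTimeEfficientConstantSpaceOverheadFaultTolerant2024}. Their construction concatenates a family of small-size quantum codes together with gate teleportation from magic states, producing a fault-tolerant adaptive circuit that achieves constant space overhead (so $O(n)$ physical qubits to simulate $n$ ideal logical qubits) and quasi-polylogarithmic time overhead when the constituent operations are affected by circuit-level local stochastic noise of strength below a constant threshold.

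The proof thus reduces to a careful verification of the translation between frameworks. First, I would check that the class of ``ideal'' quantum circuits considered in~\cite{yamasakiTimeEfficientConstantSpaceOverheadFaultTolerant2024} matches our notion of a general adaptive quantum circuit (Section~\ref{sec:defadaptivequantumcircuit}): both admit preparations of computational basis states and magic states, one- and two-qubit Clifford unitaries, single-qubit computational basis measurements, and classical adaptivity governed by efficient classical computation. Second, I would confirm that the fault model in~\cite{yamasakiTimeEfficientConstantSpaceOverheadFaultTolerant2024} coincides (up to a constant factor in the threshold~$p_0$) with the circuit-level local stochastic noise model described in Section~\ref{sec:localstochasticnoise}, so that the hypothesis $p \leq p_0$ transports verbatim. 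Third, I would read off the stated quantitative bounds: an $O(n)$ total qubit count from the constant space overhead result, and a depth bound of the form $T(n) \cdot \exp(O(\log^2 \log(n/\varepsilon)))$, together with the total variation distance $\leq \varepsilon$ of the output distribution from that of~$\cQ_{\mathrm{ideal}}$.

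The only subtlety worth flagging, which we already alluded to in the discussion preceding this subsection, concerns the treatment of idling (wait) operations that occur while classical decoding or the calculation of gate-teleportation Pauli corrections is being performed. In~\cite{yamasakiTimeEfficientConstantSpaceOverheadFaultTolerant2024}, these idling time steps are explicitly included in the quantum circuit depth, and this is already reflected in the $\exp(O(\log^2 \log(n/\varepsilon)))$ factor we import here. Consequently, nothing additional is required on our part beyond quoting the result correctly. The hard part, namely the error-propagation analysis of the recursively concatenated scheme under local stochastic noise and the bookkeeping for the time-efficient classical decoding, lies entirely within~\cite{yamasakiTimeEfficientConstantSpaceOverheadFaultTolerant2024}; we treat it here as a black box, reserving the substantive new work for the derivation of Corollary~\ref{cor:faulttolerancelocal} from this theorem together with Theorem~\ref{thm:localizingcircuit}.
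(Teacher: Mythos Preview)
Your proposal is correct and matches the paper's treatment: this theorem is not proved in the paper but is stated as a paraphrase of the main result of~\cite{yamasakiTimeEfficientConstantSpaceOverheadFaultTolerant2024}, and the paper uses it as a black box exactly as you describe. Your added discussion of the translation between frameworks and the subtlety about idling operations is more detailed than what the paper provides, but entirely consistent with it.
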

Applying our construction (that is, Theorem~\ref{thm:localizingcircuit}) to 
the circuit~$\cQ_{\mathrm{FT}}$  yields a 3D-local circuit~$\cQ'$ with properties as stated in Corollary~\ref{cor:faulttolerancelocal}, as desired.

\subsubsection*{Acknowledgements}
SC and RK gratefully acknowledge support by the European Research Council under grant agreement no.\ 101001976 (project EQUIPTNT).
SC thanks Sebastian Stengele for discussion on routing problems. 
\printbibliography
\newpage 
\appendix

\section{Local stochastic noise and simple  adaptive quantum circuits \label{sec:lemmasnoisycircuit}}
Local stochastic noise is
compatible with 
Clifford circuits, as expressed by the following Lemma, (cf. e.g.,~\cite[Lemma 11]{bravyiQuantumAdvantageNoisy2020}  and~\cite{fawziConstantOverheadQuantum2018}). 

\begin{lemma}[Properties of local stochastic noise] \label{lem:propertieslocalstochasticnoise}
Consider a system of~$n$ qubits. Then the following holds.
\begin{enumerate}[(i)]
    \item 
    Suppose $E \sim \cN(p)$ and $E'$ is a random Pauli such that $\supp(E') \subseteq \supp(E)$ with probability $1$. Then $E' \sim \cN(p)$. \label{it:subset}
    \item 
    Suppose $E \sim \cN(p)$ and $F \sim \cN(q)$ are random Paulis which may be dependent. Then $E \cdot F \sim \cN(q')$ where $q' = 2 \max \{\sqrt{p}, \sqrt{q}\}$. \label{it:productdependent}
    \item 
    Suppose $E \sim \cN(p)$ is a random Pauli and let $U$ be a tensor product of single- and two-qubit Clifford unitaries. Then $UEU^{\dagger} \sim\cN(q')$ where $q' = \sqrt{2p}$. 
    \label{it:clifford}
    \item 
    Let $A\subseteq [n]$ be a subset of qubits.
    Suppose $E \sim \cN(p)$, $F \sim \cN(q)$ and $\supp(E) \subseteq A$ and $\supp(F) \subseteq A^c=:[n]\backslash A$ with probability $1$. Then $E \cdot F \sim \cN(q')$ where $q' = \max \{\sqrt{p},\sqrt{q}\}$. \label{it:productdisjoint}
    \item 
    Let $\pi$ be a permutation on $[n]$, and let $\pi(E)$ be the $n$-qubit Pauli which acts on the $i$-th qubit as $\pi(i)$. Then $\pi(E) \sim \cN(p)$.
    \label{it:permutation}
\end{enumerate}
\end{lemma}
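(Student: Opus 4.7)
The plan is to verify each of the five items by directly manipulating the defining inequality $\Pr[F\subseteq \supp(\cdot)]\leq r^{|F|}$ for arbitrary subsets $F\subseteq [n]$. Items~(i) and~(v) are essentially immediate and I would dispatch them first. For~(i), the almost-sure inclusion $\supp(E')\subseteq \supp(E)$ gives the event inclusion $\{F\subseteq \supp(E')\}\subseteq \{F\subseteq \supp(E)\}$, so $\Pr[F\subseteq \supp(E')]\leq p^{|F|}$ follows directly from $E\sim\cN(p)$. For~(v), a permutation acts bijectively on supports, so $\{F\subseteq \supp(\pi(E))\}=\{\pi^{-1}(F)\subseteq \supp(E)\}$ and $|\pi^{-1}(F)|=|F|$ gives the claim.

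For~(iv), I would leverage the bipartition of~$[n]$: given $F\subseteq [n]$, set $F_A=F\cap A$ and $F_{A^c}=F\setminus A$. The almost-sure containments $\supp(E)\subseteq A$ and $\supp(F)\subseteq A^c$ imply that $F\subseteq \supp(EF)$ holds if and only if $F_A\subseteq \supp(E)$ and $F_{A^c}\subseteq \supp(F)$. Bounding the joint probability via $\Pr[X\cap Y]\leq \sqrt{\Pr[X]\Pr[Y]}$ and applying the marginal bounds yields
\begin{align}
\Pr[F\subseteq \supp(EF)]\leq p^{|F_A|/2}q^{|F_{A^c}|/2}\leq (\max\{\sqrt{p},\sqrt{q}\})^{|F|}\ ,
\end{align}
as required.

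Item~(ii) is a more symmetric variant: here $E$ and $F$ may overlap in their supports, so I would instead enumerate all ways the target set~$F$ could be ``covered'' by $\supp(E)$ and $\supp(F)$. Using $\supp(EF)\subseteq \supp(E)\cup \supp(F)$, the event $F\subseteq \supp(EF)$ implies the existence of some $A\subseteq F$ with $A\subseteq \supp(E)$ and $F\setminus A\subseteq \supp(F)$ (namely $A=F\cap \supp(E)$). A union bound over all such~$A$ together with the same $\Pr[X\cap Y]\leq \sqrt{\Pr[X]\Pr[Y]}$ trick gives
\begin{align}
\Pr[F\subseteq \supp(EF)]\leq \sum_{A\subseteq F}p^{|A|/2}q^{(|F|-|A|)/2}=(\sqrt{p}+\sqrt{q})^{|F|}\leq (2\max\{\sqrt{p},\sqrt{q}\})^{|F|}
\end{align}
by the binomial identity. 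For~(iii), I would exploit the locality of $U=\bigotimes_\ell U_\ell$: each $U_\ell$ acts on a block $A_\ell\subseteq [n]$ of size at most two, and Clifford conjugation respects these blocks, giving $\supp(UEU^\dagger)\subseteq \bigcup\{A_\ell : A_\ell\cap \supp(E)\neq \emptyset\}$. Given a set $F$ with $F\subseteq \supp(UEU^\dagger)$, let $m$ be the number of blocks $A_\ell$ intersecting~$F$; then $m\geq |F|/2$ since each block has size at most two. Picking, for each such block, a representative in $A_\ell\cap \supp(E)$ produces a subset of $\supp(E)$ of size~$m$, and a union bound over the at most~$2^m$ choices of representatives gives $\Pr[F\subseteq \supp(UEU^\dagger)]\leq (2p)^m\leq (2p)^{|F|/2}=(\sqrt{2p})^{|F|}$.

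The main subtlety is item~(ii), where the joint distribution of~$(E,F)$ is only constrained through its marginals; this is what forces the symmetric $\min\leq \sqrt{\cdot}$ detour and accounts for the factor of~$2$ (rather than~$1$) in~$q'$. The Clifford case~(iii) then follows the same pattern once one notices that the constant-locality of~$U$ means a small~$\supp(E)$ can only inflate the support by a factor of two. Once these two devices are identified, the remaining parts are routine combinatorial bookkeeping.
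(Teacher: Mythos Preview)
Your proposal is correct and matches the paper's approach. The paper delegates items~(i)--(iii) to~\cite[Lemma~11]{bravyiQuantumAdvantageNoisy2020} and only writes out~(iv) and~(v) explicitly; your arguments for~(iv) and~(v) are essentially identical to the paper's (the paper phrases the key step as $\min\{a,b\}\leq\sqrt{ab}$ rather than $\Pr[X\cap Y]\leq\sqrt{\Pr[X]\Pr[Y]}$, but this is the same inequality), and your explicit arguments for~(i)--(iii) are the standard ones that the cited reference contains. One small implicit assumption in your treatment of~(iii): the step $(2p)^m\leq (2p)^{|F|/2}$ uses $2p\leq 1$, but when $2p>1$ the conclusion $q'=\sqrt{2p}>1$ makes the bound trivial, so this is harmless.
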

\begin{proof}
    See~\cite[Lemma 11]{bravyiQuantumAdvantageNoisy2020} for \eqref{it:subset}--\eqref{it:clifford}.
    For \eqref{it:productdisjoint}, observe that  for any subset $I \subseteq [n]$, we have
    \begin{align}
        \Pr[I \subseteq \supp(E \cdot F)]
        &= \Pr[I \cap A \subseteq \supp(E) \ \text{and} \ I \cap A^c \subseteq \supp(F)]\\
        &\leq \min\{p^{\abs{I \cap A}}, q^{\abs{I \cap A^c}}\} \\
        &\leq \min \{ \left(\max \{p,q\}\right)^{\abs{I \cap A}}, \left(\max \{p,q\}\right)^{\abs{I \cap A^c}} \} \\
        &\leq \left(\max\{p,q\}\right)^{\frac{\abs{I \cap A} + \abs{I \cap A^c}}{2}} \\
        &= \left(\max\{\sqrt{p},\sqrt{q}\}\right)^{\abs{I}} \, ,
    \end{align}
    where we used the definition of local stochastic noise and the fact that $\min \{a,b\} \leq \sqrt{ab}$ for any $a, b \geq 0$.
    For \eqref{it:permutation}, observe that for any subset $I = \{i_1, \dots, i_k\} \subseteq [n]$, we have
    $I \subseteq \supp(\pi(E))$ if and only if $\{\pi^{-1}(i_1), \dots, \pi^{-1}(i_k)\} =: \pi^{-1}(I) \subseteq \supp(E)$. Therefore,
    \begin{align}
        \Pr[I \subseteq \supp(\pi(E))] = \Pr[\pi^{-1}(I) \subseteq \supp(E)] \leq p^{\abs{\pi^{-1}(I)}} = p^{\abs{I}} \, , 
    \end{align}
    where the fact that $E$ is local stochastic noise is used. 
\end{proof}

Lemma~\ref{lem:propertieslocalstochasticnoise} allows one to commute errors through adaptive circuits explicitly  if the adaptivity is expressed by linear functions.  We formulate this as follows:
\begin{lemma}[Noisy adaptive quantum circuits] \label{lem:noisyadaptivecircuits}
    Let $n, n_1, n_2 \in \mathbb{N}$ with $n=n_1 + n_2$ and 
    let 
    \begin{align}
    A = (a_{ij}) \ , \  B = (b_{ij}) \in \mathbb{F}_2^{n_1 \times n_2} 
    \end{align}
    be matrices.
    Consider a set of qubits~$\cA = \cA_1 \cup \cA_2 = [n]$ where~$\cA_1 = [n_1]$ and~$\cA_2 = \{n_1+1, \dots, n\}$.
    For $x\in \mathbb{F}_2^n$
    let $X(x)=\prod_{j=1}^n X_j^{x_j}$
    and $Z(x)=\prod_{j=1}^n Z_j^{x_j}$
    be products of Pauli-$X$ and Pauli-$Z$ operators with support defined by~$x$. For each~$x_1 \in \mathbb{F}_2^{n_1}$, let us denote by~$X(x_1)$ and~$Z(x_1)$ Pauli operators on~$\cA_1$ defined analogously.
    
    Let $\cQ$ be an adaptive quantum circuit on the qubits~$\cA$ of the following:
    \begin{enumerate}
    \item 
The circuit~$\cQ$ measures all qubits belonging to~$\cA_2$ in the computational basis, obtaining a measurement outcome~$z\in \mathbb{F}_2^{n_2}$. 
\item Subsequently, an $n_1$-qubit Pauli gate~$C(z)\in \cP_{n_1}$
is applied to the qubits~$\cA_1$, where we assume that 
    \begin{align}
        C(z) = X(Az) Z(Bz) \qquad \text{for all measurement outcomes} \quad z \in \mathbb{F}_2^{n_2} \ .
    \end{align}
    \end{enumerate}
    For any Pauli operator~$E$ on~$\cA$, there exists a Pauli operator~$F=F(E)$ on~$\cA_1$ determined by~$E$ satisfying the following:
    \begin{enumerate}[(i)]
        \item \label{it:noisyEFequiv}
        A noisy implementation~$\cQ_1=\cQ\circ \cU_E$ of the circuit~$\cQ$ associated with the error~$E$ acting on~$\cA$ before the (ideal) circuit~$\cQ$ is applied is equivalent to a noisy implementation~$\cQ_2=\cU_F\circ\cQ$ of the circuit~$\cQ$ associated with the error~$F$ acting on~$\cA_2$ after the (ideal) circuit~$\cQ$ is applied (see Fig.~\ref{fig:noisyadaptivecircuitsequiv}.) 
        \item \label{it:Fstrength}
        Let $w \in \mathbb{N}$ with $n_1 \cdot w \leq n_2$, and
        suppose that all non-zero row vectors of~$A$ and~$B$ have weight $w$ and all column vectors of~$A$ and~$B$ have weight at most 1, i.e.,
        \begin{align}
        \begin{matrix}
            \sum_{j=1}^{n_2} a_{ij}&\in&\{0,w\} \ ,&\sum_{j=1}^{n_2} b_{ij}&\in&\{0,w\} &\qquad  \text{for all}& i \in[n_1] \\
            \sum_{i=1}^{n_1} a_{ij}&\leq& 1 \ , &\sum_{i=1}^{n_1} b_{ij}&\leq&1 &\qquad \text{for all} &j \in[n_2] \ .
        \end{matrix}
        \end{align}
        For any $p \in [0,1]$, if $E \sim \cN(p)$, then $F \sim \cN(p')$, where 
        \begin{align}\label{eq:Fstrength}
            p' = 4 \cdot p^{\frac{1}{4w}} \ .
        \end{align}
    \end{enumerate}

\begin{figure}
\begin{quantikz}
\lstick[3]{$\mathcal{A}_1$}   & \gate[6]{E}   & \qw       &\gate[3]{Z(Bz)} & \gate[3]{X(Az)}      & \qw \\ 
                            &               & \qw       &                &                      & \qw \\
                            &               & \qw       &                &                      & \qw \\ 
\lstick[3]{$\mathcal{A}_2$}   &               & \meter{$z_1$}  &\cwbend{-1}     & \cwbend{-1}     &  \\                   
                            &               & \meter{$\vdots$}  &\cwbend{-2}     & \cwbend{-2}  & \\ 
                            &               & \meter{$z_{n_2}$}  &\cwbend{-3}     & \cwbend{-3}     &
\end{quantikz}
=
\begin{quantikz}
\lstick[3]{$\mathcal{A}_1$}   & \qw       &\gate[3]{Z(Bz)} & \gate[3]{X(Az)}      & \gate[3]{F} &\qw \\ 
                            & \qw       &                &                      & \qw &\qw \\
                            & \qw       &                &                      & \qw &\qw\\ 
\lstick[3]{$\mathcal{A}_2$}   & \meter{$z_1$}  &\cwbend{-1}     & \cwbend{-1}     & \\                   
                            & \meter{$\vdots$}  &\cwbend{-2}     & \cwbend{-2}  & \\ 
                            & \meter{$z_{n_2}$}  &\cwbend{-3}     & \cwbend{-3}     &
\end{quantikz}
\caption{Two different noisy implementations~$\cQ_1$ (left) and~$\cQ_2$ (right) of an adaptive quantum circuit~$\cQ$ are equivalent if two Paulis $E \in \cP_{n}$ and $F \in \cP_{n_2}$ satisfies Eq.~\eqref{eq:FintermsofE}.\label{fig:noisyadaptivecircuitsequiv}}
\end{figure}
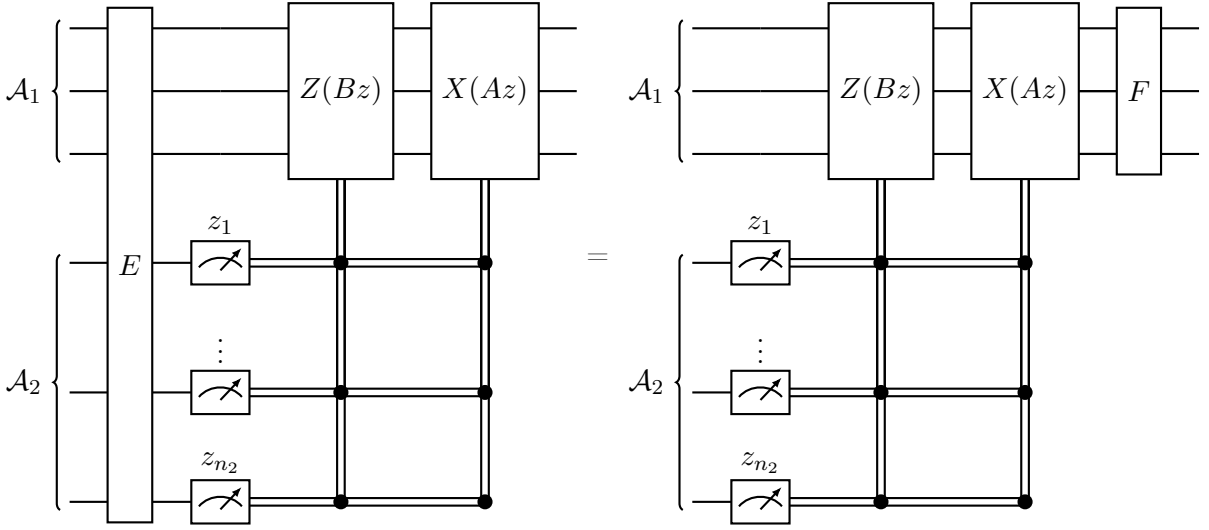

\end{lemma}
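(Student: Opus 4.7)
The plan for part~\eqref{it:noisyEFequiv} is to write $E = X(e_X^1)Z(e_Z^1) \otimes X(e_X^2)Z(e_Z^2)$ (up to a global phase), with $(e_X^i, e_Z^i) \in \mathbb{F}_2^{n_i} \times \mathbb{F}_2^{n_i}$ the Pauli data on $\cA_i$, and to push $E$ through the measurement-and-correction structure of $\cQ$. The $X(e_X^2)$ component flips the computational-basis outcomes on $\cA_2$, so if the ideal outcome would be $z'$, the noisy implementation $\cQ_1$ observes $z = z' + e_X^2$; the $Z(e_Z^2)$ component only contributes an outcome-dependent global phase that washes out at the density-matrix level. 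The applied correction therefore decomposes (up to phase) as $C(z) = C(z') \cdot X(Ae_X^2) Z(Be_X^2)$. Comparing with $\cQ_2$ -- which applies $C(z')$ followed by $F$ to the ideal post-measurement state -- and relabeling the $\cA_2$ measurement record by the deterministic shift $e_X^2$, the two channels on $\cA_1$ agree iff
\begin{align}
    F(E) = X(e_X^1 + A e_X^2) \, Z(e_Z^1 + B e_X^2) \qquad \text{(up to phase).} \label{eq:FintermsofE}
\end{align}

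For part~\eqref{it:Fstrength}, I would factor $F = F_1 \cdot F_2$ (up to phase) with $F_1 := X(e_X^1) Z(e_Z^1)$ and $F_2 := X(A e_X^2) Z(B e_X^2)$, both supported on $\cA_1$. Since $\supp(F_1) \subseteq \supp(E) \cap \cA_1$, Lemma~\ref{lem:propertieslocalstochasticnoise}\eqref{it:subset} gives $F_1 \sim \cN(p)$. For $F_2$, the key observation is that $i \in \supp(F_2)$ implies $\supp(e_X^2) \cap \mathrm{nbrs}(i) \neq \emptyset$, where $\mathrm{nbrs}(i) := \{j \in \cA_2 : a_{ij} = 1 \text{ or } b_{ij} = 1\}$. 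The row-weight hypothesis yields $|\mathrm{nbrs}(i)| \leq 2w$; the column-weight hypothesis forces each $j \in \cA_2$ to lie in $\mathrm{nbrs}(i)$ for at most two distinct $i \in [n_1]$. For $I \subseteq [n_1]$ with $|I| = k$, a union bound over the $\leq (2w)^k$ choice functions $\phi \colon I \to \cA_2$ with $\phi(i) \in \mathrm{nbrs}(i)$, combined with the image-size lower bound $|\phi(I)| \geq k/2$ coming from the column-weight condition, yields
\begin{align}
    \Pr[I \subseteq \supp(F_2)] \leq \sum_{\phi} p^{|\phi(I)|} \leq (2w)^k \, p^{k/2} \ ,
\end{align}
so $F_2 \sim \cN(2w \sqrt{p})$. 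Applying Lemma~\ref{lem:propertieslocalstochasticnoise}\eqref{it:productdependent} to $F_1 \cdot F_2$ and using $\sqrt{2w\sqrt{p}} \geq \sqrt{p}$ for $p \leq 1$ and $w \geq 1$, I conclude $F \sim \cN\bigl(2\sqrt{2w}\, p^{1/4}\bigr)$.

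To match the stated form $p' = 4 p^{1/(4w)}$, I would then use a brief case split. If $p \geq 4^{-4w}$, then $p' \geq 1$ and the claim holds trivially since any Pauli is $\cN(c)$ for $c \geq 1$. Otherwise $p < 4^{-4w}$, and the desired inequality $2\sqrt{2w}\, p^{1/4} \leq 4 p^{1/(4w)}$ rearranges to $\sqrt{2w}/2 \leq p^{-(w-1)/(4w)}$; since $p^{-(w-1)/(4w)} \geq 4^{w-1}$ in this regime, it suffices to verify $\sqrt{2w}/2 \leq 4^{w-1}$ for integer $w \geq 1$, which is immediate (the right-hand side grows exponentially while the left grows as $\sqrt{w}$).

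The main obstacle is the combinatorial estimate on $\Pr[I \subseteq \supp(F_2)]$: both structural hypotheses on $A$ and $B$ must be exploited simultaneously, with the row-weight bound controlling the count of choice functions (the $(2w)^k$ factor) and the column-weight bound forcing the image lower bound $|\phi(I)| \geq k/2$ (the exponent $k/2$ on $p$). The rather weak form $p^{1/(4w)}$ in the lemma statement then follows cleanly from the case analysis; a sharper bound with scaling $p^{1/4}$ is in fact produced by the same argument, but the stated form is adequate for the downstream fault-tolerance applications in the main text.
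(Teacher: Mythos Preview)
Part~\eqref{it:noisyEFequiv} of your argument is identical to the paper's: both write $E$ in terms of its $X$- and $Z$-components on $\cA_1$ and $\cA_2$, track the shift $z \mapsto z + e_X^2$ in the measurement outcome, and arrive at the same formula~\eqref{eq:FintermsofE} for $F$.

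For part~\eqref{it:Fstrength}, your approach differs genuinely from the paper's. The paper decomposes $Ae^{(2)}$ as a sum $g^{(1)} + \cdots + g^{(w)}$, where each $g^{(k)}$ is obtained from $e^{(2)}$ by a permutation followed by truncation (the column-weight-$1$ hypothesis guarantees the relevant index sets are disjoint), so that each $X(g^{(k)}) \sim \cN(p)$ by Lemma~\ref{lem:propertieslocalstochasticnoise}\eqref{it:subset},\eqref{it:permutation}; it then iterates the product rule Lemma~\ref{lem:propertieslocalstochasticnoise}\eqref{it:productdependent} along a binary tree of depth $\lceil \log_2 w \rceil$ to bound $X(Ae^{(2)})$, and twice more to fold in $Z(Be^{(2)})$ and $X(e^{(1)})Z(f^{(1)})$. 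Your route is more direct: a single combinatorial union bound over choice functions $\phi\colon I \to \cA_2$, with the row-weight hypothesis controlling the number of $\phi$'s and the column-weight hypothesis forcing $|\phi(I)| \geq |I|/2$, yields $F_2 \sim \cN(2w\sqrt{p})$ in one stroke; then a single application of the product rule gives the bound on $F$. Your intermediate estimate $2\sqrt{2w}\,p^{1/4}$ is in fact sharper than the stated $4p^{1/(4w)}$ for small $p$, and your case split correctly recovers the latter. Both arguments are valid; yours trades the iterated product-rule bookkeeping for a bespoke combinatorial estimate, and as a by-product makes the $p^{1/4}$ scaling explicit.
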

\begin{proof}
    We construct the Pauli operator~$F$ as follows.
    Let us write
    \begin{align}
        E = X(e) Z(f) \ ,
    \end{align}
    where
    \begin{align}
        e = (e^{(1)}, e^{(2)}) , \ f = (f^{(1)}, f^{(2)}) \in \mathbb{F}_2^{n} \qquad \text{with} \qquad \ e^{(j)},  \ f^{(j)} \in \mathbb{F}_2^{n_j} \quad \text{for} \quad j \in \{1,2\} \ .
    \end{align}
    Define a Pauli operator~$F \in \cP_n$ on~$\cA$ as
    \begin{align}\label{eq:FintermsofE}
    F := X(e^{(1)} + Ae^{(2)}) Z(f^{(1)} + Be^{(2)}) \ .
    \end{align}
    
    We show that the Pauli~$F$ satisfies the property~\eqref{it:noisyEFequiv}. Observe that
    \begin{align}
        \left( X(Az) Z(Bz) \otimes \bra{z}_{\cA_2} \right) E 
        &= \left(F  X(A (z+e^{(2)})) Z(B(z+e^{(2)})) \right) \otimes \bra{z+e^{(2)}}_{\cA_2}  \ , \label{eq:localstochasticEtoF}
    \end{align}
    where a phase factor from the anti-commutativity of Pauli-$X$ and Pauli-$Z$ is ignored. 
    Let~$\cQ_j(\rho)$ denote the output state of the circuit~$\cQ_j$ with an input state~$\rho$ on $\cA$ for $j \in \{1,2\}$.
    It follows from Eq.~\eqref{eq:localstochasticEtoF} that for any state~$\rho$ on the system $\cA$, we have 
    \begin{align}
        \cQ_1(\rho) &= \sum_{z \in \mathbb{F}_2^m} \left( X(Az) Z(Bz) \otimes \bra{z} \right) E \rho E^* \left( (X(A z)Z(Bz))^* \otimes \ket{z} \right) \\ 
        &= \sum_{z \in \mathbb{F}_2^m} \left(F X(A(z+e^{(2)}))Z(z+e^{(2)}) \otimes \bra{z+e^{(2)}} \right) \rho \left(h.\ c. \right) \\
        &= \cQ_2(\rho) \ ,
    \end{align}
    and thus~$\cQ_1$ and~$\cQ_2$ are equivalent.

    Next, we show that the Pauli~$F$ satisfies the property~\eqref{it:Fstrength}. 
    We first claim that
    \begin{align}
        X(Ae^{(2)}) \sim \cN(4 \cdot p^{1/w}) \qquad \text{and} \qquad Z(Be^{(e)}) \sim \cN(4 \cdot p^{1/w}) \ . \label{eq:XAelocalstochastic}
    \end{align}
    Here we only give a proof of the claim~\eqref{eq:XAelocalstochastic} for $X(Ae^{(2)})$ since the proof for $Z(Be^{(2)})$ mirrors this.
    Let~$I = \{i_1, \dots, i_\ell\} \subseteq [n_1]$ be the set of indices such that the $i$-th row vector of $A$ is non-zero, and let us denote by $j^{(i)}_1 < \dots < j^{(i)}_w$ for each $i \in I$ the indices in $[n_2]$ consisting of the support of the $i$-th row vector of $A$, i.e.,
    \begin{align}
        a_{i j^{(i)}_k} = 1 \qquad \text{for all} \qquad k \in [w], \ i \in I \ ,
    \end{align}
    and define
    \begin{align}
        g^{(k)} = \left(e^{(2)}_{j_k^{(i_1)}}, \dots, e^{(2)}_{j_k^{(i_\ell)}}\right) \in \mathbb{F}_2^{\ell} \qquad \text{for all} \qquad k \in [w] \ .
    \end{align}
    The assumption about the columns of the matrix~$A$ implies that for each $k$, the indices~$\{j_k^{(i)}\}_{i \in I} \subseteq [n_2]$ are pairwise disjoint. Therefore, there exists some permutation $\pi$ on $[n_2]$ such that $g^{(k)}$ can be obtained by truncating  the last $n_2 - \ell$ bits of the vector $\pi(e^{(2)}):=(e^{(2)}_{\pi(1)}, \dots, e^{(2)}_{\pi(n_2)})$. Therefore, by the properties~\eqref{it:subset} and~\eqref{it:permutation} of local stochastic noise in Lemma~\ref{lem:propertieslocalstochasticnoise}, we have that 
    \begin{align}
    X(g^{(k)}) \sim \cN(p) \qquad \text{for all} \qquad k \in [w] \ .
    \label{eq:Xeklocalstochastic}
    \end{align}
    By construction, we have $A e^{(2)} = g^{(1)} + \dots + g^{(w)}$, and thus
    \begin{align}
        X(Ae^{(2)}) = \prod_{k=1}^w X(g^{(k)}) \sim \cN\left(4 \cdot p^{2^{- \ceil{\log_2 w}}} \right) \ , \label{eq:XAelocalstochasticwithlog}
    \end{align}
    where we applied Lemma~\ref{lem:propertieslocalstochasticnoise}~\eqref{it:productdependent} $w-1$ times together with the fact~\eqref{eq:Xeklocalstochastic} to compute noise strength of the product of $\{X(g^{(k)})\}_{k=1}^w$ in the form of a complete binary tree. The claim~\eqref{eq:XAelocalstochastic} follows from Eq.~\eqref{eq:XAelocalstochasticwithlog} together with the inequality $4 \cdot p^{2^{- \ceil{\log_2 w}}} \leq 4 \cdot p^{1/w}$.

    Finally, we observe that $X(e^{(1)})Z(f^{(1)}) \sim \cN(p)$ by the property~\eqref{it:subset} of Lemma~\ref{lem:propertieslocalstochasticnoise}. Combining this with~\eqref{eq:XAelocalstochastic}, we have 
    \begin{align}
    F = X(Ae^{(2)}) Z(Be^{(2)}) X(e^{(1)}) Z(f^{(1)}) \sim \cN\left(4 \cdot p^{\frac{1}{4w}}\right) \ ,
    \end{align}
    where Lemma~\ref{lem:propertieslocalstochasticnoise}~\eqref{it:productdependent} is used twice and a phase factor is again ignored.
\end{proof}
An important special case of Lemma~\ref{lem:noisyadaptivecircuits} is the parallel execution of entanglement-swapping circuits.
\begin{lemma}[Entanglement swapping on noisy Bell states]\label{lem:noisyentanglementswapping}
    Consider the following entanglement swapping protocol which takes $k\ell$ Bell pairs
    \begin{align}
        \Psi_{\mathrm{in}} = \bigotimes_{i=1}^{k} \bigotimes_{j=1}^\ell \Phi_{Q_i^{(j)}{Q'_i}^{(j)}}
    \end{align}
    as an input and generates $\ell$ Bell pairs~$\Psi_{\mathrm{out}} = \bigotimes_{j=1}^{\ell} \Phi_{Q_1^{(j)} {Q'_k}^{(j)}}$:
    \begin{enumerate}
        \item Apply a Bell measurement to the qubits~${Q'_i}^{(j)}, Q_{i+1}^{(j)}$ for each $i \in [k-1]$ and $j \in [\ell]$, getting outcomes $(a_i^{(j)}, b_i^{(j)})$. \label{it:entanglementswapbellmeas}
        \item Apply $X^{a^{(j)}}Z^{b^{(j)}}$ to qubit~${Q'_k}^{(j)}$ for all $j \in [\ell]$ where $a^{(j)} = \bigoplus_{i=1}^{k-1} a_i^{(j)}$ and $b^{(j)} = \bigoplus_{i=1}^{k-1} b_i^{(j)}$.
    \end{enumerate}
    Let $E \sim \cN(p)$ be a~$2k\ell$-qubit Pauli noise which is local stochastic of strength~$p$.
    If the entanglement swapping protocol takes noisy $k\ell$ Bell pairs $E \Psi_{\mathrm{in}}$, then the output state is a noisy $\ell$-fold Bell state $F\Psi_{\mathrm{out}}$
    for some $2\ell$-qubit local stochastic noise $F \sim \cN(p')$, where $p' = 4 (\sqrt{2p})^{\frac{1}{4(k-1)}}$.
\end{lemma}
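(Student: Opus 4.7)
The plan is to cast the entanglement swapping protocol in the form required by Lemma~\ref{lem:noisyadaptivecircuits} and then invoke that lemma, using Lemma~\ref{lem:propertieslocalstochasticnoise} to handle the Clifford pre-processing of the Bell measurements. To this end, decompose each Bell measurement on $({Q'_i}^{(j)},Q_{i+1}^{(j)})$ as a CNOT (from ${Q'_i}^{(j)}$ to $Q_{i+1}^{(j)}$) followed by a Hadamard on ${Q'_i}^{(j)}$ and then two single-qubit computational basis measurements. Collecting all these Cliffords into a single depth-$2$ unitary $U$ supported on $\cA_2:=\{{Q'_i}^{(j)},Q_{i+1}^{(j)}\}_{i\in[k-1],\,j\in[\ell]}$, the whole protocol can be written as $\cQ_{\mathrm{ES}}=\cC\circ U$, where $\cC$ first performs computational basis measurements on $\cA_2$ (producing outcomes $z=(a_i^{(j)},b_i^{(j)})$) and then applies the correction $C(z)=X(Az)Z(Bz)$ on $\cA_1:=\{Q_1^{(j)},{Q'_k}^{(j)}\}_{j\in[\ell]}$.

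The first step is to commute the input noise through $U$. Using Lemma~\ref{lem:propertieslocalstochasticnoise}\eqref{it:clifford}, the noisy implementation satisfies
\begin{align}
\cQ_{\mathrm{ES}}\circ\cU_E \;=\; \cC\circ \cU_{UEU^\dagger}\circ U,
\end{align}
where the new error $UEU^\dagger$ is local stochastic of strength $\sqrt{2p}$. The second step is to verify that $\cC$ has the structure required by Lemma~\ref{lem:noisyadaptivecircuits}\eqref{it:Fstrength}. Since only qubits ${Q'_k}^{(j)}$ receive a non-trivial correction, the rows of $A$ (and of $B$) indexed by $Q_1^{(j)}$ vanish, while the row indexed by ${Q'_k}^{(j)}$ selects exactly the $k-1$ bits $\{a_i^{(j)}\}_{i\in[k-1]}$ (respectively $\{b_i^{(j)}\}$) along chain~$j$; so each non-zero row has weight $w=k-1$. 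Furthermore, every measurement bit $a_i^{(j)}$ (resp.\ $b_i^{(j)}$) contributes to the correction of chain~$j$ only, so every column of $A$ and $B$ has weight at most $1$. The dimensional condition $n_1 w = 2\ell(k-1)\le 2(k-1)\ell=n_2$ is satisfied with equality.

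The third step is to apply Lemma~\ref{lem:noisyadaptivecircuits} to $\cC$ with input noise of strength $\sqrt{2p}$ and weight parameter $w=k-1$. This yields a Pauli $F$ on $\cA_1$ with
\begin{align}
F \;\sim\; \cN\!\left(4\cdot(\sqrt{2p})^{1/(4(k-1))}\right)
\end{align}
such that $\cC\circ\cU_{UEU^\dagger}=\cU_F\circ\cC$, and hence $\cQ_{\mathrm{ES}}\circ\cU_E=\cU_F\circ\cQ_{\mathrm{ES}}$. Finally, since in the absence of noise the ideal entanglement swapping protocol maps $\Psi_{\mathrm{in}}$ to $\Psi_{\mathrm{out}}$ on the output qubits $\cA_1$, applying both sides to $\Psi_{\mathrm{in}}$ gives the claimed output $F\Psi_{\mathrm{out}}$ with the stated noise strength $p'=4(\sqrt{2p})^{1/(4(k-1))}$.

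The only non-mechanical step is the verification of the row/column weight conditions on the matrices $A,B$, which is where the factor $w=k-1$ (and hence the exponent $1/(4(k-1))$ in $p'$) is determined; everything else is bookkeeping involving Lemmas~\ref{lem:propertieslocalstochasticnoise} and~\ref{lem:noisyadaptivecircuits}.
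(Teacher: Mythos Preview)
Your proof is correct and follows essentially the same route as the paper: decompose the Bell measurements into a tensor product~$U$ of two-qubit Cliffords followed by computational-basis measurements, commute the noise through~$U$ via Lemma~\ref{lem:propertieslocalstochasticnoise}\eqref{it:clifford} to obtain strength~$\sqrt{2p}$, verify the row/column weight conditions on~$A,B$ with $w=k-1$, and apply Lemma~\ref{lem:noisyadaptivecircuits}\eqref{it:Fstrength}. Your presentation is in fact slightly more explicit than the paper's (you spell out the CNOT--Hadamard decomposition and check the dimensional condition $n_1 w \le n_2$); the only cosmetic point is that calling $U$ ``depth-$2$'' is potentially misleading, since what matters for invoking Lemma~\ref{lem:propertieslocalstochasticnoise}\eqref{it:clifford} once is that $(H\otimes I)\cdot\mathrm{CNOT}$ is itself a single two-qubit Clifford, so $U$ is a tensor product of two-qubit Cliffords.
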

\begin{proof}
    Note that Step~\ref{it:entanglementswapbellmeas} can be decomposed into one layer~$U$ consisting of two-qubit Clifford unitaries followed by single-qubit measurements in computational basis.
    By commuting~$E$ past the layer~$U$, we see that 
    the protocol with state~$E \Psi_{\mathrm{in}}$ is equivalent to the following protocol with the input state~$(U E U^\dagger)U\bigotimes_{i=1}^{k} \bigotimes_{j=1}^\ell \Phi_{Q_i^{(j)}{Q'_i}^{(j)}}$:
    \begin{enumerate}
        \item Apply computational basis measurement to the qubits~${Q'_i}^{(j)}, Q_{i+1}^{(j)}$ for each $i \in [k-1]$ and $j \in [\ell]$, getting outcomes $(a_i^{(j)}, b_i^{(j)})$.
        \item Apply $X^{a^{(j)}}Z^{b^{(j)}}$ to qubit~${Q'_k}^{(j)}$ for all $j \in [\ell]$ where $a^{(j)} = \bigoplus_{i=1}^{k-1} a_i^{(j)}$ and $b^{(j)} = \bigoplus_{i=1}^{k-1} b_i^{(j)}$.
    \end{enumerate}
    Note that this has the form of an adaptive circuit as discussed in Lemma~\ref{lem:noisyadaptivecircuits}.
    By Lemma~\ref{lem:noisyadaptivecircuits}, the resulting state is a noisy~$\ell$-fold Bell state affected by a Pauli noise~$F$ which is determined by~$UEU^\dagger$.
    By Lemma~\ref{lem:propertieslocalstochasticnoise}~\eqref{it:clifford}, we have
    \begin{align}\label{eq:equivalentinputerrorstrength}
        UEU^\dagger \sim \cN\left(\sqrt{2p}\right) \ .
    \end{align}
    
    Let us label the qubits by the sets~$\cA_1$ and~$\cA_2$ defined as
    \begin{align}
        \cA_1 &= \bigcup_{j=1}^\ell \{Q_1^{(j)}, {Q'_k}^{(j)}\} = [2\ell] \\
        \cA_2 &= \bigcup_{i=1}^{k-1} \bigcup_{j=1}^\ell \{{Q'_i}^{(j)}, Q_{i+1}^{(j)}\} = \{2\ell+1, 2\ell+2, \dots,   2k\}  \ ,
    \end{align}
    and construct a string~$z = (z_1, \dots, z_{(2k-2) \ell}) \in \mathbb{F}_2^{(2k-2) \ell}$ by collecting all bits
    \begin{align}
       \bigcup_{i=1}^{k-1} \bigcup_{j=1}^{\ell}  \{(a_i^{(j)}, b_i^{(j)})\} \ , 
    \end{align}
    where the ordering is consistent with the labeling~$\cA_2$.
    Then, it is clear by inspection that
    there are matrices~$A, B \in \mathbb{F}_2^{2\ell \times (2k-2) \ell}$ satisfying 
    \begin{align}
        \bigotimes_{j=1}^{\ell} X^{a^{(j)}}_{{Q'_k}^{(j)}} Z^{b^{(j)}}_{{Q'_k}^{(j)}} = X(Az)Z(Bz) \ .
    \end{align}
    Furthermore, all non-zero row vectors of $A$ and $B$ have weight~$k-1$ and all column vectors of $A$ and $B$ have weight at most 1. As a result, we can apply Lemma~\ref{lem:noisyadaptivecircuits}~\eqref{it:Fstrength}.
    We obtain $p'$ by inserting~$\sqrt{2p}$, which was obtained by Eq.~\eqref{eq:equivalentinputerrorstrength}, and $k-1$ respectively into $p$ and $w$ of Eq.~\eqref{eq:Fstrength}.
\end{proof}
Another consequence of Lemma~\ref{lem:noisyadaptivecircuits} is the following statement, which characterizes the effect of local stochastic noise on  teleportation circuit applied in parallel.

\begin{lemma}[Noisy implementation of the quantum teleportation protocol] \label{lem:noisyquantumteleportation}
    Consider the following quantum teleportation protocol which takes a $3\ell$-qubit state
    \begin{align}
        \Psi'_{\mathrm{in}} = \Psi_{Q_1 \dots Q_\ell} \bigotimes_{j=1}^{\ell} \Phi_{R_j R'_j}
    \end{align}
    as an input and outputs an~$\ell$-qubit state~$\Psi'_{\mathrm{out}}=\Psi_{R'_1 \dots R'_\ell}$:
    \begin{enumerate}
        \item Apply a Bell measurement to the qubits $Q_j, R_j$ for each $j \in [\ell]$, getting outcomes~$(a_j, b_j)$.
        \item Apply $X^{a_j} Z^{b_j}$ to the qubit $R'_j$ for all~$j \in [\ell]$.
    \end{enumerate}
    Let $E \sim \cN(p)$ be a $3\ell$-qubit Pauli noise which is local stochastic noise of strength $p$. If the quantum teleportation protocol takes a noisy state~$E \Psi'_{\mathrm{in}}$ as an input, then the output state is a noisy state~$F \Psi'_{\mathrm{out}}$ for some $\ell$-qubit local stochastic noise $F \sim \cN(p')$, where $p' = 2^{\frac{17}{8}} p^{\frac{1}{8}}$.
\end{lemma}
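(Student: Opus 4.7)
The proof plan is to mirror the strategy used for Lemma~\ref{lem:noisyentanglementswapping}: rewrite the teleportation protocol as a Clifford layer followed by computational-basis measurements and a linearly-controlled Pauli correction, commute the input noise through the Clifford layer, and then apply Lemma~\ref{lem:noisyadaptivecircuits} to characterize the effective output noise.

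First, I would decompose each Bell measurement on $Q_j R_j$ as a depth-$2$ Clifford unitary $U_j = (H_{Q_j}\otimes I_{R_j})\mathsf{CNOT}_{Q_j\to R_j}$ followed by two single-qubit computational-basis measurements producing $(a_j,b_j)$. Let $U=\bigotimes_{j=1}^\ell U_j$. A noisy implementation with input error $E\sim \cN(p)$ on the $3\ell$ qubits is then equivalent to the protocol obtained by (i) applying $U$ to the ideal input $\Psi'_{\mathrm{in}}$, (ii) inserting the commuted error $UEU^\dagger$, (iii) measuring $\cA_2=\{Q_j,R_j\}_{j=1}^\ell$ in the computational basis, and (iv) applying the Pauli correction $\bigotimes_{j=1}^\ell X_{R_j'}^{a_j}Z_{R_j'}^{b_j}$ on $\cA_1=\{R_j'\}_{j=1}^\ell$. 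By Lemma~\ref{lem:propertieslocalstochasticnoise}~\eqref{it:clifford}, we have $UEU^\dagger\sim\cN(\sqrt{2p})$.

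Next, the remaining adaptive circuit fits exactly the form of Lemma~\ref{lem:noisyadaptivecircuits} with $n_1=\ell$ and $n_2=2\ell$. Writing $z\in\mathbb{F}_2^{2\ell}$ for the collection of outcomes $\{(a_j,b_j)\}_{j=1}^\ell$ under a consistent ordering, the correction takes the form $X(Az)Z(Bz)$ with matrices $A,B\in\mathbb{F}_2^{\ell\times 2\ell}$ that are essentially single-bit ``pointers'': each non-zero row of $A$ and of $B$ has weight $w=1$ (since the correction on $R_j'$ depends on only one bit from $z$), and each column has weight at most $1$ (each measurement bit controls only the correction on its partner qubit). The condition $n_1\cdot w=\ell\leq 2\ell=n_2$ is trivially met.

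Applying Lemma~\ref{lem:noisyadaptivecircuits}~\eqref{it:Fstrength} with input strength $\sqrt{2p}$ and $w=1$ then yields an effective noise $F\sim\cN(p')$ with
\begin{equation*}
p'=4\cdot(\sqrt{2p})^{\frac{1}{4}}=4\cdot 2^{1/8}\cdot p^{1/8}=2^{17/8}\,p^{1/8},
\end{equation*}
as claimed. I do not anticipate a real obstacle: the main bookkeeping items are verifying the Clifford decomposition of the parallel Bell measurement and confirming the row/column weight structure of $A,B$, both of which are immediate by inspection. The only arithmetic to watch is the chain $4\cdot(\sqrt{2p})^{1/4}=2^{2+1/8}p^{1/8}=2^{17/8}p^{1/8}$.
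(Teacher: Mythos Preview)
Your proposal is correct and follows exactly the approach the paper sketches: rewrite the parallel Bell measurements as a single layer of two-qubit Cliffords followed by computational-basis measurements, commute the noise through via Lemma~\ref{lem:propertieslocalstochasticnoise}~\eqref{it:clifford} to get strength~$\sqrt{2p}$, and then apply Lemma~\ref{lem:noisyadaptivecircuits}~\eqref{it:Fstrength} with $w=1$. The paper's own proof is a two-line sketch stating precisely these parameter choices, so your write-up is simply a fleshed-out version of the same argument.
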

\begin{proof}
    The proof mirrors that of Lemma~\ref{lem:noisyentanglementswapping}, so we only give a sketch. 
    We choose~$\sqrt{2p}$ and $1$ respectively for the parameters $p$ and $w$ of Eq.~\eqref{eq:Fstrength} to obtain $p'$.
\end{proof}

\end{document}